\documentclass[acmsmall,nonacm]{acmart}

\usepackage[T1]{fontenc}
\usepackage{multicol}
\usepackage{comment}
\usepackage[none]{hyphenat}

\usepackage{graphicx} \usepackage{algorithm}
\usepackage[noend]{algpseudocode}
\usepackage{amsthm,amsmath, mathtools, enumitem, thm-restate, bbm}
\usepackage{color, soul}
\usepackage[dvipsnames]{xcolor}
\usepackage{csquotes}
\definecolor{cobaltblue}{RGB}{0,60,170}
\usepackage{cleveref}
\usepackage{physics}
\usepackage{listings}
\definecolor{codegreen}{rgb}{0,0.6,0}
\definecolor{codegray}{rgb}{0.5,0.5,0.5}
\definecolor{codepurple}{rgb}{0.58,0,0.82}
\definecolor{backcolour}{rgb}{0.95,0.95,0.92}

\lstdefinestyle{mystyle}{
    backgroundcolor=\color{backcolour},   
    commentstyle=\color{codegreen},
    keywordstyle=\color{magenta},
    numberstyle=\tiny\color{codegray},
    stringstyle=\color{codepurple},
    basicstyle=\ttfamily\footnotesize,
    breakatwhitespace=false,         
    breaklines=true,                 
    captionpos=b,                    
    keepspaces=true,                 
    numbers=left,                    
    numbersep=5pt,                  
    showspaces=false,                
    showstringspaces=false,
    showtabs=false,                  
    tabsize=2
}
\crefname{equation}{Eq.}{Eqs.} 
\crefname{section}{Section}{Sections} 
\crefname{figure}{Figure}{Figures} 
\crefname{table}{Table}{Tables} 
\crefname{algorithm}{Alg.}{Algs.}
\crefname{appendix}{Appendix}{Appendices} 
\crefname{theorem}{Theorem}{Theorems}
\crefname{conjecture}{Conjecture}{Conjectures}
\crefname{proposition}{Prop.}{Props.}
\crefname{lemma}{Lemma}{Lemmas}
\crefname{corollary}{Corollary}{Corollaries}

\newcommand{\bool}{\set{0,1}}

\newcommand{\Dc}{\mathcal{D}}
\newcommand{\Ec}{\mathcal{E}}
\newcommand{\Fc}{\mathcal{F}}

\newcommand{\Kc}{\mathcal{K}}
\newcommand{\Sc}{\mathcal{S}}
\newcommand{\Qc}{\mathcal{Q}}

\newcommand{\prob}[2][]{\operatorname*{\mathbb{P}}_{#1 }\brac*{#2}}
\newcommand{\Expect}[2][]{\operatorname*{\mathbb{E}}_{#1 }\brac*{#2}}
\newcommand{\Var}[2][]{\operatorname*{\normalfont{\text{Var}}}_{#1 }\paren*{#2}}
\newcommand{\Cov}[2][]{\operatorname*{\normalfont{\text{Cov}}}_{#1 }\paren*{#2}}

\DeclarePairedDelimiter\ceil{\lceil}{\rceil}
\DeclarePairedDelimiter\brac{\lbrack}{\rbrack}
\DeclarePairedDelimiter\set{\lbrace}{\rbrace}
\DeclarePairedDelimiter\paren{\lparen}{\rparen}

\newtheorem{theorem}{Theorem}

\newtheorem{lemma}[theorem]{Lemma}
\newtheorem{corollary}[theorem]{Corollary}
\newtheorem{conjecture}[theorem]{Conjecture}

\theoremstyle{definition}

\theoremstyle{remark}

\newcommand{\nth}{^\text{th}}

\newcommand{\degb}[1]{d_{#1}^{\rightarrow}}
\newcommand{\tgk}{t^{>k}}
\newcommand{\tlk}{t^{<k}}
\newcommand{\Tgk}{T^{>k}}
\newcommand{\Tlk}{T^{<k}}
\newcommand{\Rgk}{R^{>k}}
\newcommand{\Rlk}{R^{<k}}

\newcommand{\create}{\mathtt{create}}
\newcommand{\Insert}{\mathtt{insert}}
\newcommand{\queryedge}{\mathtt{query\_edges}}
\newcommand{\QS}{\mathcal{Q}_\mathcal{S}}

\newboolean{showcomments}
\setboolean{showcomments}{true}
\newcommand{\HL}[1]{\ifthenelse{\boolean{showcomments}}{{\color{SeaGreen}{HL: #1}}}{}}
\newcommand{\SK}[1]{\ifthenelse{\boolean{showcomments}}{{\color{blue}{SK: #1}}}{}}

\newcommand{\ER}{Erd\H{o}s--R\'enyi\ }
\newcommand{\Delone}{\Delta_{E_1}}

\title{A Perfectly Distributable Quantum-Classical Algorithm for Estimating Triangular Balance in a Signed Edge Stream}
\date{February 16, 2026}

\makeatletter
\let\@authorsaddresses\@empty
\makeatother

\author{Steven Kordonowy}
\affiliation{\institution{Fujitsu Research of America Inc., Santa Clara}
            \country{USA} }
\affiliation{\institution{University of California at Santa Cruz}
            \country{USA} }
\author{Bibhas Adhikari}
\author{Hannes Leipold}
\affiliation{\institution{Fujitsu Research of America Inc., Santa Clara}
             \country{USA}}

\begin{document}
\raggedbottom 

\begin{abstract}

We develop a perfectly distributable quantum-classical streaming algorithm that processes signed edges to efficiently estimate the counts of triangles of diverse signed configurations in the single pass edge stream. 
Our approach introduces a quantum sketch register for processing the signed edge stream, together with measurement operators for query-pair calls in the quantum estimator, while a complementary classical estimator accounts for triangles not captured by the quantum procedure. 
This hybrid design yields a polynomial space advantage over purely classical approaches, extending known results from unsigned edge stream data to the signed setting.
We quantify the lack of balance on random signed graph instances, showcasing how the classical and hybrid algorithms estimate balance in practice. \end{abstract}

\maketitle

\noindent{\bf Keywords: } Quantum-Classical Algorithms, Quantum Streaming, Signed Networks, Structural Balance, Perfectly Distributable Quantum Algorithms

\section{Introduction}\label{sec:intro}

Computing statistical quantities for summarizing large-scale data, particularly large networks, under memory-constrained resources has become increasingly important in the era of big data. Quantum computing systems could mature to play a pivotal role in this evolving landscape~\cite{kallaugher_how_2024}. For example, in the single pass streaming setting, quantum devices have shown to provide space advantages in a host of settings, including one-way Boolean functions~\cite{gavinsky_exponential_2006}, triangle counting~\cite{kallaugher_quantum_2022-1}, and directed maxcut~\cite{quantum_maxdicut_2023}. Despite the abundance of distributed computing frameworks for classical streaming algorithms, the literature lacks quantum distributed frameworks for quantum streaming algorithms. In this paper, we attempt to fill this gap by developing a quantum-classical streaming algorithm for counting triangles in a signed edge stream implemented in a distributed computing setup.

Triangle finding and counting in the edge streaming graph model is a fundamental problem due to its importance in network analysis and algorithms. In this model, edges of a graph arrive sequentially as a stream, and algorithms are required to detect or estimate the number of triangles using sublinear space, often in a single pass. Early work established both lower bounds and randomized algorithms for triangle counting in insertion-only streams, highlighting the inherent space accuracy trade-offs~\cite{bar2002counting}. Subsequent approaches introduced sampling-based and sketching techniques to approximate triangle counts in massive graphs~\cite{buriol2006counting, jha2015streaming}. The algorithm described in Ref.~\cite{jayaram2021optimal}, henceforth the Jayaram-Kallaugher (JK) algorithm, is provably optimal in space (up to polylogarithmic factors), settling the problem on classical triangle counting in the edge stream setup by matching known lower bounds under standard parameterizations of the problem. 
In Ref.~\cite{kallaugher_quantum_2022-1}, Kallaugher introduced a quantum-classical algorithm for approximating triangle counts that results in a polynomial space advantage over the classically-optimal JK algorithm.

In this paper, we consider the single-pass insertion-only signed graph stream model in which each signed edge appears only once. Signed graphs have two types of edges: positive and negative edges (see \cref{Sec:2} for further details). Applications of signed graphs and balance measures are widespread. 
In social network analysis, they model trust/distrust relations and help identify polarized communities~\cite{leskovec2010signed,facchetti2011computing}. 
In international relations, signed networks capture alliances and conflicts among countries, with balance dynamics linked to stability~\cite{antal2005dynamics,marvel2009energy}. 
In finance, signed correlation graphs describe co-movements and anti-movements of asset returns, with balance analysis providing insights into systemic risk and market stability~\cite{boginski2005statistical,kenett2010dominating,bargigli2015multiplex,bartesaghi2025global,adhikari2025signed, adhikari2026signed}. 
In biology, signed networks model cooperative and competitive interactions in gene regulatory and protein networks~\cite{estrada2010balance,saade2014spectral}. 
In physics and complex systems, balance theory underpins models of consensus and cooperation under antagonistic interactions~\cite{altafini2013consensus,traag2013dynamical}.  

\begin{figure}[!t]
\includegraphics[width=1.0\textwidth]{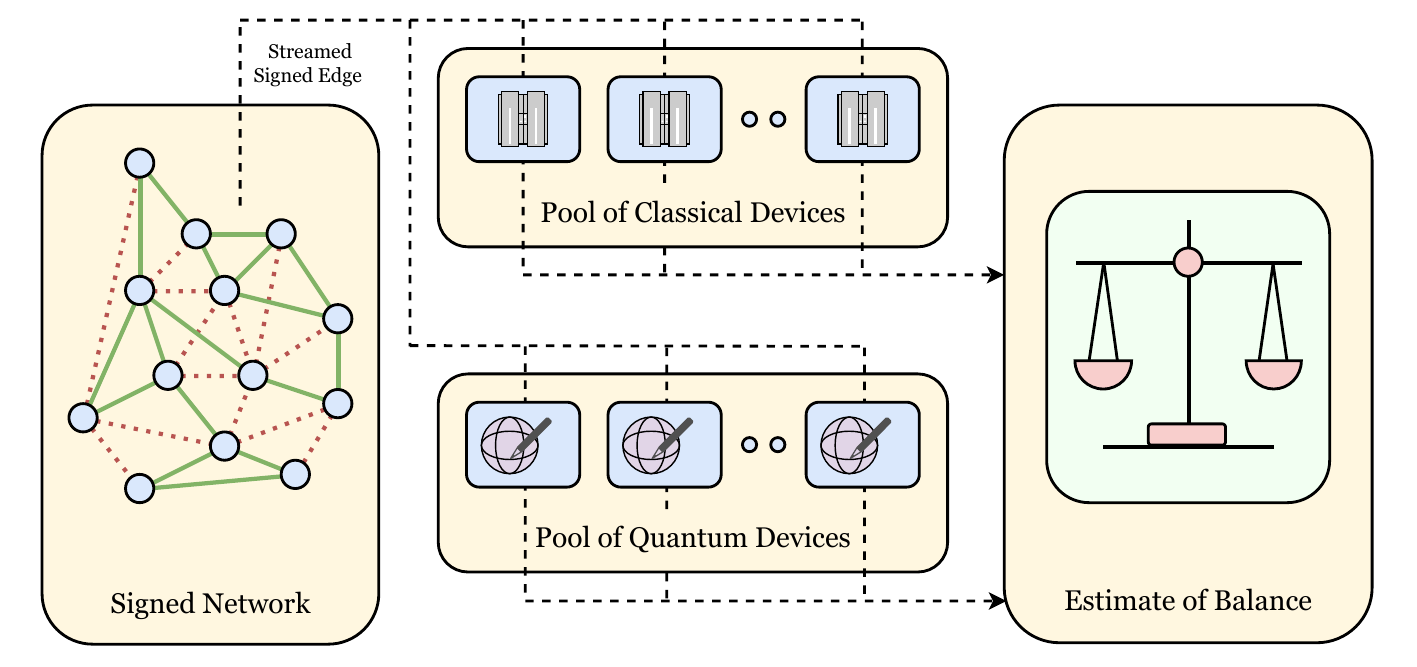}
\caption{\textbf{Quantum-Classical Estimation of Balance in the Streaming Model.} An overview of our hybrid quantum-classical algorithm for estimating balance. 
}
\label{fig:flow_signed}
\end{figure}

In the classical setting, we extend the classically-optimal streaming JK algorithm for triangle counting in edge streams to the setting of signed edge streams and we call it \textit{signed} JK \textit{algorithm}. In this model, each edge in the stream is associated with a sign, and consequently, triangles can be classified into four distinct types (see \cref{fig:sign_triangle}). This introduces additional challenges, as the algorithm must track the product of edge signs in order to correctly identify the type of each triangle during query processing. The resulting estimates of the different triangle types are then used to approximate the degree of balance of the underlying signed graph, as defined by the triangular measure of balance.

Further, we extend the area of space advantage from quantum sketchpads for important and realistic data streaming setting by developing an algorithm for counting triangles for \textit{signed networks}, focusing specifically on the statistical characterization of \textit{balance}.  We then present a hybrid quantum-classical algorithm that gives a quantum polynomial advantage over the signed JK algorithm. 
The proposed algorithm relies on two complementary estimators: a quantum estimator and a classical estimator. The quantum estimator employs a collection of quantum processors, each performing analogous tasks involving sampling and the execution of quantum queries, while the classical estimator consists of a collection of classical processing units carrying out a similar sampling-based procedure. These two estimators play complementary roles within the hybrid framework. In particular, triangles of a given type that may be missed by the quantum estimator due to the probabilistic nature of quantum measurements are detected by the classical estimator with nonzero probability.
The overall hybrid approach is inspired by the classical–quantum algorithm for estimating triangle counts in unsigned edge streams proposed in Ref.~\cite{kallaugher_quantum_2022-1}. The primary technical challenge addressed in this work lies in the construction of a quantum sketch register capable of encoding signed edges observed in the signed edge stream, together with the design of appropriate POVM operators for performing quantum queries on this register. A circuit-level description of the proposed procedure is provided in \cref{fig:q_sketch_circ}. We then use this approximation of triangle types to estimate the balance of networks. 

Finally, we introduce a distributed computing framework that integrates classical and quantum processors to implement the proposed hybrid quantum-classical algorithm for estimating different types of signed triangles which is leveraged to approximate the structural balance of the underlying signed graph. To evaluate performance, we generate signed edge streams from random signed graphs and compare the proposed hybrid approach with its purely classical counterpart using state-vector simulations. To the best of our knowledge, this work represents the first instance in the literature of a distributed quantum-classical computing model designed specifically for implementing streaming algorithms on graph streams. The framework developed in the paper is presented in Figure \ref{fig:flow_signed}.

We emphasize that the proposed distributed framework is fundamentally different from standard distributed  frameworks. Classical approaches rely on interconnected machines and must address challenges such as \textit{parallelism}, \textit{load balancing}, and \textit{network efficiency}~\cite{meng2024survey}. Standard distributed quantum computing typically focuses on quantum networking challenges, including implementing quantum operations such as quantum gates across multiple quantum processing units (QPUs). A \emph{perfectly distributable} quantum algorithm is one in which each processor contributes independently without inter-processor communication~\cite{caleffi2024distributed}. We establish that both the hybrid algorithm of Ref.~\cite{kallaugher_quantum_2022-1} and the algorithm proposed here fall into this category: a pool of QPUs and classical processing units (CPUs) operates independently, and the final estimate is obtained via combined classical post-processing of outcomes (see Section~\ref{Sec:5}).

\section{Graph streams, triangles and balance of Signed Graphs in a Stream}\label{Sec:2}

In this section, we briefly review of existing distributed computing frameworks for graph streams and structural balance theory of signed graphs. We also include a discussion on the set of parameters used in sequel. 

\begin{figure}[!t]
\centering
\includegraphics[width=0.6\textwidth]{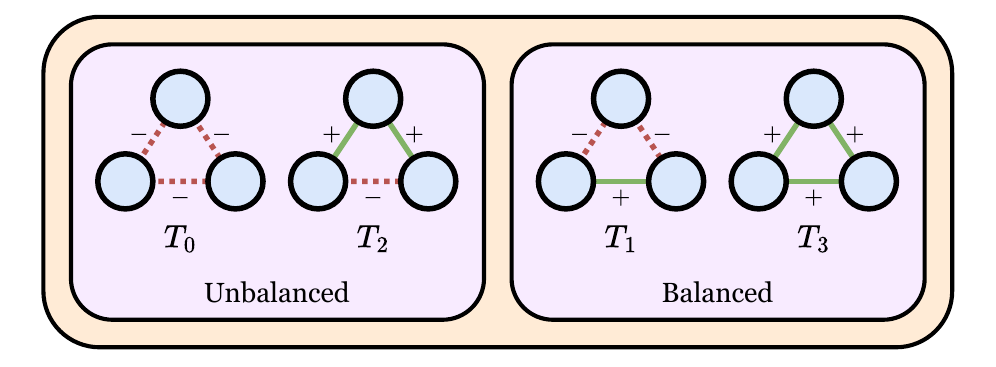}
\caption{\textbf{Triangle Types and their Balance in a Signed Network.} Triangle types labeled by their number of positive edges, with triangles with odd number of plus edges categorized as balanced while those with even positive edges are categorized as unbalanced. Under structural balance theory, real-life network dynamics may evolve towards more balanced local substructures. As such, counting triangle types is essential for analyzing networks.}
\label{fig:sign_triangle}
\end{figure}

\subsection{Graph Streams}

In classical streaming data analysis model, the \textit{graph streaming model} extends the classical data stream paradigm to graph-structured data, where vertices and edges arrive sequentially as a stream. In this setting, the input graph is too large to store explicitly in memory, and algorithms are required to process updates, such as edge insertions, deletions, or weight changes, using sublinear space and typically a single or a small number of passes over the stream. 
Depending on the update rules, graph streams are commonly classified into insertion-only, turnstile, and dynamic models. The graph streaming framework has become a central tool for analyzing massive networks arising in social media, communication systems, and biological data,  establishing space-accuracy tradeoffs for fundamental graph problems~\cite{ahn2012graph, mcgregor2014graph, kapralov2013streaming, cormode2010counting, woodruff2014sketching}.

Classical streaming algorithms, in particular, graph streaming algorithms are deployed naturally in a distributed computing framework for massive graph analytics, with theoretical guarantees on space, communication complexity, and approximation accuracy~\cite{muthukrishnan2005streams,ahn2012graph, mcgregor2014graph,meng2024survey}. Several open-source distributed graph processing frameworks such as the vertex-centric programming models: Google's Pregel~\cite{malewicz2010pregel}, Pregel+~\cite{yan2015effective}, GraphLab~\cite{low2012distributed}, PowerGraph~\cite{gonzalez2012powergraph}, Giraph~\cite{han2015giraph}, and GPS~\cite{salihoglu2013gps} have been developed considering the abstraction of parallel looping, message receiving and sending, and bradcasting. Besides, edge-centric and subgraph-centric distributed graph processing frameworks are proposed to boost performance that require edge lists and bypassing random memory accesses associated with vertices, such as X-Stream~\cite{roy2013x}, WolfGraph~\cite{zhu2020wolfgraph}, Giraph++~\cite{tian2013think}, GoFFish~\cite{simmhan2014goffish}, Blogel~\cite{yan2014blogel}.

\subsection{Signed Graphs and Signed Triangles}

A signed graph $G = (V,E,\sigma)$ is a graph where each edge $(i,j) \in E$ is assigned a sign $\sigma(i,j) \in \{+1,-1\}$, representing positive (friendly, cooperative, positively correlated) or negative (hostile, antagonistic, negatively correlated) relationships~\cite{harary1953notion,zaslavsky2012mathematical}. 
A signed network can be analyzed by counting the number of small substructures contained in it; the simplest example being the number of triangles. Since the network is signed, the type of triangles are characterized by the sign of the three edges in the triangle, as depicted in \cref{fig:sign_triangle}. There are four types of triangles in a signed graph, depending on the number of \textit{positive} edges in the triangle. Define the enumeration of each type of triangle as 
\begin{align} \label{eq:tri_types}
T_{j}, \textit{the number of triangles with $j$ positive edges in $G$}. 
\end{align}

\noindent Then the total number of triangles is the sum of each type:
\begin{align}\label{eq:tri_total}
T = T_{0} + T_{1} + T_{2} + T_{3}, \textit{the number of triangles in $G$}.
\end{align}

\subsection{Structural Balance Theory}

The classical notion of balance, introduced by Harary, states that a signed graph is balanced if its vertex set can be partitioned into two disjoint subsets such that all positive edges lie within the subsets and all negative edges lie between them. 
Equivalently, a signed graph is balanced if every cycle has a positive sign, i.e., the product of the edge signs along the cycle is $+1$.  

Several quantitative measures of balance have been proposed. 
One widely used measure is the \emph{balance index}, defined as 
\begin{equation}\label{eqn:defbal}
\mathcal{B}(G) = \frac{\#\{\text{balanced cycles in } G\}}{\#\{\text{total cycles in } G\}},    
\end{equation}
which captures the proportion of balanced structures in the network~\cite{cartwright1956structural}. 
Another is the \emph{frustration index}, which is the minimum number of edge sign changes (or deletions) required to make the graph balanced \cite{brusco2019relaxations}.  

Then, in structural balance theory, triangles of types $T_1$ and $T_3$ are balance, whereas triangles of type $T_0$ and $T_2$ are unbalance. Thus, a balance signed graph has triangles of types $T_0$ and $T_2$ only, and hence lack of balance of a signed graph can be estimated by counting all types of triangles in the graph. 
In particular, the \textit{triangular index of balance} of a signed graph $G$ defined as
\begin{equation}\label{eqn:defbalT}
\mathcal{B}_{\triangle}(G) = \frac{\#\{\text{balanced triangles in } G \}}{\#\{\text{total triangles in } G \}}.   
\end{equation} 

Since a triangle is a cycle with three nodes, $\mathcal{B}_{\triangle}$ is a special type of $\mathcal{B}$ (see \cref{eqn:defbal}).

\subsection{Parameter Approximation in the Stream}

A memory limited device does not have access to the entire input data when solving a problem, since the data might be too large to fit into memory at once, such as in very large social media networks, or due to the nature of the data is that it comes in periodically throughout a period of time, such as in trades in a financial system. 
In the single-pass streaming setting, we are given a stream $S = ((uv, \sigma_{uv})_i)_{i=1}^m$ composed of edges $uv \in E(G)$ and their sign $ \sigma_{uv} \in \{+1,-1\} $.
We wish to (approximately) count $T_{j}$ (from \cref{eq:tri_types}) in the network $G$. 
We work in the insertion-only model in which edges are processed one at a time and are never deleted. 
Given error parameters $\epsilon, \delta \in (0,1]$, the task is to find a $(1 \pm \epsilon)$-multiplicative approximation of $T$ with probability $1-\delta$ using as few space resources as possible. 

Any algorithmic approximation guarantee should be independent of the ordering of $S$. Indeed, to show worst-case guarantees, one may construct adversarial orderings of $S$. In order to construct hard graph instances, we define these standard graph parameters for a graph $G$:
\begin{multicols}{2}
\begin{itemize}[]
\item[] $n := |V|$, \textit{number of nodes in the graph}
\item[] $\Delta_V$, \textit{maximum number of triangles incident on any node}
\item[] $m := |E|$, \textit{number of edges in the graph}
\item[] $\Delta_E$,  \textit{maximum number of triangles incident on any edge}
\end{itemize}
\end{multicols}
\noindent The space complexity for state-of-the-art streaming algorithms for triangle counting are given with respect to these parameters. Note that $T, \Delta_E, \Delta_V$ are not known ahead of time, since that would defeat the purpose of the algorithm.

Ref.~\cite{kallaugher_how_2024} provides a framework for a hybrid quantum-classical algorithm to approximate problems that can lead to better algorithms than purely classical solutions.

\section{A Purely-Classical Streaming Algorithm for Signed Triangle Counting}\label{Sec:3}

Inspired by the optimal classical streaming JK algorithm proposed for (unsigned) edge stream in Ref.~\cite{jayaram2021optimal}, we develop a streaming algorithm for estimating number of triangles of type $T_1$ in a signed edge stream. This algorithm can be trivially adapted to count any (or all) $T_j$ types.

We first briefly recall the streaming JK algorithm. Prior to their work, several upper and lower bounds for triangle counting in insertion-only streams and for linear sketching algorithms had already been established; see Ref.~\cite{jayaram2021optimal} and the references therein. The key insight underlying their approach is the following: given three edges  $uv, vw, wu \in E$ arriving sequentially in a stream, if the first two edges $uv$ and $vw$ are sampled and stored, then upon observing the completing edge  $wu$, the presence of the triangle $uvw$ in the underlying graph $G$ can be detected.

However, naive random edge sampling alone generally leads to suboptimal performance. To overcome this limitation, Jayaram and Kallaugher introduce a carefully parameterized sampling strategy that combines both edge and vertex sampling. Specifically, edges are sampled with probability $p_E$ , while vertices are sampled independently with probability  $p_V$, and triangles are counted when they close sampled wedges. The sampling probabilities are chosen as

$$p_E = \frac{\Delta_V}{T}, \qquad 
p_V \ge \max\left\{ \frac{\Delta_V}{\Delta_E}, \frac{1}{\sqrt{\Delta_V}} \right\},$$
where $T$ denotes the total number of triangles, and  $\Delta_V$ and  $\Delta_E$ represent suitable concentration parameters as defined above. With this choice of parameters, the resulting algorithm achieves a space complexity of 
$$\mathcal{O}\!\left(\frac{m}{T}\left(\Delta_E + \sqrt{\Delta_V}\right)\right),$$
and yields an estimator with variance $\mathcal{O}(T^2)$ .

\begin{algorithm}[H]
        \caption{$T_1$ Estimator (Purely Classical)}\label{alg:T1_estimator_classical}
    \begin{algorithmic}[1]
        \Statex \textbf{Procedure} $\mathtt{T_1\_Estimator\_Subroutine\_FullyClassical(E; h_E,h_V;p_E,p_V)}$
        \Statex \textbf{Input}: $E$ signed edge stream;  $h_E,h_V$ probabilistic hashes; $p_E,p_V \in (0,1]$.
        \Statex \textbf{Output}: Estimate $R_1$.
        \vspace{1mm}
        \State $R_1 \gets 0, S \gets \{:\}$

        \For{$(vw,\sigma_{vw}) \in E$:}
            \For{$u \in V$:}
                \If{$h_V(u) = 1$ and $uv, uw \in S.keys()$} \Comment{Found a triangle, check it is a $T_1$}
                    \If{$\sigma_{vw} = +1$ and $S[uv], S[uw] = -1$}
                        \State $R_1 \gets R_1 + 1/\left( p_V p_E^2 \right)$
                    \ElsIf{$\sigma_{vw} = -1$ and $S[uv] \neq S[uw]$}
                        \State $R_1 \gets R_1 + 1/\left( p_V p_E^2 \right)$
                    \EndIf
                \EndIf
            \EndFor 
            \If{$h_E(vw) = 1$ and ($h_V(v) = 1$ or $h_V(w) = 1$)}
                \State $S[vw] \gets \sigma_{vw}$
            \EndIf
        \EndFor
        \State \textbf{return} $R_1$
    \end{algorithmic}
\end{algorithm}

Adapting a similar approach for signed edge stream, we consider same sampling probability for vertex and edges which are implemented using probabilistic hash functions denoted as $h_V$ and $h_E$ respectively. However, in signed edge stream we have for types of triangle $T_j,$ $j=0,1,2,3.$ To address the sign of the incoming edges in the stream, we store the edges in the sketch $S$ with sign. For an incoming edge $(vw,\sigma_{vw}),$ when a vertex $u$ is sampled and $uv, uw\in S$, we increase the count of the type of triangles $T_j$ based on $\sigma_{vw}$ the signs of $uv$ and $uw.$ The sketch is updated stochastically to include $vw$ depending on the probabilities $h_E(vw), h_V(v)$, and $h_V(w)$. We describe the algorithm for $j=1$ in Algorithm \ref{alg:T1_estimator_classical}. Note that our algorithm is an extension of the JK algorithm incorporating the signs of the edges in the stream and hence we call our algorithm signed JK algorithm.

Then we have the following corollary, which gives the bit complexity of space for the algorithm.

\begin{corollary}\label{cor:T1_median_of_means_purely_classical}
    For small error parameters $\epsilon, \delta > 0$, there is a median-of-means algorithm for insertion-only graph streams that approximates the number of $T_1$ triangles in a graph $G$ to $\epsilon T_1$ precision with probability $1-\delta$, using

    \begin{equation}
        \mathcal{O}\left( \frac{m}{T_1}\left(\Delta_E + \sqrt{\Delta_V}\right) \log n \frac{1}{\epsilon^2}\log \frac{1}{\delta} \right)
    \end{equation}
    
    \noindent bits of space. This algorithm uses \cref{alg:T1_estimator_classical} as a sub-routine.
\end{corollary}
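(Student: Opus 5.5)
The plan is to reduce the statement to the analysis of the JK estimator from Ref.~\cite{jayaram2021optimal}, applied to the sub-collection of $T_1$ triangles, and then apply a standard median-of-means amplification. The key observation is that \cref{alg:T1_estimator_classical} stores exactly the same edge set $S$ as the unsigned JK algorithm run with the same hashes $h_E,h_V$. The stored signs only change \emph{which} detected triangles are credited to $R_1$, not which triangles are detected. Fix an ordering of each triangle's edges by arrival and call the last edge the closing edge. A triangle $\{u,v,w\}$ closed by $vw$ contributes $1/(p_Vp_E^2)$ to $R_1$ exactly when three conditions hold: $h_V(u)=1$; $uv,uw$ were both stored, i.e.\ $h_E(uv)=h_E(uw)=1$, which is automatic given $h_V(u)=1$; and the signs give exactly one positive edge. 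The two branches of the sign test enumerate precisely the $T_1$ sign patterns: $(+,-,-)$ with the closing edge positive, and the closing edge negative with exactly one of $S[uv],S[uw]$ positive.

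First, I will establish unbiasedness. Each triangle has a unique closing edge and a unique opposite vertex $u$, so $\Expect{R_1}=\sum_{\tau\in T_1} p_Vp_E^2/(p_Vp_E^2)=T_1$. Second, I will bound the variance. Write $R_1=\frac{1}{p_Vp_E^2}\sum_{\tau} X_\tau$ with indicator variables $X_\tau$. The variance is a sum over pairs of $T_1$ triangles sharing a vertex (correlated through $h_V$) or an edge (correlated through $h_E$). This is termwise dominated by the corresponding sum in the unsigned JK analysis, restricted to a subset of triangles. The relevant concentration parameters for the subfamily are at most $\Delta_V,\Delta_E$ of $G$. Hence the JK computation, with $p_E=\Delta_V/T_1$ and $p_V\ge\max\{\Delta_V/\Delta_E,1/\sqrt{\Delta_V}\}$ (using $T_1$ in place of $T$), gives $\Var{R_1}=\mathcal{O}(T_1^2)$. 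Third, I will bound the space. The expected number of stored edges is at most $m p_E\cdot 2p_V$, and with the above choices this is $\mathcal{O}\!\left(\frac{m}{T_1}(\Delta_E+\sqrt{\Delta_V})\right)$, exactly as in JK. Each stored entry costs $\mathcal{O}(\log n)$ bits for the endpoints plus one bit for the sign. The hash functions need only limited independence and can be stored in polylogarithmic space, which is absorbed in the bound.

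Finally, I will amplify. Averaging $\mathcal{O}(1/\epsilon^2)$ independent copies reduces the variance to $\epsilon^2T_1^2/c$, so by Chebyshev each mean is within $\epsilon T_1$ with probability at least $3/4$. Taking the median of $\mathcal{O}(\log(1/\delta))$ such means gives failure probability at most $\delta$ by a Chernoff bound. Multiplying the per-copy space by these two factors yields the stated bound. Expected space is converted to worst-case space by aborting copies that exceed a constant times their expectation (Markov), at a constant-factor cost in failure probability.

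I expect the main obstacle to be the variance step. One must check carefully that restricting to $T_1$ triangles keeps the JK variance bound in terms of $T_1$ rather than $T$. The parameters $p_E,p_V$ are tuned to $T_1$, while $\Delta_V,\Delta_E$ count all triangles, so some covariance terms (pairs sharing an edge) involve $\Delta_E$ of $G$ rather than of the $T_1$ subgraph. My first approach is to bound these terms by $T_1\Delta_E/(p_V p_E)$ and use $p_V\ge \Delta_V/\Delta_E$, $p_E=\Delta_V/T_1$ to get $\mathcal{O}(T_1^2)$. The analogous vertex-shared terms are bounded by $T_1\Delta_V/p_E=T_1^2$. Unknown parameters are handled as in JK by the usual guessing arguments.
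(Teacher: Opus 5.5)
Your reduction is the one the paper intends: its proof simply defers to Theorem 1.1 of JK. The structural part of your argument is right. The stored set coincides with unsigned JK's. A triangle with closing edge $vw$ and apex $u$ is detected exactly on $\{h_V(u)=1,\ h_E(uv)=h_E(uw)=1\}$. The sign test accepts precisely the $T_1$ patterns, so $\mathbb{E}[R_1]=T_1$, and the variance is a sum over pairs of $T_1$ triangles that can be bounded using $T_1$-restricted concentration parameters.

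The gap is in the quantitative step. You import the parameters from the paper's recap of JK, $p_E=\Delta_V/T_1$ and $p_V\ge\max\{\Delta_V/\Delta_E,1/\sqrt{\Delta_V}\}$, and these are garbled. Every triangle on an edge lies on both of its endpoints, so $\Delta_E\le\Delta_V$, and the constraint forces $p_V=1$. With these values neither of your claims holds.
\begin{itemize}
\item The expected number of stored edges becomes $m\Delta_V/T_1$. This exceeds $\frac{m}{T_1}(\Delta_E+\sqrt{\Delta_V})$ by a factor of order $\sqrt{\Delta_V}$ whenever $\Delta_E\le\sqrt{\Delta_V}$. For example, in a friendship graph whose triangles are all of type $T_1$, you store every edge instead of $O(\sqrt{T_1})$ of them.
\item For $T_1$ vertex-disjoint triangles ($\Delta_V=\Delta_E=1$) you get $p_E=1/T_1$, so each triangle is detected with probability $1/T_1^2$ and $\mathrm{Var}(R_1)=\Theta(T_1^3)$, not $O(T_1^2)$.
\end{itemize}
Your covariance bookkeeping has matching errors. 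Triangles sharing the apex $u$ are correlated through $h_V(u)$, so that term is $T_1\Delta_V/p_V$, not $T_1\Delta_V/p_E$. You also drop the diagonal term $T_1/(p_Vp_E^2)$, which is exactly where $\sqrt{\Delta_V}$ enters.

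Here is the repair. Take $h_V$ pairwise and $h_E$ 4-wise independent, and split ordered pairs of $T_1$ triangles into four classes: identical; same apex with disjoint wedges; same apex with one shared wedge edge; different apexes sharing a wedge edge (which requires both endpoints to be sampled). This gives
\[
\mathrm{Var}(R_1)\;\lesssim\;\frac{T_1}{p_Vp_E^2}+\frac{T_1\Delta_V}{p_V}+\frac{T_1\Delta_E}{p_Vp_E}+\frac{T_1\Delta_E}{p_E},
\]
where $\Delta_V,\Delta_E$ are the $T_1$-restricted parameters, so in particular $\Delta_V\le T_1$.

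Now choose $p_V=\Delta_V/T_1$ and $p_E=\max\{\Delta_E/\Delta_V,1/\sqrt{\Delta_V}\}$. Both are valid probabilities, and each variance term is at most $T_1^2$; the last term uses $\Delta_V\le T_1$. Moreover $p_Vp_E=\max\{\Delta_E,\sqrt{\Delta_V}\}/T_1$, which gives the stated space once the restricted parameters are bounded by those of $G$.

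The rest of your amplification goes through, with one correction. Apply the Markov truncation to each bucket's total space, not to individual copies. With a constant abort probability per copy, a bucket of $\Theta(1/\epsilon^2)$ copies almost surely contains an aborted one.
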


As this is a simple signed extension of the JK algorithm, the proof follows almost identically to the proof of Theorem 1.1 in Ref.~\cite{jayaram2021optimal}.

\section{Hybrid Quantum-Classical Algorithm for Signed Triangle Counting}\label{sec:hybrid_alg}

Here we present the hybrid algorithm for estimating $T_1$ triangles. A similar approach can be adapted for other types of triangles. The entire framework is split into 2 subroutines that work in parallel as for estimating unsigned triangles~\cite{kallaugher_quantum_2022-1}. The quantum algorithm that estimates $T_{1}^{<k}$ is described in \cref{subsec:T1_q_estimator}. The classical algorithm that estimates $T_{1}^{>k}$ is described in \cref{subsec:c_estimator}. These are summed to provide an estimate on $T_1$, as described in \cref{subsec:hybrid_estimator}.

\subsection{Classical Estimator for $\Tgk_1$}\label{subsec:c_estimator}

We extend \texttt{ClassicalEstimator} from Ref.~\cite{kallaugher_quantum_2022-1} to the signed version in the following algorithm.

\begin{algorithm}[H]
    \caption{$\Tgk_1$ Estimator (Classical)}\label{alg:t1_greater_k_sub}
    \begin{algorithmic}[1]
        \Statex \textbf{Procedure}: $\mathtt{T_1\_Estimator\_Classical\_Subroutine(E; k; h_V,h_I)}$
        \Statex \textbf{Input}: $E$ signed edge stream ; $k \in [1,m]$; $h_V, h_I$ probabilistic hashes.
        \Statex \textbf{Output}: Estimate $\Rgk_1$.
        \vspace{1mm}
        \State $\Rgk_1 \gets 0, S \gets \emptyset, D_- \gets \{:\}, D_+ \gets \{:\}, \ell \gets 0$

        \For{$(vw,\sigma_{vw}) \in E$:}
            \For{$u \in V$:}
                \If{$\sigma_{vw} = +$}
                    \If{$(u,v,-), (u,w,-) \in S$} 
                        \State $\Rgk_1 \gets \Rgk_1 + 1-(1-1/k)^{D_-[(u,v)] +  D_+[(u,v)] + D_-[(u,w)] + D_+[(w,v)]}$
                    \EndIf   
                \Else
                    \If{$(u,v,+), (u,w,-) \in S$} 
                        \State $\Rgk_1 \gets \Rgk_1 + 1-(1-1/k)^{D_-[(u,v)] +  D_-[(u,w)] + D_+[(u,w)]}$
                    \EndIf 
                    \If{$(u,v,-), (u,w,+) \in S$} 
                        \State $\Rgk_1 \gets \Rgk_1 + 1-(1-1/k)^{D_-[(u,v)] +  D_+[(u,w)] + D_-[(u,w)]}$
                    \EndIf 
                \EndIf
                      
            \EndFor 
            \For{$(x,y,t) \in S$}
                \If{$y \in \{v,w\}$}
                    \If{$\sigma_{vw} = +1$}
                        \State $D_+[(x,y)] \gets D_+[(x,y)] + 1$
                    \Else
                        \State $D_-[(x,y)] \gets D_-[(x,y)] + 1$
                    \EndIf
                \EndIf
            \EndFor
            \If{$h_V(v) = 1$ and $h_I(\ell) = 1$}
                \State $S.add((v,w,\sigma_{vw}))$ and set $D_-[(v,w)], D_+[(v,w)] \gets 0$.
            \EndIf
            
            \If{$h_V(w) = 1$ and $h_I(\ell + 1) = 1$}
                \State $S.add((w,v,\sigma_{vw}))$ and set $D_-[(w,v)], D_+[(w,v)] \gets 0$.
            \EndIf
            \State $\ell \gets \ell + 2$
        \EndFor
        \vspace{2mm}
        \State \textbf{return} $\Rgk_1$
    \end{algorithmic}
\end{algorithm}

    The probabilistic hash functions must satisfy $Pr(h_V(u)) = 1/\sqrt{km}$ and $Pr(h_I(t)) = \sqrt{k/m}$. $h_V$ needs to be stored as it is called multiple times but it can be simplified to being only pairwise independent which requires $O(\log n)$ bits. $h_I$ only needs to be queried once and then forgotten, thus we do not need need to store the full function.

    \begin{lemma}[Analysis of \cref{alg:t1_greater_k_sub}]
        \begin{enumerate}
            \item In expectation, the algorithm requires uses $\mathcal{O}(\log n)$ bits to run.
            \item $\mathbb{E}(\Rgk_1) = \frac{\Tgk_1 \sqrt{k}}{m^{3/2}}$
            \item $Var(\Rgk_1) \leq 4 \frac{\Tgk_1 \sqrt{k}\Delta_E}{m^{3/2}}$
        \end{enumerate}
    \end{lemma}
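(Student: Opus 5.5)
The plan is to follow the analysis of \texttt{ClassicalEstimator} in Ref.~\cite{kallaugher_quantum_2022-1} and treat the signs as bookkeeping that splits the unsigned counters into $D_+$ and $D_-$. First I fix the meaning of $\Tgk_1$. Take a $T_1$ triangle $uvw$ whose last edge to arrive is $vw$. Its two earlier edges $uv,uw$ each have a \emph{tail count}: the number of stream edges arriving after the edge that share the endpoint away from $u$. The triangle lies in $\Tgk_1$ when the relevant tail count (as in Ref.~\cite{kallaugher_quantum_2022-1}, the minimum over the two wedge edges) exceeds $k$. The quantity $1-(1-1/k)^{D}$ is then the probability that a virtual $\tfrac1k$-subsample of the $D$ later edges is nonempty. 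I will read it as the complement of the weighting used by the quantum estimator for $\Tlk_1$, so that each triangle is counted by the classical side with weight $w_{uvw}$ and by the quantum side with weight $1-w_{uvw}$. The exact split is whatever makes the two estimators sum to $T_1$ in \cref{subsec:hybrid_estimator}.

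\textbf{Space.} An oriented edge $(v,w,\sigma)$ is stored only if $h_V(v)=1$ and $h_I(\ell)=1$. These are independent, so each of the $2m$ orientations is stored with probability $\frac{1}{\sqrt{km}}\cdot\sqrt{k/m}=1/m$. The expected size of $S$ is therefore $2$, and each entry uses two counters $D_\pm$ of $O(\log m)$ bits. Adding the $O(\log n)$ bits for the pairwise-independent $h_V$ gives item 1, since $h_I$ is read once and never stored.

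\textbf{Expectation.} I will go case by case through the three sign patterns of a $T_1$ triangle: last edge $+$ with both wedge edges $-$, or last edge $-$ with the wedge edges of mixed sign. I will check that the three branches of \cref{alg:t1_greater_k_sub} are mutually exclusive and detect exactly these triangles, and that the sum of $D$'s in each branch equals the unsigned tail count. The exponents printed in the algorithm appear to contain index typos, such as $D_+[(w,v)]$ and the $(u,w)$ counter appearing twice; I will read them as $D_\pm[(u,v)]+D_\pm[(u,w)]$ as intended. A triangle is detected iff $u$ is vertex-sampled, with probability $1/\sqrt{km}$, and both oriented edges $(u,v),(u,w)$ pass $h_I$, with probability $k/m$. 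This gives detection probability $\sqrt{k}/m^{3/2}$. Summing the weight over triangles by linearity, and using the identification of $\Tgk_1$ above, yields $\mathbb E\Rgk_1=\Tgk_1\sqrt k/m^{3/2}$.

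\textbf{Variance (main obstacle).} Write $\Rgk_1=\sum_\tau X_\tau$ over $T_1$ triangles, where each weight is at most $1$. Then
\[\mathrm{Var}\le \sum_{\tau,\tau'}\mathbb E[X_\tau X_{\tau'}].\]
Diagonal terms give at most $\Tgk_1\sqrt k/m^{3/2}$. For $\tau\ne\tau'$, the pair survives only if it shares the sampled apex $u$ and the sampled $h_I$ indices. Pairs sharing one wedge edge (at most $\Delta_E$ per triangle and edge, and two edges per triangle) need only three index samples, giving $\frac{1}{\sqrt{km}}\cdot (k/m)^{3/2}=k/m^2\le \sqrt{k}/m^{3/2}$ since $k\le m$. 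That contributes at most $2\Delta_E\Tgk_1\sqrt k/m^{3/2}$. Pairs sharing only the apex cost $\frac{1}{\sqrt{km}}(k/m)^2$, and this is where the difficulty lies. I expect to control them through the $>k$ condition: each such triangle's tail count bounds how many triangles can hang off its edges. Following Ref.~\cite{kallaugher_quantum_2022-1}, I would bound their number per triangle by $O(\Delta_E m/k)$, which again gives $O(\Delta_E\Tgk_1\sqrt k/m^{3/2})$. Collecting the cases then yields the constant $4$.
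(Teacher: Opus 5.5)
The gap is in how you identify the weight, and this is exactly where the signed setting departs from Ref.~\cite{kallaugher_quantum_2022-1}. You treat the signs as bookkeeping and propose to read every exponent as the unsigned tail count $D_+[(u,v)]+D_-[(u,v)]+D_+[(u,w)]+D_-[(u,w)]$. That is not the intended weight, and under that reading item~2 is false.

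In the signed quantum sketch, a stored positive edge $(u,v,+)$ is destroyed only by later \emph{negative} edges at $v$. A later positive edge at $v$ queries, and hence deletes, only negative stored edges pointing to $v$; see part (b) of the sketch-survival lemma in the appendix. Consequently:
\begin{itemize}
\item $\Tlk_1$ carries the factors $(1-1/k)^{d^-_{uvw}+d_{uwv}}$ and $(1-1/k)^{d_{uvw}+d^-_{uwv}}$ for the two mixed-sign patterns.
\item $\Tgk_1$ is defined by the complements, e.g.\ $\tgk_{uvw,+--}=1-(1-1/k)^{d^-_{uvw}+d_{uwv}}$.
\end{itemize}
The algorithm keeps $D_+$ and $D_-$ separately precisely to reproduce this rule. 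A negative wedge edge contributes both of its counters; a positive wedge edge contributes only its $D_-$.

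So the printed exponent $D_-[(u,v)]+D_-[(u,w)]+D_+[(u,w)]$ in the $(u,v,+),(u,w,-)$ branch is correct, and the doubled $(u,w)$ counter there is not a typo. The actual typos are:
\begin{itemize}
\item $D_+[(w,v)]\to D_+[(u,w)]$ in the $(-,-)$ branch;
\item in the $(u,v,-),(u,w,+)$ branch, the doubled counter belongs on $(u,v)$, giving $D_-[(u,v)]+D_+[(u,v)]+D_-[(u,w)]$.
\end{itemize}
Under your reading, a mixed-sign triangle receives $1-(1-1/k)^{d_{uvw}+d_{uwv}}$. This is strictly larger than its $\tgk$ value whenever a positive edge arrives at the far endpoint of its positive wedge edge before the closing edge. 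You would then get $\mathbb{E}[\Rgk_1]>\Tgk_1\sqrt{k}/m^{3/2}$ in general and $\Tlk_1+\Tgk_1\neq T_1$. That contradicts the complementarity you yourself invoke.

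For the same reason, drop your opening description of $\Tgk_1$ as a hard threshold (minimum of the two tail counts exceeding $k$). $\Tgk_1$ is the soft-weighted sum over $T_1$ triangles of $1-(1-1/k)^{d_\tau}$, where $d_\tau$ is the \emph{sum} of the sign-appropriate tail counts. The expectation identity holds only for that definition.

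Once the weights are corrected, the rest of your outline goes through. Your space bound (storage probability $1/m$ per orientation) and your detection probability $\frac{1}{\sqrt{km}}\cdot\frac{k}{m}$ are right, and linearity then gives item~2. This is also all the paper does: it defers to Lemmas 8--11 of Ref.~\cite{kallaugher_quantum_2022-1} and adds the sign check.

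For item~3, two repairs are needed.
\begin{itemize}
\item Sum covariances rather than second moments. Pairs with different apexes have zero covariance by pairwise independence of $h_V$ and independence of $h_I$, but their second moments do not vanish, so you cannot simply discard them.
\item In the apex-only term, you cannot bound the \emph{number} of triangles at $u$ by $O(m\Delta_E/k)$; it can be as large as $\Delta_V$. What works is the weighted bound $\sum_{\tau'\text{ at apex }u}w_{\tau'}\le\frac1k\sum_{\tau'}d_{\tau'}\le 2m\Delta_E/k$. This uses $1-(1-1/k)^d\le d/k$ and the fact that each edge at $v'$ arriving after $uv'$ enters the tail count of at most $\Delta_E$ triangles on $uv'$. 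Signed tail counts are dominated by unsigned ones, so the bound survives the sign refinement.
\end{itemize}
Finally, count $\Delta_E-1$ rather than $\Delta_E$ partners per shared wedge edge. The three contributions then total $1+2(\Delta_E-1)+2\Delta_E\le 4\Delta_E$ times $\Tgk_1\sqrt{k}/m^{3/2}$, which yields the constant $4$.
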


    \begin{corollary}\label{lem:t1_greater_k}
        There is an median-of-means algorithm using \cref{alg:t1_greater_k_sub} as a subroutine that outputs an estimate of $\Tgk_1$ to $\epsilon T_1$ precision with probability $1-\delta$. This algorithm uses $\mathcal{O}\left(\frac{m^{3/2}}{T\sqrt{k}}\Delta_E \log n \frac{1}{\epsilon^2}\log \frac{1}{\delta}\right)$ number of classical and quantum bits in expectation.
    \end{corollary}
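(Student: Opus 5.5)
We will obtain the corollary by the standard median-of-means amplification of the single-copy estimator $\Rgk_1$ from \cref{alg:t1_greater_k_sub}, using the three facts in the preceding Lemma. Set $\mu := \mathbb{E}(\Rgk_1) = \Tgk_1\sqrt{k}/m^{3/2}$ and $\sigma^2 := \operatorname{Var}(\Rgk_1) \le 4\Tgk_1\sqrt{k}\Delta_E/m^{3/2}$. The rescaled quantity $Y := \frac{m^{3/2}}{\sqrt{k}}\Rgk_1$ is an unbiased estimator of $\Tgk_1$ with
\begin{equation*}
\operatorname{Var}(Y) \le \frac{m^{3}}{k}\cdot\frac{4\Tgk_1\sqrt{k}\Delta_E}{m^{3/2}} = \frac{4 m^{3/2}\Tgk_1\Delta_E}{\sqrt{k}} \le \frac{4m^{3/2} T_1 \Delta_E}{\sqrt{k}},
\end{equation*}
where the last step uses $\Tgk_1 \le T_1$.

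Next, we run $r$ independent copies of \cref{alg:t1_greater_k_sub}, with independent hashes $h_V,h_I$, in parallel on the same stream, and let $\bar Y$ be the mean of the rescaled outputs. Chebyshev gives
\begin{equation*}
\Pr\left(|\bar Y - \Tgk_1| > \epsilon T_1\right) \le \frac{4m^{3/2}\Delta_E}{r\sqrt{k}\,\epsilon^2 T_1},
\end{equation*}
so taking $r = \ceil*{12\, m^{3/2}\Delta_E/(\sqrt{k}\,\epsilon^2 T_1)}$ makes this failure probability at most $1/3$. We then repeat this $s = \mathcal{O}(\log(1/\delta))$ times independently and output the median of the $s$ means. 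By a Chernoff bound on the number of "bad" means, the median is within $\epsilon T_1$ of $\Tgk_1$ except with probability $\delta$.

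The space bound follows from linearity of expectation. Each copy uses $\mathcal{O}(\log n)$ bits in expectation by item 1 of the Lemma, and there are $rs$ copies, giving
\begin{equation*}
\mathcal{O}\left(\frac{m^{3/2}\Delta_E}{T_1\sqrt{k}}\log n\,\frac{1}{\epsilon^2}\log\frac{1}{\delta}\right)
\end{equation*}
bits in expectation. The statement writes $T$ in the denominator where the derivation produces $T_1$; I would either read $T$ as $T_1$, or note that the bound holds with $T_1$ and matches the displayed expression under that reading. The pairwise-independent $h_V$ costs $\mathcal{O}(\log n)$ bits per copy, which is absorbed, and $h_I$ is never stored. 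No quantum bits are needed for this classical estimator, so "classical and quantum bits" is trivially satisfied.

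The main obstacle is that $r$ depends on the unknown quantities $T_1$ and $\Delta_E$. I would handle this in the usual way: assume, as in Ref.~\cite{jayaram2021optimal} and Ref.~\cite{kallaugher_quantum_2022-1}, that a constant-factor lower bound on $T_1$ and an upper bound on $\Delta_E$ are supplied as parameters, so that $r$ is fixed in advance. A second subtlety is that the $\mathcal{O}(\log n)$ per-copy bound holds only in expectation. The corollary claims only an expected space bound, so linearity suffices; a high-probability space bound would follow from Markov's inequality at the cost of a constant-factor increase in $\delta$.
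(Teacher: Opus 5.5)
Your proof is correct and takes essentially the paper's approach. The paper defers to Lemmas 8--11 of Kallaugher, whose final step is exactly this rescaling, Chebyshev on the mean, and median amplification of the single-copy expectation and variance bounds, the same argument the paper writes out in its appendix for the quantum estimator. Your reading of $T$ as $T_1$ is also right: the precision $\epsilon T_1$ forces a $1/T_1$ factor, and since $T_1 \le T$, the bound as literally stated with $T$ would be a stronger claim that the Lemma does not give.
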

    
    The proof of these statements mirrors those of lemmas 8-11 in Ref.~\cite{kallaugher_quantum_2022-1}. In particular, the only difference is that in lines 5, 8, and 10, we check if there is a $T_1$ triangle centered on the wedge $(vw, \sigma_{vw})$. On line 7 in $\mathtt{ClassicalEstimator}$ from Ref.~\cite{kallaugher_quantum_2022-1}, they check if there is \textit{any} triangle centered on the wedge $vw$. In both cases, once the desired triangle is found, the appropriate value is added to the estimate.


\subsection{Quantum Estimator for $\Tlk_1$}\label{subsec:T1_q_estimator}

\begin{figure}[!t]
    \centering
    \includegraphics[width=1.0\textwidth]{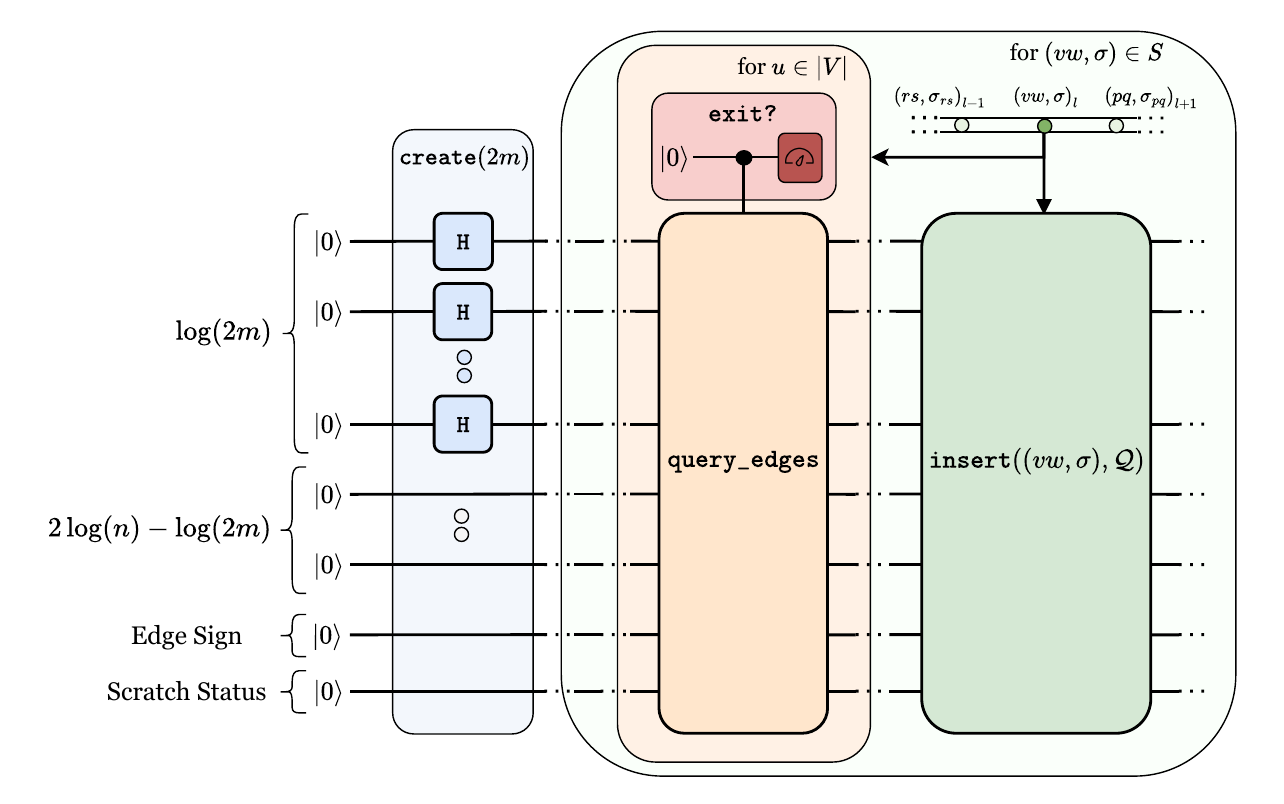}
    \caption{\textbf{Circuit Diagram for Quantum Sketching Algorithm.} The circuit for the quantum sketchpad is described by 3 types of action to update the quantum state as described in \cref{subsec:T1_q_estimator}. First, $\create$ is applied to generate the initial sketchpad, depicted is the case that $\log(2m)$ is a power of 2 such that Hadamard gates can be used. For each signed edge $(vw,\sigma)$ in the stream, $\queryedge$ is applied for every $u \in |V|$ related to $vw$ and $\sigma$. If one of the PVMs results in a measurement besides $\perp$, the algorithm halts and the value is returned. Lastly, $\Insert$ is applied to add $(vw,\sigma)$ and $(wv,\sigma)$ to the sketchpad and we proceed to the next edge.}
    \label{fig:q_sketch_circ}
\end{figure}

We now use a quantum subroutine to approximate $\Tlk_1$. The subroutine relies on 3 quantum primitives as outlined in Ref.~\cite{kallaugher_how_2024}. At a high level, this algorithm records a sketch $\Sc$ of seen edges into a quantum state $\QS$ and queries the sketch by measuring $\QS$. We maintain a state on $2\ceil{log_2 n} + 2$ qubits, where $n$ is the number of vertices in the underlying graph. Upon reading a signed edge $(vw, \sigma_{vw})$, a query is made on the underlying sketch for a wedge $wu, wv$ (along with sign information). This query is done via measurements (see \cref{sec:measurement_ops} for more details). If no wedge is found, then $(vw,\sigma_{vw})$ is inserted into the sketch and we proceed to the next edge. It is important to note that this algorithm is \textit{destructive}; processing an edge $(vw,\sigma_{vw})$ destroys previous information on edges incident to $v$ and $w$. A circuit corresponding to this is depicted in \cref{fig:q_sketch_circ}.

\begin{algorithm}[H]
        \caption{$\Tlk_1$ Estimator Subroutine}\label{alg:t1_less_k_sub}
    \begin{algorithmic}[1]
        \Statex \textbf{Procedure} $\mathtt{T_1\_Estimator\_Quantum\_Subroutine(E,k,g)}$
        \Statex \textbf{Input}: $E$ signed edge stream; $k \in [1,m]$; $g$ probabilistic hash.
        \Statex \textbf{Output}: Estimate $\Rlk_1$.
        \vspace{1mm}
        \State $\Qc_{\Sc} := \create([2m] \times \{\xi\})$. 
        \State $\ell := 0$
        \For{$(vw,\sigma_{vw}) \in E $:} \Comment{Assume $v < w$}
            \If{$g(\ell) = 1$}
            \For{$u \in V$:}
                \If{$\sigma_{vw} = +$}                     \State $r_{--} := \texttt{query\_edges}((uv, -), (uw, -), \Qc)$
                    \If{$r_{--} \ne \perp$}
                        \textbf{return} $r_{--} \cdot km$ 
                    \EndIf
                                                                                                \Else                                         \State $r_{+-} := \texttt{query\_edges}((uv, +), (uw, -), \Qc)$                     \If{$r_{+-} \ne \perp$}
                        \textbf{return} $r_{+-}\cdot km$
                    \EndIf
                                        \State $r_{-+} := \texttt{query\_edges}((uv, -), (uw, +), \Qc)$
                    \If{$r_{-+} \ne \perp$}
                        \textbf{return} $r_{-+}\cdot km$
                    \EndIf
                \EndIf
            \EndFor 
        \EndIf
        \State $\Insert((vw,s),\ell,  \Qc_{\Sc})$
        \State $\ell := \ell + 1$
        \EndFor
        \State \textbf{return} 0
    \end{algorithmic}
\end{algorithm}

In this extension of \texttt{QuantumEstimator} from Ref.~\cite{kallaugher_quantum_2022-1}, we need track of and query for signed edges. Note that the queries in lines 9 and 11 are on disjoint pairs and thus independent. 

\begin{lemma}[Performance of \cref{alg:t1_less_k_sub}]
    Let $\Rlk_1$ be the outcome of \cref{alg:t1_less_k_sub}.
    \begin{enumerate}
        \item The algorithm uses $2 \, \log(n) + 2 = O(\log n)$ qubits to run.
        \item $\mathbb{E}[\Rlk_1] = \Tlk_1$.
        \item $Var(\Rlk_1) \leq (km)^2$.
    \end{enumerate}
\end{lemma}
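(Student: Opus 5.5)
The argument follows the analysis of \texttt{QuantumEstimator} in Ref.~\cite{kallaugher_quantum_2022-1}, with the sign bookkeeping added. We take the three items in order and expect the expectation computation to carry most of the work.

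\textbf{Space.} The sketch register stores a superposition over signed directed edge slots, i.e.\ basis states $\ket{x,y,s}$ (plus the placeholder $\xi$ symbols from $\create$). This needs $2\ceil{\log_2 n}$ qubits for the ordered pair of endpoints and at most two further qubits for the sign and the placeholder/occupied flag. The classical counter $\ell$ and the hash $g$ take $O(\log n)$ bits; $g$ is queried once per index and can be forgotten. The measurement operators of \cref{sec:measurement_ops} act on this register without ancillas, or with $O(1)$ ancillas that can be reused. Together these give the bound $2\log n + 2 = O(\log n)$.

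\textbf{Expectation.} We would write $\Rlk_1 = km\sum_{t} X_t$, where $t$ ranges over $T_1$ triangles whose closing edge $vw$ has at most $k$ (in the sense of the $<k$ split) subsequent edges destroying its wedge. Here $X_t \in \{0,\pm1\}$ is the signed outcome attributed to $t$, using the fact that the algorithm returns at the first non-$\perp$ outcome.

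The key steps are as follows.
\begin{enumerate}
\item Show that after $\create$ and the inserts, the amplitude on the two wedge slots $(uv,\sigma_{uv})$ and $(uw,\sigma_{uw})$ is $1/\sqrt{2m}$ each, provided no intervening edge at $v$ or $w$ has disturbed them (the destructive property).
\item Show that the PVM $\queryedge$ for the matching sign pattern yields outcome $\pm1$ with a probability whose signed contribution integrates to $1/(km)$ after accounting for $\Pr[g(\ell)=1]$.
\item Show that mismatched sign patterns, and $T_0,T_2,T_3$ triangles, contribute zero. The POVMs are built on sign-specific projectors, and the queries on lines 9 and 11 act on disjoint pairs, so they do not interfere.
\item Show that triangles in the ``$>k$'' class, and wedge slots overwritten by later edges, are exactly the cases cancelled by the $\pm$ symmetric outcomes, so their contribution vanishes in expectation.
\end{enumerate}
Summing gives $\mathbb{E}[\Rlk_1] = \Tlk_1$.

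\textbf{Variance.} Since $\Rlk_1 \in \{0, \pm km\}$ and the algorithm halts after its first non-$\perp$ outcome, $\Rlk_1^2 \le (km)^2$ pointwise. Hence $\mathrm{Var}(\Rlk_1) \le \mathbb{E}[\Rlk_1^2] \le (km)^2$. The sharper Kallaugher bound, by contrast, would use that the halting probability is at most $1/(km)$-weighted.

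\textbf{Main obstacle.} The hard step is the expectation: tracking the unnormalised post-measurement state through the sequence of $\perp$ outcomes. In particular we must show that conditioning on earlier $\perp$ results and earlier destructive inserts affects each triangle's contribution exactly as in the unsigned case. We would try to reduce to the unsigned proof by noting that the sign qubit is classical in every branch: each inserted edge carries a definite sign, so the state is a direct sum over sign labels. Each sign-restricted query then acts as the unsigned query on the relevant subspace.
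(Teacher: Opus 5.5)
Your space count and variance bound are fine and match the paper; the variance step is exactly $\abs{\Rlk_1}\le km$. What you flag as the main obstacle is actually the easy part. $O^{\perp}$ is the projector that removes the two queried basis vectors, so after any run of $\perp$ outcomes the unnormalized state is $\frac{1}{\sqrt{2m}}\sum_{s\in\Sc}\ket{s}$. When both wedge slots are present, the joint probability of reaching the query and reading $+1$ is therefore $1/m$. With $\prob{g(\ell)=1}=1/k$ and the factor $km$, the conditional contribution is exactly $1$. This is cleaner than the paper's normalized state times $\abs{\Sc_{\ell-1}}/(2m)$.

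\textbf{The gap is in what you sum.} $\Tlk_1$ is not the number of $T_1$ triangles in a ``$<k$ class''. It is a soft weight $\Tlk_1=\sum_t(1-1/k)^{d_t}$ over \emph{all} $T_1$ triangles $t$, with $\Tgk_1=\sum_t\bigl(1-(1-1/k)^{d_t}\bigr)$, where $d_t$ counts edges arriving \emph{between} a wedge edge and the closing edge. So your decomposition of $\Rlk_1$ over a subclass of triangles is wrong, and so is step 4's claim that ``$>k$'' triangles are cancelled by the $\pm$ symmetry. A triangle with many intervening edges still contributes $(1-1/k)^{d_t}>0$ in expectation. The $\pm$ cancellation only handles the event that exactly one wedge slot survived; if neither survives, the query returns $\perp$ deterministically.

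\textbf{The missing ingredient is the survival probability of each wedge slot up to the closing edge, and this is where the signs enter.} A stored $(u,v,-)$ is deleted at step $j$ iff $g(j)=1$ and $e_j$ is incident to $v$, whatever $\sigma_j$ is. A stored $(u,v,+)$ is deleted only if moreover $\sigma_j=-$, since a positive closing edge runs only the $((\cdot,-),(\cdot,-))$ query. So these slots survive with probability $(1-1/k)^{d_{uvw}}$ and $(1-1/k)^{d^-_{uvw}}$ respectively. The two slots of a wedge survive independently, because the intervening edges at $v$ and at $w$ occupy disjoint time steps and $g$ is fully independent. This gives the exponents $d^-_{uvw}+d_{uwv}$, $d_{uvw}+d^-_{uwv}$ and $d_{uvw}+d_{uwv}$ in $\tlk_{uvw,+--}$, $\tlk_{uvw,-+-}$ and $\tlk_{uvw,--+}$. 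The paper proves this as a separate lemma from its explicit description of the surviving set $\Kc_\ell$.

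Your proposed reduction to the unsigned proof via a direct sum over sign labels cannot supply this. The sign register is classical, but which slots a later edge deletes depends on that later edge's sign. So the dynamics are not the unsigned ones on any sign sector, and the unsigned argument would give exponent $d_{uvw}+d_{uwv}$ for every ordering.

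Once you have the survival lemma, your steps 2--3 plus linearity finish the proof after one piece of bookkeeping. The loop fixes $v<w$, so the $((uv,+),(uw,-))$ query at $vw$ catches the triangle whichever of $uv,uw$ arrived first. The $+--$ and $-+-$ contributions therefore come out paired as $\tlk_{uvw,+--}+\tlk_{uwv,-+-}$ (and symmetrically). They must be re-paired before they sum to $\Tlk_{+--}+\Tlk_{-+-}$.
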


The proof of (2) and (3) from this algorithm are given in \cref{sec:analysis_q_est}.

\begin{corollary}\label{lem:t1_less_k}
    There is an median-of-means algorithm using \cref{alg:t1_less_k_sub} as a subroutine that outputs an estimate of $\Tgk_1$ to $\epsilon T_1$ precision with probability $1-\delta$. This algorithm uses $\mathcal{O}\left( \left( \frac{km}{T_1} \right)^2 \log (n) \frac{1}{\epsilon^2}\log{\left(\frac{1}{\delta}\right)} \right)$ number of classical and quantum bits in expectation. 
\end{corollary}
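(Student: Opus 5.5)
We will derive \cref{lem:t1_less_k} from the preceding performance lemma for \cref{alg:t1_less_k_sub} by a standard median-of-means amplification. (The statement says $\Tgk_1$, but the subroutine estimates $\Tlk_1$, so we read it as an estimate of $\Tlk_1$.) The lemma gives three facts about a single run's output $\Rlk_1$: it is unbiased, $\mathbb{E}[\Rlk_1]=\Tlk_1$; its variance satisfies $Var(\Rlk_1)\le (km)^2$; and it uses $2\log n+2$ qubits. Classical storage for the index $\ell$ and the hash $g$ adds $O(\log n)$ bits, with $g$ pairwise independent so that it is storable in $O(\log n)$ bits. So each independent copy costs $O(\log n)$ classical and quantum bits.

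\textbf{Step 1 (means).} We average $s=\lceil 8 (km)^2/(\epsilon^2T_1^2)\rceil$ independent copies to get $\bar R$. Its mean is still $\Tlk_1$, and its variance is at most $(km)^2/s\le \epsilon^2T_1^2/8$. By Chebyshev,
\[
\Pr\bigl[|\bar R-\Tlk_1|>\epsilon T_1\bigr]\le \tfrac18 .
\]
We measure precision relative to $T_1$ rather than $\Tlk_1$. This is what makes the copy count depend on $(km/T_1)^2$ and not on $\Tlk_1$, which could be tiny.

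\textbf{Step 2 (median).} We take the median of $t=O(\log(1/\delta))$ independent such means. The median falls outside $[\Tlk_1-\epsilon T_1,\Tlk_1+\epsilon T_1]$ only if at least half the means do. A Chernoff/Hoeffding bound on the number of bad means, each bad with probability at most $1/8$, shows this happens with probability at most $e^{-\Omega(t)}\le\delta$.

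\textbf{Step 3 (space).} All $st$ copies run in parallel on the single pass; this is the perfectly distributable structure, with each QPU holding one sketch. The total space is
\[
st\cdot O(\log n)=O\!\left(\left(\tfrac{km}{T_1}\right)^2\log n\,\tfrac{1}{\epsilon^2}\log\tfrac1\delta\right).
\]
The "in expectation" qualifier is harmless here, because the quantum subroutine's space is deterministic.

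\textbf{Main obstacle.} The amplification itself is routine. The real content lies in the lemma's unbiasedness and variance bound, which we are allowed to assume. The one subtlety to check is a parameter dependence: the copy count $s$ depends on $T_1$, which is unknown in advance. We will handle this as in Ref.~\cite{kallaugher_quantum_2022-1}, assuming a constant-factor lower bound on $T_1$ as input. If that assumption is unavailable, we will use the standard guess-and-halve scheme over $O(\log n)$ geometric guesses, which changes the bound only by constant or logarithmic factors absorbed as in the classical corollaries.
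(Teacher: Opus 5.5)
Your argument is correct. It uses the same median-of-means skeleton as the paper's appendix proof: Chebyshev for each bucket mean, then a Hoeffding bound on the number of bad buckets. The difference is how you normalize the error, and that difference is substantive.

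The paper's proof targets error $\epsilon\Tlk_1$ and uses the slightly sharper bound $\mathrm{Var}(\Rlk_1)\le (km)^2-(\Tlk_1)^2$. It therefore needs about $\frac{(km)^2-(\Tlk_1)^2}{(\Tlk_1)^2}\cdot\frac{1}{\epsilon^2}\log\frac{1}{\delta}$ copies in total. That gives a stronger accuracy guarantee, but the cost matches the stated $(km/T_1)^2$ only when $\Tlk_1=\Omega(T_1)$, and it blows up as $\Tlk_1\to 0$.

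You instead set the Chebyshev threshold at $\epsilon T_1$ and use only $|\Rlk_1|\le km$. This gives $O\bigl((km)^2/(\epsilon^2T_1^2)\bigr)$ copies per bucket, which is exactly the bound in the statement. It is also exactly what the hybrid theorem's proof uses: additive error $\epsilon T_1/2$ on $\Tlk_1$ at cost $\tilde{\mathcal{O}}((km/T_1)^2)$. So your normalization is the one that proves the corollary as written, and you correctly read the $\Tgk_1$ in the statement as a typo for $\Tlk_1$.

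Two small corrections.

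First, do not make $g$ pairwise independent. The unbiasedness you import depends on full independence of $g$ in two places: the survival probabilities $(1-1/k)^{d_{uvw}}$ in \cref{lem:sketch_probs}, and the step that multiplies the survival probabilities of the two wedge edges. You do not need to store $g$ anyway. Since $g(\ell)$ is evaluated only once, at step $\ell$, it can be sampled fresh and discarded, as the paper does with $h_I$. Your $O(\log n)$ per-copy count (the qubits plus the counter $\ell$) still stands.

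Second, the guess-and-halve fallback is unnecessary, because the paper explicitly assumes an a priori bound on $T_1$. It is also not as cheap as you suggest in a single pass. Without a lower bound you cannot abandon the small guesses adaptively, and their copy counts, which scale like $1/\tilde{T}^2$, dominate.
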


\subsection{Hybrid Algorithm for $T_{1} = \Tgk_1 + \Tlk_1$}\label{subsec:hybrid_estimator}
We are now ready to provide the full algorithm that estimates $T_1$. Note that this algorithm assumes that one has a reasonable bound on $m, \Delta_E,$ and $T_1$ itself. This is an almost identical statement and proof to Theorem 1 in Ref.~\cite{kallaugher_quantum_2022-1}.

\begin{theorem}
    For any $\epsilon, \delta \in (0,1]$, there is a hybrid quantum-classical streaming algorithm that uses

    \begin{equation}
        \mathcal{O}\left( \frac{m^{8/5}}{T_1^{6/5}}\Delta_E^{4/5} \log n \cdot \frac{1}{\epsilon^2}\log\frac{1}{\delta} \right)
    \end{equation}

    \noindent quantum and classical bits in expectation to return a $(1\pm \epsilon)$-multiplicative approximation of $T_1$ in an insertion-only graph stream with probability $1-\delta$.
\end{theorem}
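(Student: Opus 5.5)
We plan to combine the two corollaries above, following the proof of Theorem 1 in Ref.~\cite{kallaugher_quantum_2022-1}. Every $T_1$ triangle falls into exactly one of the two classes counted by $\Tlk_1$ and $\Tgk_1$, so $T_1 = \Tlk_1 + \Tgk_1$. We run the median-of-means estimator of \cref{lem:t1_less_k} (quantum, built on \cref{alg:t1_less_k_sub}) and the median-of-means estimator of \cref{lem:t1_greater_k} (classical, built on \cref{alg:t1_greater_k_sub}) in parallel on the same stream. Each is run with precision $\epsilon/2$ and failure probability $\delta/2$. A union bound then shows that with probability $1-\delta$ both estimates are within $\tfrac{\epsilon}{2}T_1$ of their targets. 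Hence their sum is within $\epsilon T_1$ of $T_1$, which is the required $(1\pm\epsilon)$-multiplicative approximation. Halving $\epsilon$ and $\delta$ changes the space bounds only by constant factors.

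The total expected space is the sum of the two bounds,
\begin{equation*}
\mathcal{O}\left(\left[\left(\frac{km}{T_1}\right)^2 + \frac{m^{3/2}\Delta_E}{T_1\sqrt{k}}\right]\log n\,\frac{1}{\epsilon^2}\log\frac{1}{\delta}\right).
\end{equation*}
We read the $T$ in \cref{lem:t1_greater_k} as $T_1$, since the variance bound of the classical subroutine involves $\Tgk_1 \le T_1$ and the target precision is $\epsilon T_1$. We choose $k$ to balance the two terms. Setting $k^2m^2/T_1^2 = m^{3/2}\Delta_E/(T_1 k^{1/2})$ gives $k^{5/2} = T_1\Delta_E m^{-1/2}$, that is,
\begin{equation*}
k = T_1^{2/5}\Delta_E^{2/5}m^{-1/5}.
\end{equation*}
Substituting back, each term becomes
\begin{equation*}
\frac{m^2}{T_1^2}\cdot T_1^{4/5}\Delta_E^{4/5}m^{-2/5} = \frac{m^{8/5}\Delta_E^{4/5}}{T_1^{6/5}},
\end{equation*}
which is the claimed bound. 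As the paper notes before the statement, choosing $k$ this way uses the assumed a priori bounds on $m$, $\Delta_E$ and $T_1$. It also uses the fact that constant-factor estimates only affect constants.

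The step we expect to be the main obstacle is checking the side conditions on $k$. The subroutines require $k\in[1,m]$, and the hash probabilities $1/\sqrt{km}$ and $\sqrt{k/m}$ must lie in $(0,1]$. The upper condition $k\le m$ follows from $T_1\Delta_E \le m\Delta_E^2\le m^3$, but $k\ge 1$ may fail when $T_1\Delta_E < \sqrt m$. In that regime we plan to set $k=1$. The quantum term then becomes $(m/T_1)^2$ and the classical term becomes $m^{3/2}\Delta_E/T_1$. We will check that both are dominated by the stated bound, or else handle this case as Ref.~\cite{kallaugher_quantum_2022-1} does, for example by using the inequality $T_1\le m\Delta_E$ to compare the expressions.

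A second point to check is that the two estimators really partition the $T_1$ triangles. The analysis of \cref{alg:t1_greater_k_sub} weights each triangle by $1-(1-1/k)^{(\cdot)}$, and this weighting must be complementary to the detection probability of the destructive quantum sketch, so that no triangle is counted twice or missed. This is exactly the signed analogue of the partition in Ref.~\cite{kallaugher_quantum_2022-1}, and we will take it from the stated expectation identities. Finally, the algorithm is perfectly distributable: the median-of-means copies and the two sub-estimators share no state, so the overall space is just the sum of the individual space bounds.
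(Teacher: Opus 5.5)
Your main argument is the paper's proof: run the two median-of-means estimators in parallel with $\epsilon/2$ and $\delta/2$, apply a union bound, add the two space bounds, and balance them at $k \approx T_1^{2/5}\Delta_E^{2/5}m^{-1/5}$ (the paper takes the ceiling).

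The one place you go beyond the paper needs a correction. Write $B := m^{8/5}\Delta_E^{4/5}T_1^{-6/5}$ and suppose $T_1\Delta_E<\sqrt m$, with $k=1$.
\begin{itemize}
\item The classical term is fine, since $\frac{m^{3/2}\Delta_E/T_1}{B} = \left(\frac{T_1\Delta_E}{\sqrt m}\right)^{1/5} < 1$.
\item The quantum term is not, since $\frac{(m/T_1)^2}{B} = \left(\frac{\sqrt m}{T_1\Delta_E}\right)^{4/5} > 1$.
\end{itemize}
The inequality $3T_1\le m\Delta_E$ bounds $T_1\Delta_E$ from the wrong side, so it cannot rescue this. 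The paper's ceiling choice runs into the same issue without comment.

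What works in this regime is a different algorithm altogether. Since $\Delta_E\ge 1$ whenever $T_1\ge 1$, the condition $T_1\Delta_E<\sqrt m$ forces $T_1<\sqrt m$. Hence $B \ge m^{8/5}T_1^{-6/5} > m$. So simply storing the whole signed stream in $\mathcal{O}(m\log n)$ bits and counting $T_1$ exactly already fits within the claimed budget.

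Also, $T_1=\Tlk_1+\Tgk_1$ is a fractional split rather than a partition of triangles into classes. Each $T_1$ triangle contributes $(1-1/k)^{d}$ to $\Tlk_1$ and $1-(1-1/k)^{d}$ to $\Tgk_1$, with the same exponent $d$. So the identity is immediate from the definitions, and the agreement with the two algorithms is exactly what the expectation statements supply.
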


\begin{proof}
    Let $\mathtt{T_1\_Estimator\_Classical}$ and $\mathtt{T_1\_Estimator\_Quantum}$ be the names of the median-of-means algorithms promised by \cref{lem:t1_greater_k,lem:t1_less_k}. The following algorithm achieves the performance of the theorem. Lines 2 and 3 are written sequentially but should really be run in parallel.

    \begin{algorithm}[H]
        \caption{Signed $T_{1}$ Estimator (Quantum-Classical)}\label{alg:t1_estimator_hybrid}
        \begin{algorithmic}[1]
            \Statex \textbf{Procedure} $\mathtt{T_1\_Estimator\_Hybrid(E, \epsilon, \delta)}$
            \Statex \textbf{Input}: $E$ stream of $m$ signed edges; $\epsilon, \delta$ error params.
            \Statex \textbf{Output}: Estimate $R_1$.
            \vspace{1mm}
            \State $k \gets \left \lceil \frac{(T_1)^{2/5}(\Delta_E)^{2/5}}{m^{1/5}}\right \rceil$
            \Statex Run the following sub-routines in parallel:
            \State $\Rgk_1 \gets \mathtt{T_1\_Estimator\_Classical(E, k, \epsilon/2, \delta/2)}$
            \State $\Rlk_1 \gets \mathtt{T_1\_Estimator\_Quantum(E, k, \epsilon/2, \delta/2)}$
            \State \textbf{return} $\Rgk_1 + \Rlk_1$
        \end{algorithmic}
    \end{algorithm}

    To show accuracy, recall that $\Rgk_1$ is an estimate of $\Tgk_1$ up to $\epsilon T_1/2$ precision with probability $1-\delta/2$. Similarly, $\Rlk_1$ is an estimate of $\Tlk_1$ up to $\epsilon T_1/2$ precision with probability $1-\delta/2$. Then $R_1 = \Rgk_1 + \Rlk_1$ is an estimate of $T_1 = \Tlk_1 + \Tgk_1$ to precision $\epsilon T_1$ with probability $1-\delta$ via a union bound.

    As for the space required, note that the estimates of $\Rgk_1$ and $\Rlk_1$ require $\tilde{\mathcal{O}}\left(\frac{m^{3/2}}{T_1\sqrt{k}}\Delta_E \right)$ and $\tilde{\mathcal{O}}\left(\left(\frac{km}{T_1} \right)^2 \right)$ space, respectively, ignoring the common log terms. When $k = \left \lceil \frac{(T_1)^{2/5}(\Delta_E)^{2/5}}{m^{1/5}}\right \rceil$, these both become $\tilde{\mathcal{O}}\left( \frac{m^{8/5}}{T_1^{6/5}}\Delta_E^{4/5} \right)$.
\end{proof}

Lastly, we would like to compare the space used by the hyrbid algorithm \cref{alg:t1_estimator_hybrid} versus its classical counterpart \cref{alg:T1_estimator_classical}. As noted in Ref.~\cite{kallaugher_quantum_2022-1}, the graph parameters that maximize the seperation are $\Delta_E = O(1)$ and $\Delta_V = \Omega(T_1) = \Omega(m)$, resulting in a $\tilde{O}(m^{2/5})$-space hybrid algorithm versus $\Omega(\sqrt{m})$-space purely classical algorithm. This is also true in our setting.

\section{A distributed  quantum-classical computing model for estimating balance of signed graphs}\label{Sec:5}

In this section, we estimate the triangular index of balance of signed graphs using the streaming algorithms developed in the previous section. In order to estimate the balance with accuracy $(1\pm \epsilon)$-multiplicative approximation with probability $1-\delta$ we estimate the counts of triangle types $T_j$ with accuracy $(1\pm \epsilon/(2+\epsilon))$-multiplicative approximation with probability $1-\delta.$ Using a simple algebraic manipulation it can be justified. Then we discuss a quantum distributed model for implementation of our proposed quantum-classical algorithm for estimating balance. 

\subsection{Algorithm for Estimating Triangular Balance}
We now use the individual triangle estimators to construct an algorithm that estimates triangle balance (\cref{eqn:defbalT}).

\begin{corollary}
    For any $\epsilon, \delta \in (0,1]$, there is a hybrid quantum-classical streaming algorithm that uses

    \begin{equation}
        \mathcal{O}\left( \frac{m^{8/5}}{T^{6/5}}\Delta_E^{4/5} \log n \cdot \frac{1}{\epsilon^2}\log\frac{1}{\delta} \right)
    \end{equation}

    \noindent quantum and classical bits in expectation to return a $(1\pm \epsilon)$-multiplicative approximation of $\mathcal{B}_{\triangle}(G)$ in an insertion-only graph stream with probability $1-\delta$.
\end{corollary}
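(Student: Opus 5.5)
The plan is to reduce the balance estimate to two triangle-count estimates and then control the error of their ratio. Write $N := T_1 + T_3$ for the number of balanced triangles, so that $\mathcal{B}_{\triangle}(G) = N/T$ by \cref{eqn:defbalT}. I would obtain $N$ and $T$ from two independent instances of the hybrid algorithm (\cref{alg:t1_estimator_hybrid}), each with precision $\epsilon' := \epsilon/(2+\epsilon)$ and failure probability $\delta/2$.

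\textbf{Step 1: estimators for $N$ and $T$.} For $N$, I would not estimate $T_1$ and $T_3$ separately. Instead I would modify the query step of \cref{alg:t1_greater_k_sub} and \cref{alg:t1_less_k_sub} so that a hit is recorded exactly when the stored wedge $(uv,\sigma_{uv}),(uw,\sigma_{uw})$ and the arriving edge $(vw,\sigma_{vw})$ satisfy $\sigma_{uv}\sigma_{uw}\sigma_{vw} = +1$. Concretely, for $\sigma_{vw} = +$ the algorithm queries the pairs $(+,+)$ and $(-,-)$, and for $\sigma_{vw} = -$ it queries $(+,-)$ and $(-,+)$. In each case the two queried pairs are disjoint, so they can be issued one after another, exactly as lines 9 and 11 of \cref{alg:t1_less_k_sub}.

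For $T$, I would drop the sign condition and recover the unsigned algorithm of Ref.~\cite{kallaugher_quantum_2022-1}.

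The proofs of the expectation and variance bounds carry over verbatim. They only use that each triangle of the counted class is detected when its closing edge arrives; the specific sign pattern plays no role.

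\textbf{Step 2: space and the choice of $k$.} Applying the theorem above to the class "balanced triangles", and to all triangles, gives the space bound $\mathcal{O}(m^{8/5}\Delta_E^{4/5} C^{-6/5}\log n\,\epsilon'^{-2}\log\delta^{-1})$ with $C = N$ and $C = T$ respectively. Since $\epsilon' = \Theta(\epsilon)$, the $\epsilon'^{-2}$ factor is absorbed into $\epsilon^{-2}$.

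\textbf{Step 3: ratio error.} With probability at least $1-\delta$ (union bound), we have $\hat N \in [(1-\epsilon')N,(1+\epsilon')N]$ and $\hat T \in [(1-\epsilon')T,(1+\epsilon')T]$. Hence
\[
\frac{1-\epsilon'}{1+\epsilon'}\,\frac{N}{T} \;\le\; \frac{\hat N}{\hat T} \;\le\; \frac{1+\epsilon'}{1-\epsilon'}\,\frac{N}{T}.
\]
With $\epsilon' = \epsilon/(2+\epsilon)$ we get $(1+\epsilon')/(1-\epsilon') = 1+\epsilon$ exactly, and $(1-\epsilon')/(1+\epsilon') = 1/(1+\epsilon) \ge 1-\epsilon$. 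This is exactly the "simple algebraic manipulation" alluded to at the start of this section.

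\textbf{Main obstacle.} The hard step is stating the space bound in terms of $T$ rather than $N$. The $N$-estimator's variance bounds, $(km)^2$ and $4N^{>k}\sqrt{k}\Delta_E/m^{3/2}$, must be compared against $(\epsilon' N)^2$, and this forces a dependence on $N^{-6/5}$.

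My first attempt would be to fix $k$ from $T$ as in line 1 of \cref{alg:t1_estimator_hybrid}. Then I would argue that the number of median-of-means repetitions needed for the $N$-estimator exceeds that of the $T$-estimator by at most the factor $(T/N)^2$. I would absorb this factor by assuming $N = \Theta(T)$, i.e.\ $\mathcal{B}_{\triangle}(G)$ bounded below by a constant. That is the regime in which balance estimation is meaningful, and it matches the paper's standing assumption that reasonable bounds on the counts are known.

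The alternative is to estimate whichever of $N$ or $U = T_2 + T_0$ is larger, and derive the other by subtraction from $\hat T$. This yields only an additive guarantee when one class is tiny. So I expect the clean multiplicative statement with $T$ in the bound to require the lower bound on $N/T$ made explicit at this step.
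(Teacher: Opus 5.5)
Your ratio step is the paper's: \cref{alg:bal_estimator_hybrid} runs its sub-estimators at precision $\epsilon/(2+\epsilon)$, union-bounds the failures, and returns a quotient. Your identity $(1+\epsilon')/(1-\epsilon')=1+\epsilon$ is exactly the ``simple algebraic manipulation'' the paper alludes to. The decomposition differs. The paper runs three hybrid estimators at failure probability $\delta/3$ each: the $T_1$ estimator of Section~\ref{sec:hybrid_alg}, the unsigned estimator of Ref.~\cite{kallaugher_quantum_2022-1} on the positive edges only (for $T_3$), and the unsigned estimator ignoring signs (for $T$). It then returns $(R_1+R_3)/R$. You instead estimate $N=T_1+T_3$ directly via the sign-product test. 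This saves an estimator and, more importantly, weakens the hypothesis needed (see below). Your claim that the analysis carries over verbatim is right, but for a more specific reason than the one you give. Because both sign pairs are queried on every sampled edge, a $\perp$ outcome removes every stored entry $(u,z,\cdot)$ with $z\in\{v,w\}$, regardless of sign. Survival probabilities are therefore $(1-1/k)^{d_{uvw}}$ with the full degree, exactly as in the unsigned algorithm. The classical half must then use the same full-degree weights, so that the two halves sum to $N$.

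The obstacle you flag is genuine, and the paper does not resolve it either. By the theorem for $T_1$, the paper's calls for $R_1$ and $R_3$ cost $\mathcal{O}(m^{8/5}\Delta_E^{4/5}T_1^{-6/5}\cdots)$ and $\mathcal{O}(m^{8/5}\Delta_E^{4/5}T_3^{-6/5}\cdots)$. Its bound in terms of $T$ therefore silently requires both $T_1=\Omega(T)$ and $T_3=\Omega(T)$. That fails, for example, on signed graphs whose positive subgraph has $T_3=o(T)$ while $\mathcal{B}_{\triangle}(G)$ stays constant. Your route needs only $T_1+T_3=\Omega(T)$, i.e.\ $\mathcal{B}_{\triangle}(G)$ bounded below by a constant, which is strictly weaker. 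With that hypothesis stated explicitly, your argument is a complete proof, and a better one than the paper's.

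Keep $k$ tuned to $N$ as in your Step~2 rather than to $T$. The overhead is then $(T/N)^{6/5}=\mathcal{B}_{\triangle}(G)^{-6/5}$ rather than $(T/N)^2$. The clean unconditional statement is then the displayed bound with $T$ replaced by $T_1+T_3$. Without some such hypothesis it is doubtful that the corollary holds as literally written. A multiplicative guarantee must separate $\mathcal{B}_{\triangle}(G)=0$ from $\mathcal{B}_{\triangle}(G)=1/T$, and nothing in either argument achieves that within a budget depending only on $T$.
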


Let $\mathtt{T\_Estimator\_Hybrid}$ be the algorithm described by Lemma 11 in Ref.~\cite{kallaugher_quantum_2022-1}.
    
    \begin{algorithm}[H]
        \caption{Balance Estimator (Hybrid)}\label{alg:bal_estimator_hybrid}
        \begin{algorithmic}[1]
            \Statex \textbf{Procedure} $\mathtt{BalanceEstimator\_Hybrid(E, \epsilon, \delta)}$
            \Statex \textbf{Input}: $S$ signed edge stream; $\epsilon, \delta \in (0,1]$ error params.
            \Statex \textbf{Output}: Estimate $\overline{\mathcal{B}}_{\triangle}^H(G)$.
            \vspace{1mm}
            \Statex Run the following sub-routines in parallel:
            \State $R_1 \gets \mathtt{T_1\_Estimator\_Hybrid(S, \epsilon/(2+\epsilon), \delta/3)}$
            \State $R_3 \gets \mathtt{T\_Estimator\_Hybrid(S, \epsilon/(2+\epsilon), \delta/3)}$\Comment{Dropping $-1$ edges}
            \State $R \gets \mathtt{T\_Estimator\_Hybrid(S, \epsilon/(2+\epsilon), \delta/3)}$ \Comment{Ignoring signs}

            \State \textbf{return} $(R_1 + R_3)/R$
        \end{algorithmic}
    \end{algorithm}

As a comparison to the hybrid algorithm, we also use a purely classical algorithm to estimate $\mathcal{B}_{\triangle}(G)$. This is a slight variation to \cref{alg:T1_estimator_classical}.

\begin{algorithm}[H]
        \caption{Balance Estimator (Purely Classical)}\label{alg:bal_estimator_classical}
    \begin{algorithmic}[1]
        \Statex \textbf{Procedure} $\mathtt{Bal\_Estimator\_Subroutine\_FullyClassical(E; h_E,h_V;p_E,p_V)}$
        \Statex \textbf{Input}: $E$ signed edge stream;  $h_E,h_V$ probabilistic hashes; $p_E,p_V \in (0,1]$.
        \Statex \textbf{Output}: Estimate $\overline{\mathcal{B}}_{\triangle}^C(G)$.
        \vspace{1mm}
        \State $R_{bal}, R_{unbal} \gets 0, S \gets \{:\}$

        \For{$(vw,\sigma_{vw}) \in E$:}
            \For{$u \in V$:}
                \If{$h_V(u) = 1$ and $uv, uw \in S.keys()$} \Comment{Found a triangle, check for correct type}
                    \If{$\sigma_{vw} = +$}
                        \If{($S[uv], S[uw] = -1$) or ($S[uv], S[uw] = +1$)}
                            \State $R_{bal} \gets R_{bal} + 1/\left( p_V p_E^2 \right)$ \Comment{Triangle type $T_1$ or $T_3$}
                        \Else
                            \State $R_{unbal} \gets R_{unbal} + 1/\left( p_V p_E^2 \right)$ \Comment{Triangle type $T_0$ or $T_2$}
                        \EndIf
                    \Else
                        \If{($S[uv], S[uw] = -1$) or ($S[uv], S[uw] = +1$)}
                            \State $R_{unbal} \gets R_{unbal} + 1/\left( p_V p_E^2 \right)$ \Comment{Triangle type $T_0$ or $T_2$}
                        \Else
                            \State $R_{bal} \gets R_{bal} + 1/\left( p_V p_E^2 \right)$ \Comment{Triangle type $T_1$ or $T_3$}
                        \EndIf
                    \EndIf
                \EndIf
            \EndFor 
            \If{$h_E(vw) = 1$ and ($h_V(v) = 1$ or $h_V(w) = 1$)}
                \State $S[vw] \gets \sigma_{vw}$
            \EndIf
        \EndFor
        \State \textbf{return} $R_{bal} / (R_{bal} + R_{unbal})$
    \end{algorithmic}
\end{algorithm}

\begin{conjecture}
    There is a polynomial space advantage to the hybrid quantum-classical triangle balance algorithm compared with state-of-the-art purely classical estimators.
\end{conjecture}

The motivation for the conjecture is that the hybrid algorithm is composed of 3 subroutines all with space advantage over their purely-classical counterparts. However, it is not clear if there is a better purely-classical algorithm to estimate balance that does not rely on explicitly calculating balanced and unbalanced triangles.

\subsection{The Distributed Computing Setup}

We now propose a distributed computing framework to implement the hybrid algorithm. Recall that distributed quantum computing (DQC) aims to scale quantum computation by leveraging quantum and/or classical communication links among multiple quantum devices~\cite{boschero2025distributed}. A central challenge in standard DQC architectures is the realization of distributed quantum operations, such as quantum gates acting across different devices. In contrast, our framework does not require any distributed quantum operations. Instead, each small quantum device operates independently, and the resulting classical outputs are communicated to a classical processor for post-processing to obtain the final estimate. This structure renders the proposed algorithm \emph{perfectly distributable} in the sense of Ref.~\cite{caleffi2024distributed}. Our approach is a form of embarrassingly parallelizable MapReduce paradigm, as depicted in \cref{fig:distri_schematic}. In the following, we describe the quantum and classical resources required and detail the implementation of the hybrid algorithm within this distributed setting.

\subsubsection{Computational resources and procedure}
For a fixed sign graph $G$ on $n$ vertices, let $\mathcal{B}_{\triangle}$ denote the true triangular balance. Then  given $0<\epsilon, \delta<1$ the hybrid algorithm estimates $\mathcal{B}_{\triangle}$ with precision $\epsilon \mathcal{B}_{\triangle}$ with probability $1-\delta$. Fix constants $N_Q = \Theta\left(\left(km/T\epsilon \right)^2\log \frac{1}{\delta} \right)$ and $N_C = \Theta\left(\frac{m^{3/2}\Delta_E}{T\sqrt{k}\epsilon^2}\log \frac{1}{\delta}\right)$. The DC setup requires $3N_Q$ small-scale QPUs and $3N_C$ CPUs. Here $m$ is an approximate number of edges in the stream, $T$ is the approximate number of triangles in $G,$ and $k\approx T^{2/5}\Delta_E^{2/5}/m^{1/5}.$ 

\begin{figure}[!t]
    \centering
    \includegraphics[width=0.90\linewidth]{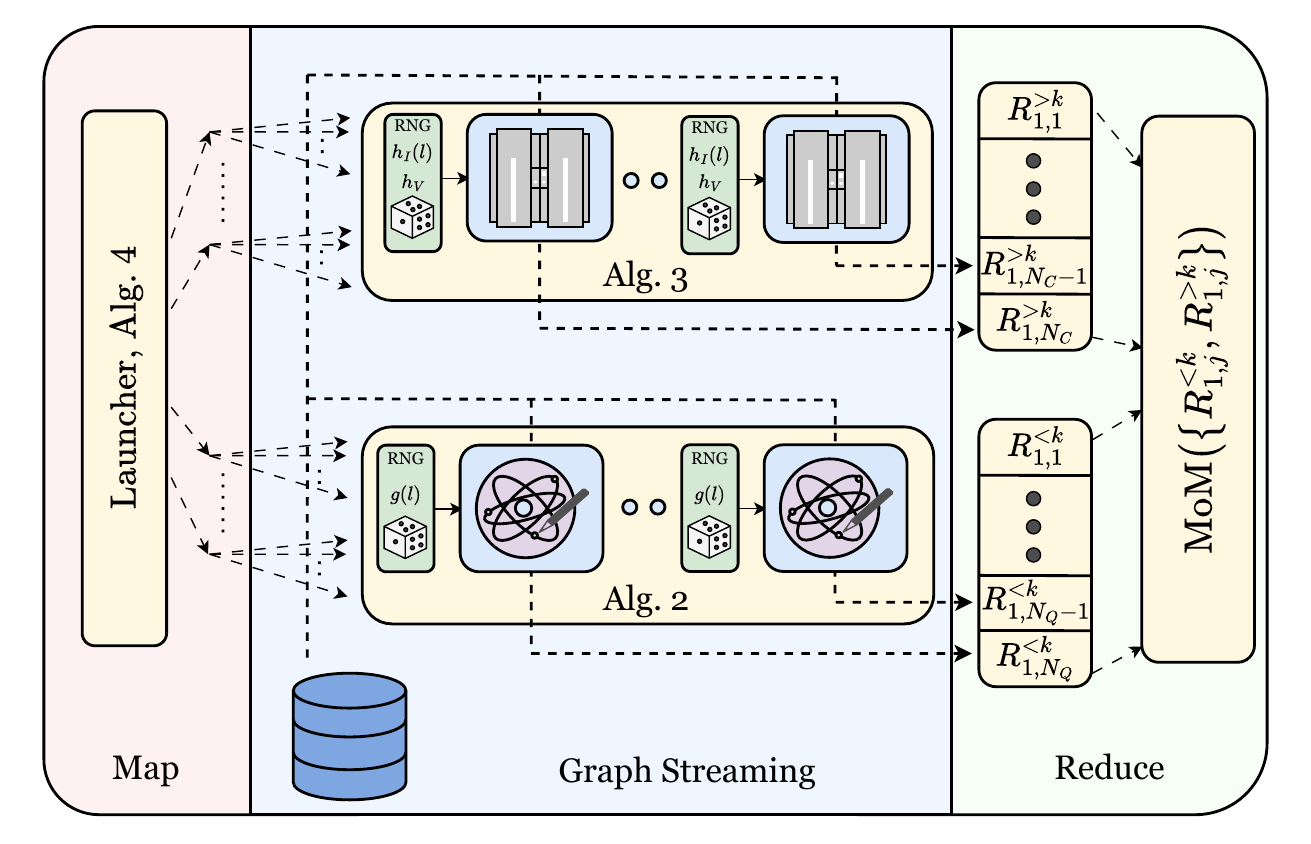}
    \caption{\textbf{A Schematic for Signed Triangle Estimation with Distributed Resources.} During the \textit{mapping}, based on Alg.~\ref{alg:t1_estimator_hybrid}, we allocate QPUs and CPUs with their respective initial (empty) sketchpads. During the \textit{Graph Streaming}, each edge is passed to these distributed CPUs and QPUs. At step $\ell$ in the stream, based on the hash functions $ g(\ell) $, $ h_I(\ell) $ and $h_V$, each CPU or QPU applies a \texttt{query} (see \cref{fig:q_sketch_circ}). Each CPU and QPU will terminate asynchronously and provide their estimate. During the \textit{reduction}, we calculate the median-of-means (MoM in diagram) based on parameters $\delta$ and $\epsilon$.}
    \label{fig:distri_schematic}
\end{figure}

\begin{enumerate}
    \item Quantum resources: Each QPU is supported with a classical register that implements the hash function $h_E$ independently for each incoming edge $e$ such that $h_E(e)=1$ with probability $1/k$ and $0$ otherwise. There is a   $(2\ceil{\log_2 n}+2)$-qubit quantum register, which implements the sketchpad. Indeed, the working qubit register is formed by $2\ceil{\log_2n}$ qubits, and there are two one-qubit registers used for scratch-qubit and the last one stores the sign of an incoming edge in the stream. Each QPU is also supported by a measurement device which performs queries to the quantum sketchpad as described by Algorithm \ref{alg:t1_less_k_sub}.

    \item Classical resources: Each CPU is supported by a hash function $h_V$ such that $\mathbb{E}[h_V(u)]=1/\sqrt{km}$ for each vertex $u$ of $G.$ In addition, it is associated with a sketchpad memory register which stores an incoming signed edge and probabilistically updated as described by Algorithm \ref{alg:t1_greater_k_sub}.
\end{enumerate}

Now we discuss the working procedure of the distributed computing setup as illustrated by a schematic diagram in Figure \ref{fig:distri_schematic}. The overall procedure approximates three values in parallel: $R_1$, $R_3$, and $R$. By outputting $(R_1 + R_3)/R$, we get an approximation of $\mathcal{B}_{\triangle}$.

A central classical organizer is in charge of by preparing $3N_Q$ QPUs and $3N_C$ CPUs. Each of these requires their own independent randomness and is in charge of calculating an estimate for one of the three values above. Roughly speaking, $N_Q$ QPUs and $N_C$ CPUs contribute to the approximation of $R_1$, $N_Q$ QPUs and $N_C$ CPUs contribute to the approximation of $R_3$, and $N_Q$ QPUs and $N_C$ CPUs contribute to the approximation of $R$. As edges are streamed, they are broadcast to every processor (quantum and classical) simultaneously. Once all the devices (quantum and classical) are done, they report their values to the central processor where classical \texttt{Median-of-Means} post-processing is performed to obtain the three individual estimates.

To make this process more precise, we focus on the $R_1$ approximation (see \cref{fig:distri_schematic}). For each $j \in [N_C]$, a CPU independently produces $\Rgk_{1,j}$ which is returned to the central organizer. Let $N_{C,\epsilon} := \frac{m^{3/2}\Delta_E}{T_1\sqrt{k}\epsilon^2}$ and $N_{C,\delta} := \log \frac{1}{\delta}$ such that $N_C = N_{C,\epsilon} \cdot N_{C,\delta}$. We bucket the first $N_{C,\epsilon}$ values to compute

\begin{equation}
    \widehat{\Rgk_{1,1}} := \mathtt{mean}\left( \Rgk_{1,1}, \dots, \Rgk_{1,N_{C,\epsilon}} \right).
\end{equation}

\noindent Next, compute $\widehat{\Rgk_{1,2}}$ by bucketing and taking the mean of the next $N_{C,\epsilon}$ values $\left\{\Rgk_{1,N_{C,\epsilon} + 1}, \dots, \Rgk_{1,2N_{C,\epsilon}}\right\}$. Proceed until we have averages $\widehat{\Rgk_{1,1}}, \dots, \widehat{\Rgk_{1,N_{C,\delta}}}$. From here, we attain the final classical estimate of $\Tgk_1$ as

\begin{equation}
    \Rgk_1 := \mathtt{median}\left(\widehat{\Rgk_{1,1}}, \dots, \widehat{\Rgk_{1,N_{C,\delta}}}\right).
\end{equation}

\noindent $\Rgk_1$ is then a good estimator for $\Tgk_1$.

A similar post-processing is done with the quantum outputs. First, divide the $N_Q$ QPUs in charge of estimating $\Rlk_1$ into $N_{Q,\epsilon} := \left(km/T_1\epsilon \right)^2$ buckets, each producing an average $\widehat{\Rlk_{1,j}}$ for $j \in [N_{Q,\delta}]$, where $N_{Q,\delta} := \log\frac{1}{\delta}$. Take the median of these values to produce $\Rlk_1$ which is an estimate of $\Tlk_1$. By summing $\Rgk_1$ and $\Rlk_1$, we obtain a good estimate of $T_1$. This same process is repeated in parallel to obtain estimates of $R_3$ and $R$. To finish, we output $(R_1 + R_3)/R$ to achieve an estimate $\overline{\mathcal{B}}_{\triangle}^H$ of $\mathcal{B}_{\triangle}$.

\subsection{Numerical Simulations}
\subsubsection{Simulating the Quantum Sketchpad}
In order to test the hybrid algorithm \cref{alg:bal_estimator_hybrid} on real graphs, we need to implement the quantum operations. The space required for any non-trivial example is much too large to be run on current devices so this process is simulated. This is done by a classical sketch whose operations obey the probabilities described in the \cref{subsec:T1_q_estimator}. In particular, the simulated sketch has the three methods $\create$, $\Insert$, $\queryedge$ that mimics the probabilistic behavior of the quantum sketch. See \cref{sec:python_code} for the code implementation.

\subsubsection{Results}
We run \cref{alg:bal_estimator_hybrid,alg:bal_estimator_classical} on a set of \ER graphs that additionally have randomly assigned edge signs. These graphs were created with two probabilities: the edge probability $p_{e}$ and for each edge, the probability that the sign is positive $p_+$. For each tuple of parameters  $(n,p_e,p_+) \in \{10, 20, 30, 40, 50\} \times  \{0.5, 0.75\}\times \{0.25, 0.5, 0.75\}$, 5 random graphs were created. \cref{tab:numerics_30,tab:numerics_40,tab:numerics_50} shows the $n \in \{30,40,50\}$ instances generated and used in these experiments. For conciseness, the $n \in \{10,20\}$ instances are omitted but the algorithms' performances are similar to the larger graph sizes.

Each algorithm was run with error parameters $\epsilon,\delta = 0.1$. For a fixed graph, let $\overline{\mathcal{B}}_{\triangle}^H$ and $\overline{\mathcal{B}}_{\triangle}^C$ be the outputs of \cref{alg:bal_estimator_hybrid,alg:bal_estimator_classical}, respectively, and $\mathcal{B}_\triangle$ the true value. Then using these parameters, both $\overline{\mathcal{B}}_{\triangle}^H$ and $\overline{\mathcal{B}}_{\triangle}^C$ approximate the $\mathcal{B}_\triangle$ up to $0.1\mathcal{B}_\triangle$ accuracy with probability at least $0.9$. As can be seen in \cref{fig:bal_estimator_performance}, both estimators outperform the $\epsilon = 0.1$ error threshold. Moreover, the hybrid algorithm performs moderately outperforms the purely-classical signed JK algorithm.

    \begin{figure}
        \centering
        \includegraphics[width=0.5\linewidth]{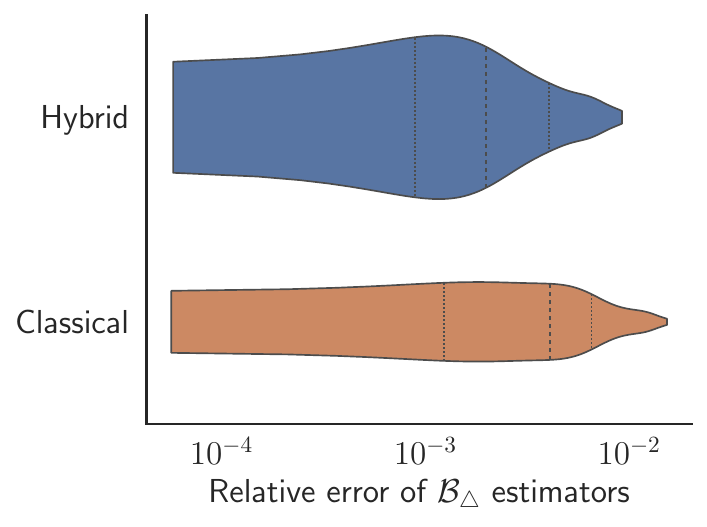}
        \caption{\textbf{Violin Plot Comparing the Performance of Classical and Quantum-Classical Triangular Balance Estimators.} Comparison between $\overline{\mathcal{B}}_{\triangle}^H$ and $\overline{\mathcal{B}}_{\triangle}^C$ outputted by \cref{alg:bal_estimator_hybrid} versus \cref{alg:bal_estimator_classical}, respectively. These algorithms are tested over random \ER instances of sizes $n \in \{30,40,50\}$. The relative error is computed as $|\overline{\mathcal{B}}_{\triangle}^A - \mathcal{B}_{\triangle}|/\mathcal{B}_{\triangle}$ for $A \in \{H,C\}$. Both algorithms are ran with accuracy parameter $\epsilon = 0.1$ and the performance is better than expected. The hybrid algorithm slightly outperforming the purely classical but does experience higher variance. This is to be expected, as there are two sources of error for the hybrid algorithm, on from the quantum and one from the classical estimators.}
        \label{fig:bal_estimator_performance}
    \end{figure}

\begin{table}
\caption{
            \textbf{Numerical Results for Signed Erd\H{o}s--R\'enyi Graphs with $n=30$ Nodes.}
            A bolded value corresponds to that estimate being closer to the true value
            $\mathcal{B}_{\triangle} = (T_1 + T_3)/\sum_i T_i$. In this table, the $\overline{\mathcal{B}}_{\triangle}^A$ values are truncated to three decimals here for compactness and so it may appear that they are equivalent for some rows.
            }
\label{tab:numerics_30}
\begin{tabular}{ccccccccccccc}
\toprule
$p_E$ & $p_+$ & $E_+$ & $E_-$ & $\Delta_E$ & $\Delta_V$ & $T_0$ & $T_1$ & $T_2$ & $T_3$ & $\mathcal{B}_{\triangle}$ & $\overline{\mathcal{B}}_{\triangle}^C$ & $\overline{\mathcal{B}}_{\triangle}^H$ \\
\midrule
0.75 & 0.5 & 148 & 171 & 21 & 233 & 215 & 676 & 557 & 147 & 0.516 & 0.513 & \textbf{0.517} \\
0.75 & 0.25 & 75 & 260 & 23 & 255 & 852 & 768 & 202 & 19 & 0.427 & 0.428 & \textbf{0.428} \\
0.5 & 0.25 & 42 & 156 & 13 & 79 & 185 & 165 & 26 & 5 & 0.446 & 0.445 & \textbf{0.445} \\
0.75 & 0.5 & 150 & 195 & 23 & 255 & 355 & 861 & 633 & 158 & 0.508 & 0.506 & \textbf{0.508} \\
0.5 & 0.5 & 108 & 111 & 13 & 101 & 72 & 199 & 196 & 61 & 0.492 & \textbf{0.492} & 0.492 \\
0.5 & 0.5 & 107 & 116 & 12 & 86 & 70 & 224 & 176 & 53 & 0.530 & 0.523 & \textbf{0.530} \\
0.5 & 0.25 & 62 & 161 & 14 & 104 & 221 & 261 & 77 & 14 & 0.480 & \textbf{0.479} & 0.479 \\
0.75 & 0.5 & 171 & 161 & 24 & 271 & 189 & 661 & 730 & 239 & 0.495 & 0.497 & \textbf{0.494} \\
0.5 & 0.25 & 51 & 164 & 12 & 80 & 221 & 202 & 63 & 5 & 0.422 & 0.420 & \textbf{0.421} \\
0.75 & 0.25 & 79 & 239 & 20 & 218 & 697 & 648 & 221 & 26 & 0.423 & 0.427 & \textbf{0.427} \\
0.5 & 0.5 & 101 & 110 & 12 & 86 & 71 & 174 & 171 & 53 & 0.484 & 0.481 & \textbf{0.482} \\
0.75 & 0.5 & 152 & 165 & 21 & 215 & 201 & 612 & 561 & 195 & 0.514 & \textbf{0.514} & 0.512 \\
0.5 & 0.5 & 94 & 114 & 12 & 72 & 63 & 173 & 138 & 42 & 0.517 & 0.522 & \textbf{0.520} \\
0.75 & 0.75 & 247 & 71 & 22 & 237 & 22 & 187 & 631 & 765 & 0.593 & \textbf{0.593} & 0.593 \\
0.5 & 0.25 & 50 & 168 & 15 & 103 & 230 & 241 & 57 & 10 & 0.467 & 0.472 & \textbf{0.464} \\
0.5 & 0.75 & 163 & 64 & 14 & 106 & 15 & 90 & 262 & 211 & 0.521 & 0.518 & \textbf{0.523} \\
0.5 & 0.75 & 150 & 57 & 11 & 70 & 10 & 86 & 167 & 164 & 0.585 & 0.585 & \textbf{0.585} \\
0.5 & 0.25 & 45 & 151 & 11 & 68 & 172 & 132 & 45 & 3 & 0.384 & 0.387 & \textbf{0.386} \\
0.75 & 0.25 & 100 & 233 & 24 & 267 & 632 & 795 & 345 & 51 & 0.464 & 0.468 & \textbf{0.465} \\
0.5 & 0.75 & 166 & 51 & 12 & 84 & 5 & 61 & 202 & 217 & 0.573 & 0.566 & \textbf{0.570} \\
0.75 & 0.75 & 254 & 78 & 23 & 284 & 28 & 257 & 689 & 813 & 0.599 & \textbf{0.599} & 0.598 \\
0.75 & 0.75 & 224 & 92 & 21 & 228 & 37 & 290 & 672 & 546 & 0.541 & 0.534 & \textbf{0.543} \\
0.75 & 0.5 & 170 & 137 & 20 & 240 & 122 & 470 & 569 & 248 & 0.510 & 0.513 & \textbf{0.508} \\
0.5 & 0.75 & 147 & 54 & 12 & 75 & 4 & 66 & 182 & 142 & 0.528 & 0.525 & \textbf{0.529} \\
0.75 & 0.75 & 255 & 76 & 22 & 247 & 22 & 226 & 720 & 816 & 0.584 & 0.587 & \textbf{0.583} \\
0.75 & 0.25 & 95 & 223 & 20 & 214 & 520 & 702 & 301 & 36 & 0.473 & \textbf{0.474} & 0.470 \\
0.75 & 0.75 & 255 & 76 & 23 & 263 & 23 & 217 & 739 & 819 & 0.576 & 0.572 & \textbf{0.574} \\
0.5 & 0.5 & 97 & 125 & 14 & 94 & 94 & 221 & 192 & 46 & 0.483 & \textbf{0.483} & 0.483 \\
0.75 & 0.25 & 87 & 244 & 23 & 245 & 719 & 799 & 264 & 23 & 0.455 & 0.456 & \textbf{0.456} \\
0.5 & 0.75 & 161 & 63 & 14 & 96 & 10 & 91 & 238 & 211 & 0.549 & 0.546 & \textbf{0.549} \\
\bottomrule
\end{tabular}
\end{table}
\begin{table}
\caption{
            \textbf{Numerical Results for Signed Erd\H{o}s--R\'enyi Graphs with $n=40$ Nodes.}
            A bolded value corresponds to that estimate being closer to the true value
            $\mathcal{B}_{\triangle} = (T_1 + T_3)/\sum_i T_i$. In this table, the $\overline{\mathcal{B}}_{\triangle}^A$ values are truncated to three decimals here for compactness and so it may appear that they are equivalent for some rows.
            }
\label{tab:numerics_40}
\begin{tabular}{ccccccccccccc}
\toprule
$p_E$ & $p_+$ & $E_+$ & $E_-$ & $\Delta_E$ & $\Delta_V$ & $T_0$ & $T_1$ & $T_2$ & $T_3$ & $\mathcal{B}_{\triangle}$ & $\overline{\mathcal{B}}_{\triangle}^C$ & $\overline{\mathcal{B}}_{\triangle}^H$ \\
\midrule
0.5 & 0.25 & 114 & 297 & 18 & 189 & 549 & 636 & 240 & 23 & 0.455 & \textbf{0.453} & 0.459 \\
0.75 & 0.75 & 443 & 156 & 31 & 480 & 74 & 669 & 1901 & 1810 & 0.557 & \textbf{0.557} & 0.560 \\
0.75 & 0.5 & 284 & 306 & 30 & 450 & 602 & 1641 & 1513 & 505 & 0.504 & 0.505 & \textbf{0.503} \\
0.75 & 0.5 & 301 & 298 & 30 & 455 & 530 & 1681 & 1694 & 562 & 0.502 & 0.503 & \textbf{0.503} \\
0.5 & 0.75 & 315 & 73 & 17 & 151 & 6 & 99 & 436 & 667 & 0.634 & \textbf{0.635} & 0.637 \\
0.5 & 0.75 & 274 & 104 & 16 & 156 & 19 & 204 & 475 & 424 & 0.560 & \textbf{0.560} & 0.563 \\
0.5 & 0.25 & 98 & 311 & 17 & 171 & 631 & 584 & 176 & 19 & 0.428 & 0.429 & \textbf{0.429} \\
0.5 & 0.25 & 99 & 287 & 16 & 145 & 520 & 479 & 176 & 23 & 0.419 & 0.421 & \textbf{0.419} \\
0.75 & 0.75 & 443 & 147 & 29 & 443 & 60 & 597 & 1796 & 1778 & 0.561 & 0.559 & \textbf{0.561} \\
0.75 & 0.5 & 293 & 286 & 30 & 439 & 503 & 1522 & 1460 & 543 & 0.513 & \textbf{0.512} & 0.513 \\
0.75 & 0.25 & 123 & 450 & 29 & 431 & 1898 & 1562 & 424 & 44 & 0.409 & 0.407 & \textbf{0.408} \\
0.5 & 0.5 & 217 & 199 & 18 & 177 & 168 & 508 & 583 & 193 & 0.483 & 0.481 & \textbf{0.482} \\
0.75 & 0.5 & 300 & 285 & 30 & 418 & 483 & 1502 & 1617 & 565 & 0.496 & 0.499 & \textbf{0.496} \\
0.75 & 0.75 & 440 & 138 & 29 & 395 & 39 & 528 & 1677 & 1716 & 0.567 & 0.568 & \textbf{0.566} \\
0.75 & 0.25 & 134 & 425 & 29 & 445 & 1587 & 1553 & 442 & 56 & 0.442 & 0.438 & \textbf{0.444} \\
0.75 & 0.25 & 155 & 428 & 28 & 424 & 1625 & 1789 & 656 & 82 & 0.451 & 0.454 & \textbf{0.448} \\
0.75 & 0.75 & 466 & 130 & 30 & 436 & 36 & 487 & 1775 & 2134 & 0.591 & \textbf{0.591} & 0.591 \\
0.5 & 0.5 & 186 & 198 & 18 & 166 & 167 & 416 & 458 & 129 & 0.466 & \textbf{0.466} & 0.465 \\
0.5 & 0.25 & 87 & 290 & 17 & 172 & 519 & 470 & 142 & 12 & 0.422 & 0.424 & \textbf{0.424} \\
0.5 & 0.75 & 282 & 102 & 17 & 160 & 17 & 180 & 521 & 448 & 0.539 & 0.544 & \textbf{0.542} \\
0.5 & 0.75 & 297 & 109 & 18 & 171 & 27 & 226 & 586 & 590 & 0.571 & 0.573 & \textbf{0.570} \\
0.5 & 0.5 & 189 & 208 & 18 & 160 & 172 & 506 & 476 & 128 & 0.495 & \textbf{0.493} & 0.496 \\
0.75 & 0.25 & 148 & 423 & 29 & 439 & 1584 & 1636 & 585 & 64 & 0.439 & 0.436 & \textbf{0.438} \\
0.5 & 0.25 & 92 & 317 & 19 & 200 & 644 & 576 & 174 & 19 & 0.421 & 0.419 & \textbf{0.421} \\
0.75 & 0.75 & 441 & 144 & 29 & 421 & 59 & 581 & 1743 & 1760 & 0.565 & 0.567 & \textbf{0.564} \\
0.5 & 0.5 & 197 & 195 & 16 & 167 & 159 & 465 & 485 & 150 & 0.488 & \textbf{0.489} & 0.493 \\
0.75 & 0.5 & 306 & 282 & 30 & 426 & 445 & 1545 & 1637 & 611 & 0.509 & \textbf{0.509} & 0.507 \\
0.75 & 0.25 & 164 & 421 & 29 & 440 & 1528 & 1796 & 719 & 96 & 0.457 & 0.460 & \textbf{0.457} \\
0.5 & 0.5 & 204 & 186 & 20 & 189 & 141 & 412 & 501 & 175 & 0.478 & \textbf{0.478} & 0.477 \\
0.5 & 0.75 & 291 & 94 & 15 & 130 & 14 & 141 & 504 & 514 & 0.558 & 0.561 & \textbf{0.556} \\
\bottomrule
\end{tabular}
\end{table}
\begin{table}
\caption{
            \textbf{Numerical Results for Signed Erd\H{o}s--R\'enyi Graphs with $n=50$ Nodes.}
            A bolded value corresponds to that estimate being closer to the true value
            $\mathcal{B}_{\triangle} = (T_1 + T_3)/\sum_i T_i$. In this table, the $\overline{\mathcal{B}}_{\triangle}^A$ values are truncated to three decimals here for compactness and so it may appear that they are equivalent for some rows.
            }
\label{tab:numerics_50}
\begin{tabular}{ccccccccccccc}
\toprule
$p_E$ & $p_+$ & $E_+$ & $E_-$ & $\Delta_E$ & $\Delta_V$ & $T_0$ & $T_1$ & $T_2$ & $T_3$ & $\mathcal{B}_{\triangle}$ & $\overline{\mathcal{B}}_{\triangle}^C$ & $\overline{\mathcal{B}}_{\triangle}^H$ \\
\midrule
0.75 & 0.5 & 453 & 479 & 37 & 683 & 1251 & 3250 & 3184 & 957 & 0.487 & \textbf{0.487} & 0.487 \\
0.5 & 0.5 & 301 & 297 & 21 & 229 & 271 & 819 & 894 & 304 & 0.491 & 0.494 & \textbf{0.494} \\
0.5 & 0.25 & 137 & 438 & 21 & 245 & 887 & 838 & 236 & 31 & 0.436 & \textbf{0.438} & 0.438 \\
0.5 & 0.75 & 444 & 160 & 20 & 217 & 39 & 348 & 1018 & 905 & 0.542 & 0.539 & \textbf{0.543} \\
0.75 & 0.5 & 453 & 460 & 37 & 669 & 1063 & 3020 & 3008 & 975 & 0.495 & \textbf{0.493} & 0.499 \\
0.5 & 0.25 & 152 & 456 & 22 & 232 & 988 & 1012 & 335 & 45 & 0.444 & 0.451 & \textbf{0.444} \\
0.75 & 0.25 & 222 & 697 & 35 & 610 & 3580 & 3447 & 1080 & 116 & 0.433 & 0.428 & \textbf{0.435} \\
0.75 & 0.5 & 441 & 473 & 33 & 610 & 1127 & 3191 & 2854 & 927 & 0.508 & 0.511 & \textbf{0.510} \\
0.5 & 0.5 & 313 & 310 & 21 & 221 & 281 & 986 & 961 & 315 & 0.512 & 0.515 & \textbf{0.509} \\
0.75 & 0.5 & 452 & 462 & 37 & 666 & 1034 & 3093 & 3039 & 925 & 0.497 & 0.502 & \textbf{0.496} \\
0.5 & 0.75 & 455 & 154 & 22 & 261 & 19 & 354 & 1041 & 981 & 0.557 & 0.558 & \textbf{0.557} \\
0.5 & 0.5 & 275 & 353 & 21 & 260 & 473 & 1104 & 839 & 220 & 0.502 & 0.505 & \textbf{0.504} \\
0.75 & 0.25 & 234 & 689 & 34 & 648 & 3462 & 3519 & 1160 & 152 & 0.443 & 0.442 & \textbf{0.443} \\
0.75 & 0.75 & 661 & 226 & 33 & 598 & 147 & 1038 & 3126 & 3120 & 0.560 & \textbf{0.559} & 0.559 \\
0.75 & 0.25 & 229 & 703 & 39 & 726 & 3684 & 3615 & 1195 & 108 & 0.433 & 0.429 & \textbf{0.434} \\
0.5 & 0.5 & 304 & 320 & 21 & 213 & 343 & 952 & 980 & 295 & 0.485 & 0.485 & \textbf{0.485} \\
0.5 & 0.25 & 148 & 455 & 21 & 234 & 1021 & 995 & 311 & 37 & 0.437 & 0.443 & \textbf{0.435} \\
0.75 & 0.5 & 449 & 472 & 37 & 706 & 1105 & 3179 & 3017 & 1001 & 0.503 & \textbf{0.506} & 0.508 \\
0.5 & 0.75 & 435 & 164 & 23 & 286 & 47 & 413 & 1023 & 893 & 0.550 & \textbf{0.550} & 0.547 \\
0.5 & 0.5 & 312 & 320 & 24 & 296 & 378 & 982 & 962 & 339 & 0.496 & 0.496 & \textbf{0.496} \\
0.5 & 0.75 & 450 & 153 & 20 & 225 & 37 & 328 & 1012 & 973 & 0.554 & 0.556 & \textbf{0.555} \\
0.5 & 0.25 & 134 & 466 & 21 & 220 & 1067 & 928 & 246 & 23 & 0.420 & 0.416 & \textbf{0.419} \\
0.75 & 0.25 & 238 & 665 & 36 & 647 & 3191 & 3331 & 1202 & 142 & 0.442 & 0.444 & \textbf{0.441} \\
0.5 & 0.25 & 155 & 457 & 20 & 252 & 1019 & 1017 & 336 & 39 & 0.438 & 0.438 & \textbf{0.438} \\
0.75 & 0.25 & 230 & 684 & 39 & 731 & 3429 & 3433 & 1192 & 126 & 0.435 & 0.436 & \textbf{0.435} \\
0.75 & 0.75 & 664 & 228 & 36 & 654 & 130 & 1091 & 3299 & 3075 & 0.549 & 0.549 & \textbf{0.549} \\
\bottomrule
\end{tabular}
\end{table}
\section{Conclusion}\label{sec:conclusion}

In this manuscript, we delineate a natural extension of the space-optimal classical algorithm for counting unsigned triangles into the signed case, where we need to count the four different types of signed triangles that can appear in a signed network stream. We then delineate a hybrid quantum-classical algorithm that provides a polynomial space advantage over the classical algorithm. 

Approximately counting the different types of triangles in a signed network has been tied to many kinds of important real-life network dynamics, through the lens of structural balance theory. In particular, we consider the task of estimating the balance of a sign network, tied to understanding conflict or discord in social networks, financial markets, biological systems, and more. 

Our work shows that quantum computers can be utilized in the monitoring and interpretation of large scale networks, achieving quantum advantage over the best possible classical algorithm for estimating central parameters such as balance. 

\bibliographystyle{IEEEtran}
\bibliography{refs}

\appendix

\section{Classical Simulation of Quantum Sketch}\label{sec:python_code}
Here we provide the Python code that implements the quantum sketch functionality.

\begin{lstlisting}[language=Python, caption=Python implementation of the classical simulation of the quantum Sketch funcionality.]
from random import random
class QuantumSketch:
    def __init__(self, items: set):
        self.items = items
        self.underlying = []

    def query_pair(self, x, y):
        if (x in self.items) and (y in self.items):
            if random() < 2.0 / len(self.items):
                self.items = []
                return +1
            else:
                self.items.remove(x)
                self.items.remove(y)
                return 0
            
        elif (x in self.items) or (y in self.items):
            if random() < 1.0 / len(self.items):
                self.items = []
                if random() < 0.5:
                    return -1
                else:
                    return +1
            else:
                self.items.discard(x)
                self.items.discard(y)
                return 0
        
        else:
            return 0
            
    def update(self, x, l):
        self.items.remove(l)
        self.items.add(x)
\end{lstlisting}

\section{Random Signed Graph Parameters Calculations}
Here we describe in more detail the graphs generated for the numerical experiments. We generate random \ER with $p_e$ the probability that an edge exists and $p_+$ the probability that a given edge has sign $+ 1$. These theoretical values can then be used when implementing and running the $T_1$ estimators, since these values are used in the number of resources required to run the algorithms themselves. For real-world graphs, these values can be estimated base on historical data. 

\subsection{$\abs{T_1}$ Expectation and Variance Calculations}

\begin{lemma}
    \begin{align}
        \Expect{\abs{T_1}} &= 3\binom{n}{3}p_e^3p_+(1-p_+)^2, \\
        \Var{\abs{T_1}} &= 3\binom{n}{3}p_e^3p_+(1-p_+)^2(1-3p_e^3p_+(1-p_+)^2) \\
        &\quad +2\binom{n}{2}\binom{n-2}{2}\left(p_e^5p_+(1-p_+)^3(1+3p_+)-9p_e^6p_+(1-p_+)^4\right)
    \end{align}
\end{lemma}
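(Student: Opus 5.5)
The plan is to write $\abs{T_1}$ as a sum of indicators over vertex triples and use linearity of expectation together with the standard variance decomposition for sums of dependent indicators. For each $3$-subset $t=\{a,b,c\}\subseteq V$, let $X_t$ be the indicator that all three edges $ab,bc,ca$ are present and exactly one of them has sign $+1$. Then $\abs{T_1}=\sum_t X_t$. Under the model, edges appear independently with probability $p_e$, and signs are assigned independently with $\prob{\sigma=+1}=p_+$. So
\[
q:=\prob{X_t=1}=p_e^3\cdot 3p_+(1-p_+)^2,
\]
where the factor $3$ counts which of the three edges is the positive one. Summing over the $\binom{n}{3}$ triples gives the expectation immediately.

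For the variance I will use
\[
\Var{\abs{T_1}}=\sum_t \Var{X_t}+\sum_{t\neq s}\Cov{X_t,X_s},
\]
with the second sum over ordered pairs. The diagonal terms are $q(1-q)$ each, which yields the first line of the claimed formula, $3\binom{n}{3}p_e^3p_+(1-p_+)^2\bigl(1-3p_e^3p_+(1-p_+)^2\bigr)$.

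For the off-diagonal terms, the key observation is that $X_t$ depends only on the presence and signs of the three edges of $t$. Two distinct triples share at most one edge, and if they share no edge then $X_t$ and $X_s$ are functions of disjoint independent variables, so their covariance vanishes. This holds even when the triples share a vertex. Hence only pairs sharing exactly one edge contribute. To count them, choose the shared edge $\{a,b\}$ in $\binom{n}{2}$ ways and an unordered pair of distinct apexes $\{c,d\}$ from the remaining vertices in $\binom{n-2}{2}$ ways; this gives the unordered pairs. Doubling for ordered pairs gives $2\binom{n}{2}\binom{n-2}{2}$ pairs, all with the same covariance.

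The main step is computing $\prob{X_t=X_s=1}$ for such a pair, whose union is $5$ edges; all five must be present, contributing $p_e^5$. I condition on the sign of the shared edge. If it is positive, all four remaining edges must be negative, giving $p_+(1-p_+)^4$. If it is negative, each triangle needs exactly one positive among its two private edges, giving $(1-p_+)\bigl(2p_+(1-p_+)\bigr)^2=4p_+^2(1-p_+)^3$. Adding these,
\[
\prob{X_t=X_s=1}=p_e^5\,p_+(1-p_+)^3\bigl[(1-p_+)+4p_+\bigr]=p_e^5p_+(1-p_+)^3(1+3p_+).
\]
The covariance is this minus $q^2=9p_e^6p_+^2(1-p_+)^4$, which yields the second line.

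The only place I expect friction is matching the subtracted term exactly. The direct computation produces $p_+^2$ in $q^2$, whereas the statement displays $p_+(1-p_+)^4$. I would double-check this against a small case, e.g.\ $n=4$ where the formula can be verified by enumeration, and record the $p_+^2$ version if that is what the computation confirms.
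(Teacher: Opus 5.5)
Your proof is correct and follows the paper's argument essentially step for step: indicators over vertex triples, vanishing covariance unless two triples share an edge, the count $\binom{n}{2}\binom{n-2}{2}$ of such unordered pairs, and the joint probability $p_e^5p_+(1-p_+)^3(1+3p_+)$ obtained by conditioning on the shared edge's sign (your case split is stated more cleanly than the paper's, whose prose garbles the sign conditions). Your suspicion about the subtracted term is right: the paper's own covariance line gives $9p_e^6p_+^2(1-p_+)^4$, and the square is dropped in the final display, so the printed formula is a typo. The printed version even yields a negative variance at $n=4$, $p_e=1$, $p_+=1/2$, so the $p_+^2$ version you derive is the correct statement.
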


\begin{proof}
    Consider three distinct vertices $u,v,w \in V$. The probability they form a triangle is $p_e^3$. Let $I_{uvw}$ be the indicator variable that is 1 if $uvw$ is a $T_1$. The probability that the edges are a single $-1$ and two $+1$s is $p_+(1-p_+)^2$. There are three such ways for this to occur. So, $\prob{I_{uvw} = 1} = 3p_+(1-p_+)^2$. Noting that there are $\binom{n}{3}$ such triples, we get

    \begin{equation}
        \Expect{\abs{T_1}} = 3\binom{n}{3}p_e^3p_+(1-p_+)^2
    \end{equation}

    Onto the variance. Let $p_{\tau} := \prob{I_{\tau}}$ for a triple $\tau$. The variance can be calculated as 

    \begin{equation}
        \Var{\abs{T_1}} = \sum_{\tau} \Var{I_\tau} + 2\sum_{\tau < \tau'} \Cov{I_\tau, I_{\tau'}}
    \end{equation}

    \noindent Since $I_\tau$ is Bernoulli, we have $\Var{I_\tau} = p_{\tau}(1-p_{\tau}^2)$. For the covariance, let $uvw$ and $xyz$ be two triangles. We need to understand what happens when they share vertices.

    \begin{itemize}
        \item When $\{u,v,w\} \cap \{x,y,z\} = \emptyset$, these triangles are independent and the covariance is 0.
        \item When $\abs{\{u,v,w\} \cap \{x,y,z\}} = 1$, these triangles are independent and the covariance is 0.
        \item Assume $\abs{\{u,v,w\} \cap \{x,y,z\}} = 1$. Without loss of generality, they share edge $(u,v)$ and $w \neq z$. If $\sigma_{uv} = -1$, then the remaining 4 edges all must have $+1$ sign, which occurs with probability $p_+^4$. On the other hand, if $\sigma_{uv} = -1$, then 1 of $\sigma_{uw},\sigma_{vw}$ must be $+1$ and one must be $-1$, and the  same for $\sigma_{uz}, \sigma_{vz}$. Each occurs with probability $p_+^2(1-p_+)^2$. Putting these together, we get

        \begin{equation}
            p_+(1-p_+)^4 + 4p_+^2(1-p_+)^3 = p_+(1-p_+)^3(1+3p_+).
        \end{equation}

        \item  When $\abs{\{u,v,w\} \cap \{x,y,z\}} = 3$, these triangles are the  same and the covariance is 0.
    \end{itemize}

    \noindent The only non-zero case is when they share a single edge which allows us to calculate

    \begin{align}
        \prob{I_\tau, I_{\tau'} =1} &= p_e^5p_+(1-p_+)^3(1+3p_+) \\
        \Rightarrow \Cov{I_\tau, I_{\tau'}} &= p_e^5p_+(1-p_+)^3(1+3p_+) - p_{\tau}^2 \\
        &=p_e^5p_+(1-p_+)^3(1+3p_+) - 9p_e^6p_+^2(1-p_+)^4.
    \end{align}

    \noindent Note that there are $\binom{n}{2}\binom{n-2}{2}$ ways that the pair of triangles share an edge. We can now calculate the full variance:

    \begin{align}
        \Var{\abs{T_1}} &= \binom{n}{3}p_{uvw}(1-p_{uvw}) + 2\binom{n}{2}\binom{n-2}{2}[p_e^5p_+(1-p_+)^3(1+3p_+) - p_{uvw}^2] \\
        &=\begin{aligned}
            &3\binom{n}{3}p_e^3p_+(1-p_+)^2(1-3p_e^3p_+(1-p_+)^2) \\
            &+2\binom{n}{2}\binom{n-2}{2}\left(p_e^5p_+(1-p_+)^3(1+3p_+)-9p_e^6p_+(1-p_+)^4\right)
        \end{aligned}
    \end{align}
\end{proof}

\subsection{$\Delone$ Expectation Calculation}
        Next we are interested in understanding $\Delone$, the maximum number of $T_1$ triangles that share an edge. Let $Y_e$ be the number of $T_1$s that share edge $e$. Then $\Delone = \max_e Y_e$. Therefore, we study $\Delone$ via $Y_e$.

        Fix edge $e = (u,v)$. We can further define the indicator $Y_{uvw}$ to be 1 when $uvw$ forms a $T_1$. The probability that $uw$ and $vw$ are both edges is $p_e^2$. If $\sigma_{uv} = +1$, then both of the other signs must be $-1$, which occurs with probability $(1-p_+)^2$. Otherwise, one of the edges is $+1$ and $-1$. These individually occur with probability $p_+(1-p_+)$ and there are 2 ways for this to occur. Therefore:

        \begin{align}
            \prob{Y_{uvw} = 1} &= p_e^2\left[\overbrace{p_+}^{\sigma_{uv} = +1}(1-p_+)^2 + \overbrace{(1-p_+)}^{\sigma_{uv} = -1}\cdot 2p_+(1-p_+)\right] \\
            &=3p_e^2p_+(1-p_+)^2
        \end{align}

        \noindent Let $q_{uvw} := \prob{Y_{uvw} = 1}$. Since edges and their signs are independent, $Y_{uv} \sim Bin(n-2,q_{uvw})$ which gives

        \begin{equation}
            \Expect{Y_{uv}} = (n-2)q_{uvw}, \quad \Var{Y_{uv}} = (n-2)q_{uvw}(1-q_{uvw}).
        \end{equation}

        Next, let $M$ be the random variable for the number of edges in the graph. This is another Binomial variable of size $\binom{n}{2}$ with success probability $p_e$. In particular, $\Expect{M} \approx \frac{n^2}{2}p_e$.

        Binomials are sub-Gaussian, satisfying 

        \begin{equation}
            \prob{Y_e - \Expect{Y_e} > t} \leq \exp(-t^2/2\Var{Y_e})
        \end{equation}

        \noindent Therefore via a simple union bound:

        \begin{align}
            \prob{\max_{i = 1, \dots, M}Y_i \geq \Expect{Y_i} + t} = \prob{\bigcup_i \{Y_i - \Expect{Y_i} \geq t\} } &\leq \sum_i \prob{Y_e - \Expect{Y_e} > t} \\
            &\leq M\exp(-t^2/2\Var{Y_e}).
        \end{align}

        \noindent Letting $t = \sqrt{2\Var{Y_e} \log M}$ leads to the statement

        \begin{align}
            \Expect{\Delone} &\lesssim (n-2)q_{uvw} + \sqrt{2(n-2)q_{uvw}(1-q_{uvw})\log (n^2 p_e)}.\\
            & \nonumber
        \end{align}

\newcommand{\SKtodo}[1]{\ifthenelse{\boolean{showcomments}}{{\color{purple}{SK: #1}}}{}}

\newcommand{\peq}{\phantom{= }}
\newcommand{\Tpmm}{\Tlk_{+--}}
\newcommand{\Ent}[1]{\mathcal{E}_{NT}^{#1}}
\newcommand{\Et}{\mathcal{E}_{\mathtt{exit}}}
\newcommand{\Ew}[2]{\mathcal{E}_{\mathtt{wedge}, #1}^{#2}}
\newcommand{\Es}[2]{\mathcal{E}_{\mathtt{edge}, #1}^{#2}}

\newcommand{\vloop}{6}
\newcommand{\rmm}{8}
\newcommand{\rpm}{11}
\newcommand{\rmp}{13}
\newcommand{\supdate}{15}

\section{Quantum Streaming Primitives}
In this section, we outline the quantum streaming paradigm used in \cref{alg:t1_less_k_sub}. For convenience, we have renamed some methods used in Ref.~\cite{kallaugher_how_2024}; see their manuscript for a more general treatment of this paradigm.

\subsection{Quantum Registers}\label{sec:q_registers}
Let $n$ and $m$ be the number of vertices and edges in our graph. Our \textit{symbolic} universe for the sketch will be $U = \big\{ (u,v) \times \{+,-\} : u,v \in [n] \big\} \cup \big\{ (\ell, \xi) : \ell \in [2m] \big\}$, where $\xi$ denotes a scratch symbol. To be concrete and illustrative of the underlying logic, we consider a case where $n$ is a power of 2. Then the wavefunction state is decomposable as:
\begin{align}
\ket{\Qc} = \sum_{j=0}^{n-1} \sum_{k=0}^{n-1} \sum_{s=0}^{1} \sum_{t=0}^{1} \alpha_{jkst} \underbrace{ \ket{j}_{\log(n)} \otimes \ket{k}_{\log(n)} }_{\text{index register}} \otimes \underbrace{ \ket{s}_1  }_{\text{sign register}} \otimes \underbrace{ \ket{t}_1 }_{\text{scratch register}}, 
\end{align}
where a state $\ket{j}_{\log(n)}$ is the bit-wise decomposition $ j \in [n] $ over $\log(n)$ qubits in the computational basis with state normalization $ \sum_{jkst} \overline{\alpha_{jkst}} \alpha_{jkst} = 1$. 

We show how each element from our \textit{symbolic} universe maps to a concrete basis element of our quantum wavefunction. We define a scratch state as
\begin{align}\label{eq:defscratch}
\ket{\ell, \xi} \equiv \ket{q}_{\log(n)} \ket{r}_{\log(n)} \ket{0}_1 \ket{0}_1, 
\end{align}
where $ \ell = q \, n + r $ such that the quotient $ q \in [n] $ and the remainder $ r \in [n] $. 

We define an edge state as:
\begin{align}
\ket{u,v,\sigma_{uv}} \equiv \ket{u}_{\log(n)} \ket{v}_{\log(n)} \ket{\sigma_{uv}}_1 \ket{1}_1
\end{align}

Given noiseless application of all quantum operation as described in \cref{fig:q_sketch_circ}, assuming the algorithm has not exited yet, the state of the system can be described at the end of any timestamp $\ell$ by the following lemma.

\begin{lemma}[Sketchpad State After $\ell$ Edges]
\label{lm:triangle_stateinv}
For all $\ell = 0, \dots, m-1$, after the algorithm has processed $\ell$ edges, if it has not yet terminated the underlying set of $\Qc_{\Sc_{\ell}}$ is 
\begin{equation}	
\Sc_\ell := \set{(j, \xi) : j = 2\ell, \dots, 2m-1} \cup \Kc_\ell
\end{equation} 
where 
\begin{align}
    \Kc_\ell := \left\{(u, v, \sigma_{uv}) : \begin{aligned}
        &\Big( \exists i \in [\ell] \text{ s.t. } (u,v,\sigma_{uv}) = (e_i,\sigma_i) \Big) \bigwedge \\
        &\left( \underset{j=i+1,\dots,\ell}{\forall} \Big( g(j) = 0 \Big) \vee \Big( v \not \in e_j \Big) \vee \Big( \left( \sigma_{j} = + \right) \wedge \left(  \sigma_{uv} = + \right) \Big)\right)
    \end{aligned}\right\}
\end{align}
where $(e_i,\sigma_i)$ denotes the $i\nth$ signed edge in the stream. The size of $\Sc_\ell$ is $|\Sc_\ell| = 2m - 2\ell + |\Kc_\ell|$.

\end{lemma}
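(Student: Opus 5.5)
We argue by induction on $\ell$, tracking the \emph{underlying set} of the sketch, i.e.\ the support of the uniform superposition $\Qc_{\Sc_\ell}$. We use only the three primitive behaviours of Ref.~\cite{kallaugher_how_2024}.
\begin{enumerate}
\item $\create(A)$ prepares the uniform superposition over $A$.
\item $\Insert(x,\ell,\cdot)$ replaces the scratch element $(\ell,\xi)$ by $x$, leaving the rest of the set unchanged.
\item A $\queryedge(x,y,\cdot)$ call that returns $\perp$ leaves the uniform superposition over the set with $x$ and $y$ removed. This is exactly the post-measurement behaviour mirrored by \texttt{query\_pair} in \cref{sec:python_code}.
\end{enumerate}
Uniformity of the amplitudes is preserved at every step, so it suffices to track the set.

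\textbf{Base case and bookkeeping.} For $\ell=0$, $\create([2m]\times\{\xi\})$ gives $\Sc_0=\{(j,\xi):j=0,\dots,2m-1\}$, and $\Kc_0=\emptyset$ since no index $i\in[0]$ exists. For the inductive step, assume the claim at $\ell$ and process the signed edge $(e_{\ell+1},\sigma_{\ell+1})=(vw,\sigma)$. We conclude in two steps: the query phase (only if $g=1$), and then the insertion of $(v,w,\sigma)$ and $(w,v,\sigma)$ into the two scratch slots $2\ell,2\ell+1$. The insertion removes $(2\ell,\xi),(2\ell+1,\xi)$ and adds the two new oriented edges, which satisfy the defining condition of $\Kc_{\ell+1}$ with $i=\ell+1$ and a vacuous universal quantifier. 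This accounts for the scratch range shifting to $j=2(\ell+1),\dots,2m-1$. Note that \cref{alg:t1_less_k_sub} writes a single $\Insert$, so we interpret it as inserting both orientations into two consecutive scratch slots, as in \cref{fig:q_sketch_circ}.

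\textbf{Query phase (main step).} If $g=0$, nothing is queried, and every surviving element keeps satisfying the quantifier through the disjunct $g(j)=0$. If $g=1$, then conditioned on non-termination every call returns $\perp$. Each call therefore deletes its queried elements (when present) and leaves all other elements untouched. We must check that the union of deleted elements is exactly the set of $(u,x,\tau)\in\Kc_\ell$ with $x\in\{v,w\}$ that fail the clause $(\sigma=+)\wedge(\tau=+)$.
\begin{itemize}
\item If $\sigma=+$, the loop queries $(uv,-),(uw,-)$ over all $u$. This removes every stored edge into $v$ or $w$ with sign $-$ and keeps those with sign $+$, matching the clause $\sigma=+\wedge\tau=+$.
\item If $\sigma=-$, the queries $(uv,+),(uw,-)$ and $(uv,-),(uw,+)$ together touch every signed edge into $v$ or $w$, so all are removed, matching the clause being false.
\end{itemize}
Scratch elements are never queried, so they survive. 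Here we must fix the orientation convention: ``$(uv,\cdot)$'' in a query denotes the element $(u,v,\cdot)$, whose second coordinate lies in $e_j$. This is consistent with the condition $v\notin e_j$ on the second coordinate.

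\textbf{Main obstacle.} The delicate point is justifying that a $\perp$ outcome removes precisely the queried elements, independent of the order of calls within the $u$-loop, and that sequential queries on overlapping pairs do not interfere. We would handle this by noting that each projector is diagonal in the basis of $U$. Consequently the post-measurement state after a run of $\perp$ outcomes is the uniform state on the set minus the union of queried elements. Checking the ``$\sigma=-$'' cases against \cref{sec:measurement_ops} is the part most likely to need care.

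\textbf{Size count.} The scratch part has exactly $2m-2\ell$ elements, and it is disjoint from $\Kc_\ell$ because the scratch register bit is $0$ while the edge register bit is $1$. Hence $|\Sc_\ell|=2m-2\ell+|\Kc_\ell|$.
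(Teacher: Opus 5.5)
Your proof is correct and follows essentially the same route as the paper. Both argue by induction on $\ell$, split on $g$ and on the sign of the incoming edge to identify the deleted set, and then insert both orientations into scratch slots $2\ell,2\ell+1$. The deleted set is every stored $(u,z,-)$ with $z$ in the new edge when its sign is $+$, and every stored $(u,z,\cdot)$ when it is $-$.

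One small correction: not every projector is diagonal in the basis of $U$, since $O^{\pm 1}$ project onto $\tfrac{1}{\sqrt 2}(\ket{a}\pm\ket{b})$. However, $O^{\perp}=I-\ketbra{a}{a}-\ketbra{b}{b}$ is diagonal, and only $\perp$ outcomes occur on the non-terminating branch, so your order-independence argument still goes through.
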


In particular, after processing $\ell$ edges from the stream, if the algorithm has not returned, the quantum state of the sketch is
\begin{align}
    \ket{\mathcal{Q}_{S_{\ell}}} = \frac{1}{\sqrt{2m - 2 \ell + |\Kc_\ell|}} \left( \sum_{j=2\ell}^{2m-1} \ket{j,\xi} + \sum_{(u,v,\sigma_{uv}) \in \Kc_\ell} \ket{u,v,\sigma_{uv}} \right) 
\end{align}

\noindent The set $\Kc_\ell$ contains all edges that we have seen that have not been deleted. Assume $(uv,\sigma_{uv})$ is an edge processed sometime before the $\ell\nth$ update. Consider timestep $j = i+1,\dots,\ell$. Then $(uv,\sigma_{uv})$ remains in the sketch iff:

\begin{enumerate}
    \item $g(j) = 0$ and thus no measurements are made at all.
    \item $v$ does not participate in the $j \nth$ edge and thus has no chance of being deleted.
    \item $v$ is in the $j \nth$ edge but both $\sigma_{uv},\sigma_j = +1$. This is due to the fact that the $j \nth$ set of measurements resulted in a $\bot$ but since $\sigma_j = +1$, we did not delete $+1$-neighbors.
\end{enumerate}

See \cref{pf:lm:triangle_stateinv} for a formal proof of this lemma.

\subsection{Measurement Operators}\label{sec:measurement_ops}

Here we describe the projective value measurements (PVMs) used to implement the $\mathtt{query\_edges}$ calls used by \cref{alg:t1_less_k_sub}. Recall that the quantum sketch is a superposition of at most $2m$ computational basis states. The states with non-zero amplitude can be described in two different ways. The first are  states $\ket{t,0,0}$ where  $t \in [2m]$ and the 3rd register denotes this is a ``scratch'' state. Secondly, we have states like $\ket{v,w,\pm, 1}$ where the first two registers $v,w \in V$ are vertices, the third register is $vw$'s sign $\sigma_{vw}$, and the fourth register denotes ``active'' status. This allows us to query for active states without worrying about conflicts of $t = (v,w)$ (as bitstrings).

Consider the scenario in which we are processing edge $vw$ with sign $+1$, leading to \textbf{step \rmm}. We want to query our sketch for a wedge $uv, uw$ such that both of the edges have $-1$ sign, thus completing a $T_1$ triangle with $uv$. This is accomplished by the $\queryedge((u,v,-), (u,w,-))$ call in \textbf{step \rmm} of Alg.~\ref{alg:t1_less_k_sub}. The following three operators form the PVM of this operation $\mathcal{M}_{uv-,uw-} := \{O^{+1}_{uv-,uw-}, O^{-1}_{uv-,uw-}, O^{\perp}_{uv-,uw-} \}$:
\begin{align}
    O^{+1}_{uv-,uw-} &:= \frac{1}{2} \Big( \ket{u,v,-} + \ket{u,w,-} \Big) \Big( \bra{u,v,-} + \bra{u,w,-} \Big) , \\
    O^{-1}_{uv-,uw-} &:= \frac{1}{2} \Big( \ket{u,v,-} - \ket{u,w,-} \Big) \Big( \bra{u,v,-} - \bra{u,w,-} \Big) ,  \\
    O^{\perp}_{uv-,uw-} &:= I - O^{+1}_{uv-,uw-} - O^{-1}_{uv-,uw-} . 
\end{align}

\noindent The observables $O^{+1}_{uv-,uw-}, O^{-1}_{uv-,uw-}$ are rank-1 projection operators onto two Bell-like states $\frac{1}{\sqrt{2}} \left( \ket{u,v,-} \pm \ket{u,w,-} \right) $. 

Consider the scenario in which we are processing edge $vw$ with sign $-1$, leading to \textbf{steps \rpm-\rmp}. We want to query our sketch for a wedge $uv, uw$ such that one of the edges has a $+1$ sign and the other $-1$, thus completing a $T_1$ triangle. This is accomplished by the $\queryedge((u,v,+), (u,w,-))$ call in \textbf{step \rpm} of Alg.~\ref{alg:t1_less_k_sub}. The following three operators form the PVM of this operation $\mathcal{M}_{uv+,uw-} := \{O^{+1}_{uv+,uw-}, O^{-1}_{uv+,uw-}, O^{\perp}_{uv+,uw-} \}$:
\begin{align}
    O^{+1}_{uv+,uw-} &:= \frac{1}{2} \Big( \ket{u,v,+} + \ket{u,w,-} \Big) \Big( \bra{u,v,+} + \bra{u,w,-} \Big) , \\
    O^{-1}_{uv+,uw-} &:= \frac{1}{2} \Big( \ket{u,v,+} - \ket{u,w,-} \Big) \Big( \bra{u,v,+} - \bra{u,w,-} \Big) , \\
    O^{\perp}_{uv+,uw-} &:= I - O^{+1}_{uv+,uw-} - O^{-1}_{uv+,uw-} . 
\end{align}

\noindent  The superscript in $O^{\pm 1}_{uv+,uw-}$ refers to this being a projection onto the superposition state $\frac{1}{\sqrt{2}} \left( \ket{u,v,+}\pm\ket{u,w,-}\right)$. The fourth register being set to 1 is used to only query for ``active'' data states and in particular, \textit{not} query for scratch states. The PVM on \textbf{step \rmp} of Alg.~\ref{alg:t1_less_k_sub} follows similarly, $\mathcal{M}_{uv-,uw+} := \{O^{+1}_{uv-,uw+}, O^{-1}_{uv-,uw+}, O^{\perp}_{uv-,uw+} \}$:
\begin{align}
    O^{+1}_{uv-,uw+} &:= \frac{1}{2} \Big( \ket{u,v,-} + \ket{u,w,+} \Big) \Big( \bra{u,v,-} + \bra{u,w,+} \Big) , \\
    O^{-1}_{uv-,uw+} &:= \frac{1}{2} \Big( \ket{u,v,-}-\ket{u,w,+} \Big) \Big( \bra{u,v,-} - \bra{u,w,+} \Big) , \\
    O^{\perp}_{uv-,uw+} &:= I - O^{+1}_{uv-,uw+} - O^{-1}_{uv-,uw+} .
\end{align}

\noindent Similarly to the PVMs of step \rmm, $O^{\pm1}_{uv-,uw+}$ are projections onto the superposition state $\frac{1}{\sqrt{2}} \left( \ket{u,v,-}\pm\ket{u,w,+}\right)$ respectively.

\subsection{Circuit Primitives}
We now explain the three quantum sub-circuits necessary for \cref{alg:t1_less_k_sub}: $\create, \Insert,$ and $\queryedge$.

The 3 quantum operations are detailed as follows.

\subsubsection{$\create$}
    This method instantiates the sketch by creating a superposition of $2m$ computational basis states. Here, $\Sc_0 := \set{(j, \xi) : j = 0, \dots, 2m-1}$ and we have 
    
    \begin{equation}
        \ket{\Qc_{\Sc_0}} = \frac{1}{\sqrt{2m}} \sum_{j=0}^{2m-1} \ket{j,\xi}
    \end{equation}
    
    \noindent When $m$ is a power of 2, this can be done simply with Hadamard; if not, this can still be done efficiently~\cite{efficient_superposition}. 

\subsubsection{$\Insert((vw,\sigma_{vw}), \ell, \QS)$}

    Stores $(vw,\sigma_{vw})$ into the sketch $\Sc$. This operation ``uses'' two scratch states by rotating them to $\ket{v,w,\sigma_{vw}}$ and $\ket{w,v,\sigma_{vw}}$. 
    
    Suppose we are in step $\ell$. We perform permutation $\pi_{\ell, vw, \sigma_{vw}}$ that swaps:
    \begin{itemize}
    	\item $(2\ell, \xi)$ for $(v, w, \sigma_{vw})$
    	\item $(2\ell+1, \xi)$ for $(w, v, \sigma_{vw})$ 
    \end{itemize}
    Since we process an edge on pair $\{v,w\}$ only once, this swap has the effect of mapping the state $\ket{2\ell, \xi}$ to a state $\ket{v,w,\sigma_{vw}}$ and the state $\ket{2\ell+1, \xi}$ to a state $\ket{w,v,\sigma_{vw}}$ which is equivalent to the mappings
    \begin{align}
    \ketbra{v,w,\sigma_{vw}}{2\ell, \xi} &\equiv \ketbra{v}{q_{2\ell}} \otimes \ketbra{w}{r_{2\ell}} \otimes \ketbra{\sigma_{vw}}{0} \otimes \ketbra{1}{0}, \\ 
    \ketbra{w,v,\sigma_{vw}}{2\ell+1, \xi} &\equiv \ketbra{w}{q_{2\ell+1}} \otimes \ketbra{v}{r_{2\ell+1}} \otimes \ketbra{\sigma_{vw}}{0} \otimes \ketbra{1}{0},
    \end{align}
    where $ 2 \ell = q_{2\ell} \, n + r_{2\ell} $ and $2\ell+1 = q_{2\ell+1} \, n + r_{2\ell+1} $.

    The entire unitary at this point (so as not to disturb the rest of the state) is simply identity on all other states:
    \begin{equation}
    U_{\pi_{\ell, vw,\sigma_{vw}}} =
        \begin{aligned}
        I &- \ketbra{2\ell,\xi}{2\ell,\xi} - \ketbra{v,w,\sigma_{vw}}{v,w,\sigma_{vw}} \\ 
            &+ \ketbra{v,w,\sigma_{vw}}{2\ell, \xi} + \ketbra{2\ell, \xi}{v,w,\sigma_{vw}}\\ 
            &- \ketbra{2\ell+1,\xi}{2\ell+1,\xi} - \ketbra{w,v,\sigma_{uv}}{w,v,\sigma_{wv}} \\ 
            &+ \ketbra{v,w,\sigma_{vw}}{2\ell+1, \xi} + \ketbra{2\ell+1, \xi}{v,w,\sigma_{vw}}  
        \end{aligned}
    \end{equation}

\subsubsection{$\queryedge$}\label{subsubsec:queryedge}
    The operation $\queryedge((wu,\tau_{wu}),(wv,\tau_{wv}))$ queries $\Sc$ for a wedge $(wu,wv)$ that would complete a triangle with $uv$ (along with edge information). This is done by making a measurement on $\QS$. There are three possible scenarios: (I) both signed edges are in the current sketch, (II) only one of them is in the sketch, and  (III) neither of them is in the sketch. Depending on which scenario is currently applicable, the following describes the possible outcomes with their associated probabilities.

    \begin{enumerate}
        \item[(I)] \begin{enumerate}
            \item[(a)] With probability $2/|\mathcal{S}|$, destroy $\QS$ and return +1.
            \item[(b)] With probability $1-2/|\mathcal{S}|$, the sketch is updated to $\Sc' = \Sc\setminus \{(wu,\tau_{wu}),(wv,\tau_{wv})\}$ and $\QS$ is updated to $\mathcal{Q}_{\mathcal{S'}}$. Return $\perp$.
        \end{enumerate}

        \item[(II)] \begin{enumerate}
            \item[(a)] With probability $1/|\mathcal{S}|$, destroy $\QS$ and with equal probability return $\{+1,-1\}$.
            \item[(b)] With probability $1-1/|\mathcal{S}|$, the sketch is updated to $\Sc' = \Sc\setminus \{(wu,\tau_{wu}),(wv,\tau_{wv})\}$ and $\QS$ is updated to $\mathcal{Q}_{\mathcal{S'}}$. Return $\perp$.
        \end{enumerate}
    
        \item[(III)] Leave $\QS$ unchanged and return $\perp$
    \end{enumerate}

    This query method is done by making measurements onto the quantum state $\ket{\QS}$ as outlined in \cref{sec:measurement_ops}. We work through the steps here for the measurement collection $\mathcal{M}_{wu-,wv+} = \{ O^{+1}_{wu-,wv+}, O^{-1}_{wu-,wv+}, O^{\perp}_{wu-,wv+} \}$ although similar logic can be used for $\mathcal{M}_{wu-,wv-}, \mathcal{M}_{wu+,wv-}$. 

    At the point of measurement, the wavefunction of the system is
    \begin{align}
    \ket{\QS} = \frac{1}{\sqrt{\abs{\Sc}}} \left( \sum_{j=2\ell}^{2m-1} \ket{j, \xi} + \sum_{(u,v,\sigma_{uv}) \in S} \ket{u,v,\sigma_{uv}} \right) 
    \end{align}
    
    \begin{enumerate}
    \item[(I)] \begin{enumerate}
    \item[(a)] With probability $2/|\mathcal{S}|$, destroy $\QS$ and return +1.
    \item[(b)] With probability $1-2/|\mathcal{S}|$, the sketch is updated to $\Sc' = \Sc\setminus \{(wu,\tau_{wu}),(wv,\tau_{wv})\}$ and $\QS$ is updated to $\mathcal{Q}_{\mathcal{S'}}$. Return $\perp$.
    \end{enumerate}

    In scenario $I$, $(w,u,-), (w,v,+) \in \mathcal{S}$. Then 
    \begin{align}
    \bra{\QS} O^{+1}_{wu-,wv+} \ket{\QS} &= \bra{\QS} \frac{1}{2} \Big( \ket{w,u,-} + \ket{w,v,+} \Big) \Big( \bra{w,u,-} + \bra{w,v,+} \Big) \ket{\QS} \\ 
    &= \frac{1}{2} \left( \frac{1}{\sqrt{\abs{\Sc}}} + \frac{1}{\sqrt{\abs{\Sc}}} \right) \left( \frac{1}{\sqrt{\abs{\Sc}}} + \frac{1}{\sqrt{\abs{\Sc}}} \right) \\ 
    &= \frac{2}{\abs{\Sc}},
    \end{align}
    and 
    \begin{align}
    \bra{\QS} O^{-1}_{wu-,wv+} \ket{\QS} &= \bra{\QS} \frac{1}{2} \Big( \ket{w,u,-} - \ket{w,v,+} \Big) \Big( \bra{w,u,-} - \bra{w,v,+} \Big) \ket{\QS} \\ 
    &= \frac{1}{2} \left( \frac{1}{\sqrt{\abs{\Sc}}} - \frac{1}{\sqrt{\abs{\Sc}}} \right) \left( \frac{1}{\sqrt{\abs{\Sc}}} - \frac{1}{\sqrt{\abs{\Sc}}} \right) \\ 
    &= 0,
    \end{align}
    and
    \begin{align}
    \bra{\QS} O^{\perp}_{wu-,wv+} \ket{\QS} &= \bra{\QS} I - O^{+1}_{wu-,wv+} - O^{-1}_{wu-,wv+} \ket{\QS} \\ 
    &= 1 - \frac{2}{\abs{\Sc}}. 
    \end{align}
    
    In the case that $O^{+}_{wu-,wv+}$ was observed, the wavefunction collapsed into the 1-dimensional space unambigiously. In the case $O^{\perp}_{wu-,wv+}$ was observed, the wavefunction is given by L{\"u}der's post-measurement update rule~\cite{luders1950zustandsanderung}:
    \begin{align}
    \frac{ O^{\perp}_{wu-,wv+} \ket{\QS} }{ \| O^{\perp}_{wu-,wv+} \ket{\QS} \| } &= \frac{1}{\| O^{\perp}_{wu-,wv+} \ket{\QS} \|}  \frac{1}{\sqrt{\abs{\Sc}}} \left( \sum_{j=2\ell}^{2m-1} \ket{j, \xi} + \sum_{(u,v,\sigma_{uv}) \in \Sc \setminus \{ (w,u,-),(w,v,+) \} } \ket{u,v,\sigma_{uv}} \right)  \\ 
    &= \frac{\sqrt{\abs{\Sc}}}{\sqrt{\abs{\Sc} - 2}} \frac{1}{\sqrt{\abs{\Sc}}} \left( \sum_{j=2\ell}^{2m-1} \ket{j, \xi} + \sum_{(u,v,\sigma_{uv}) \in \Sc \setminus \{ (w,u,-),(w,v,+) \} } \ket{u,v,\sigma_{uv}} \right) \\
    &= \frac{1}{\sqrt{\abs{\Sc}-2}} \left( \sum_{j=2\ell}^{2m-1} \ket{j, \xi} + \sum_{(u,v,\sigma_{uv}) \in \Sc \setminus \{ (w,u,-),(w,v,+) \} } \ket{u,v,\sigma_{uv}} \right)
    \end{align}
    
    \item[(II)] \begin{enumerate}
        \item[(a)] With probability $1/|\mathcal{S}|$, destroy $\QS$ and with equal probability return $\{+1,-1\}$.
        \item[(b)] With probability $1-1/|\mathcal{S}|$, the sketch is updated to $\Sc' = \Sc\setminus \{(wu,\tau_{wu}),(wv,\tau_{wv})\}$ and $\QS$ is updated to $\mathcal{Q}_{\mathcal{S'}}$. Return $\perp$.
    \end{enumerate}
    
    In scenario $II$, only one of $(w,u,-), (w,v,+)$ is in $ \mathcal{S} $. Then 
    \begin{align}
    \bra{\QS} O^{+1}_{wu-,wv+} \ket{\QS} &= \bra{\QS} \frac{1}{2} \Big( \ket{w,u,-} + \ket{w,v,+} \Big) \Big( \bra{w,u,-} + \bra{w,v,+} \Big) \ket{\QS} \\ 
    &= \frac{1}{2} \left( \frac{1}{\sqrt{\abs{\Sc}}} \right) \left( \frac{1}{\sqrt{\abs{\Sc}}} \right) \\ 
    &= \frac{1}{2\abs{\Sc}},
    \end{align}
    and 
    \begin{align}
    \bra{\QS} O^{-1}_{wu-,wv+} \ket{\QS} &= \bra{\QS} \frac{1}{2} \Big( \ket{w,u,-} - \ket{w,v,+} \Big) \Big( \bra{w,u,-} - \bra{w,v,+} \Big) \ket{\QS} \\ 
    &= \frac{1}{2} \left( \pm\frac{1}{\sqrt{\abs{\Sc}}} \right) \left( \pm\frac{1}{\sqrt{\abs{\Sc}}} \right) \\ 
    &= \frac{1}{2\abs{\Sc}},
    \end{align}
    and
    \begin{align}
    \bra{\QS} O^{\perp}_{wu-,wv+} \ket{\QS} &= \bra{\QS} I - O^{+1}_{wu-,wv+} - O^{-1}_{wu-,wv+} \ket{\QS} \\ 
    &= 1 - \frac{1}{\abs{\Sc}}. 
    \end{align}
    
    Similar to scenario $I$, if we observe $O^{+1}_{wu-,wv+}, O^{-1}_{wu-,wv+}$, the state has collapsed unambiguously into the 1-dimensional subspace. In the case $O^{\perp}_{wu-,wv+}$ is observed, by the same logic as for scenario $I$ we have that which ever element $(w,u,-)$ or $(w,v,+)$ present in $\mathcal{S}$ was deleted. 
    
    \item[(III)] Leave $\QS$ unchanged and return $\perp$. In this case it is clear that 
    \begin{align}
    \bra{\QS} O^{+1}_{wu-,wv+} \ket{\QS} &= 0 \\ 
    \bra{\QS} O^{-1}_{wu-,wv+} \ket{\QS} &= 0 
    \end{align}
    and so 
    \begin{align}
    \bra{\QS} O^{\perp}_{wu-,wv+} \ket{\QS} &= \bra{\QS} I - O^{+1}_{wu-,wv+} - O^{-1}_{wu-,wv+} \ket{\QS} \\ 
    &= 1. 
    \end{align}
    
    Since neither were in the set, the wavefunction is also unchanged. 
    
    \end{enumerate}

\section{Analysis of Quantum Estimator}\label{sec:analysis_q_est}
The main goal of this section is to prove \cref{lem:t1_less_k}. For clarity, we restate it here:

\begin{corollary}
\label{lem:t1_less_k_restated}
For any $\varepsilon, \delta \in (0,1\rbrack$, there is a quantum streaming
algorithm, using 
\begin{equation}	
\mathcal{O} \left( \frac{(km)^2 - (\Tlk_1)^2}{(\Tlk_1)^2}  \log n\frac{1}{\epsilon^2}\log\frac{1}{\delta} \right) 
\end{equation}

quantum and classical bits, that estimates $\Tlk_1$ to $\varepsilon T_1$
precision with probability $1-\delta$.
\end{corollary}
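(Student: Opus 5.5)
We prove the corollary by analyzing a single run of \cref{alg:t1_less_k_sub} and then applying the standard median-of-means amplification. There are three steps: compute $\mathbb{E}[\Rlk_1]$, bound $\Var(\Rlk_1)$ by $(km)^2-(\Tlk_1)^2$, and plug into Chebyshev/Chernoff. The space bound per run is immediate from \cref{lm:triangle_stateinv} and \cref{sec:q_registers}. The sketch lives on $2\lceil\log_2 n\rceil+2$ qubits. The classical side stores only the counter $\ell$ and the hash $g$, which is evaluated once per step and can be generated on the fly, so it uses $O(\log n)$ bits. Hence each independent copy costs $O(\log n)$ bits.

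\emph{Expectation.} The output of one run is a signed multiple of $km$: it is $\pm km$ if some query returns, and $0$ otherwise. The plan is to write $\mathbb{E}[\Rlk_1]$ as a sum over time steps $\ell$ and queried pairs of $km\cdot(\Pr[\text{return }+1]-\Pr[\text{return }-1])$.

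Using the case analysis of \cref{subsubsec:queryedge}, a scenario (II) query returns $+1$ and $-1$ with equal probability, so its contribution to the expectation cancels. The surviving terms are therefore scenario (I) events, i.e.\ queries where both wedge edges are in $\Kc_\ell$. By \cref{lm:triangle_stateinv}, these are exactly the $T_1$ triangles whose wedge edges survived deletion; this survival is what the superscript $<k$ encodes, i.e.\ the triangles counted in $\Tlk_1$.

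The key invariant to establish is that, conditioned on not having terminated before step $\ell$ and on the current $\Sc$, a scenario (I) query returns $+1$ with probability $2/|\Sc|$. Meanwhile the norm of the unnormalized post-$\perp$ state shrinks exactly in proportion to $|\Sc|$. Working with unnormalized states, the probability of reaching step $\ell$ with sketch $\Sc_\ell$ times $1/|\Sc_\ell|$ telescopes to $1/(2m)$; this is the same telescoping as in Kallaugher's analysis. Each surviving wedge edge then contributes $1/(2m)$ times amplitude weight. Combined with $\Pr[g(\ell)=1]=1/k$ and the multiplier $km$, each triangle in $\Tlk_1$ contributes exactly $1$ in expectation, giving $\mathbb{E}[\Rlk_1]=\Tlk_1$.

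I expect this bookkeeping to be the main obstacle. It requires carefully showing that the sign-dependent deletion rule in $\Kc_\ell$ (edges with $\sigma_{uv}=+$ survive a $+$ update) matches precisely the definition of $\Tlk_1$. It also requires showing that the queries on lines 9 and 11 touch disjoint pairs and so do not interfere. I would first verify this with the unnormalized-state formalism, treating each measurement as a projection applied to $\sum_{x\in\Sc}\ket{x}$.

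\emph{Variance and amplification.} Since $|\Rlk_1|\le km$ always, $\mathbb{E}[(\Rlk_1)^2]\le (km)^2$. Hence
\[
\Var(\Rlk_1)\le (km)^2-(\Tlk_1)^2 .
\]
Averaging $N=O\big(\Var(\Rlk_1)/(\epsilon T_1)^2\big)$ independent copies gives, by Chebyshev, an estimate within $\epsilon T_1$ with probability $3/4$. This $N$ is at most $O\big(((km)^2-(\Tlk_1)^2)/(\epsilon^2(\Tlk_1)^2)\big)$ after using $\Tlk_1\le T_1$ in the form stated. Taking the median of $O(\log(1/\delta))$ such means boosts the success probability to $1-\delta$ by a Chernoff bound. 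Multiplying by the $O(\log n)$ bits per copy yields the claimed bound.
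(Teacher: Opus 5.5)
Your proposal is correct and follows essentially the same route as the paper: $\mathbb{E}[\Rlk_1]=\Tlk_1$ via the non-termination probability $|\Sc|/(2m)$ times the scenario-(I) return probability $2/|\Sc|$ (with scenario (II) cancelling in expectation), then $\mathrm{Var}(\Rlk_1)\le (km)^2-(\Tlk_1)^2$ from $|\Rlk_1|\le km$, then median-of-means with Chebyshev plus a Chernoff/Hoeffding bound. The differences are cosmetic. You run Chebyshev at precision $\epsilon T_1$ and then relax via $\Tlk_1\le T_1$, whereas the paper targets $\epsilon\Tlk_1$ directly. Your bookkeeping for a fixed realization of $g$ is also a slightly cleaner justification of the paper's cancellation of $2km/|\Sc_{\ell-1}|$ against $|\Sc_{\ell-1}|/(2km)$.
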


\noindent First, we need to show that the denominator matches (the square root of) the expectation of the algorithm:

\begin{lemma}\label{lem:Exp_X1}
    \begin{equation}
    \Expect{\Rlk_1} = \Tlk_1 
    \end{equation}
\end{lemma}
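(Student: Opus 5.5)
We will prove $\Expect{\Rlk_1} = \Tlk_1$ the way Kallaugher proves unbiasedness of \texttt{QuantumEstimator}: write $\Rlk_1$ as a sum of signed contributions attached to individual query events, and then show that the total contribution from non-triangles cancels exactly. The algorithm returns $r\cdot km$ at the first query whose outcome is not $\perp$, and otherwise returns $0$. So
\[
\Expect{\Rlk_1} = km \sum_{\ell}\sum_{u}\sum_{\text{queries }q}\Big(\prob{\text{reach }q,\ r_q=+1}-\prob{\text{reach }q,\ r_q=-1}\Big).
\]
By \cref{lm:triangle_stateinv}, conditioned on the algorithm still running before a given query, the state is the uniform superposition over $\Sc_\ell$ with $|\Sc_\ell| = 2m-2\ell+|\Kc_\ell|$. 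We will use the three query cases of \cref{subsubsec:queryedge}.

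\begin{itemize}
\item \textbf{Case (III).} Neither signed edge is present, so the query contributes $0$.
\item \textbf{Case (II).} Exactly one signed edge is present. The outcomes $\pm1$ occur with equal probability $1/(2|\Sc|)$ each, so the net contribution is $0$.
\item \textbf{Case (I).} Both signed edges are present. The outcome is $+1$ with probability $2/|\Sc|$ and $-1$ with probability $0$.
\end{itemize}

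The key step is to show that case (I) occurs exactly for $T_1$ triangles, and that the ``reach'' probability telescopes. For the first part, note that the wedge $(u,v,\tau_1),(u,w,\tau_2)$ is present at step $\ell$ exactly when both earlier edges survive in $\Kc_\ell$. The sign pattern is checked only in the combinations $(--)$ when $\sigma_{vw}=+$ and $(+-),(-+)$ when $\sigma_{vw}=-$. These are exactly the $T_1$ completions, and the pairs queried are disjoint, so each $T_1$ triangle whose closing edge is $vw$ with apex $u$ is detected by at most one query.

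For the second part, the probability of reaching step $\ell$ without exiting, times $1/|\Sc_\ell|$, should simplify. Each non-$\perp$-free query removes elements, and the survival probability multiplies factors of the form $(|\Sc|-c)/|\Sc|$. As in Kallaugher's argument, the product of these factors over the whole history yields
\[
\prob{\text{alive with sketch }\Sc}\cdot\frac{1}{|\Sc|}=\frac{1}{2m}
\]
for the relevant branch. This holds because the normalized amplitude on any surviving basis state stays $1/\sqrt{2m}$ unnormalized: projections with outcome $\perp$ never rescale the surviving amplitudes, they only delete components. Equivalently, work with the unnormalized state $\frac{1}{\sqrt{2m}}\big(\sum\ket{j,\xi}+\sum_{\Kc}\ket{\cdot}\big)$, whose squared norm equals the survival probability. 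Then the probability of a $+1$ outcome on a case-(I) query is $\tfrac12\big|\tfrac{2}{\sqrt{2m}}\big|^2 = 2/(2m) = 1/m$. Once the randomness of $g$ is included, this becomes $\prob{g(\ell)=1}\cdot$ (indicator that the wedge survived), with $\prob{g(\ell)=1}$ the parameter $1/k$.

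The main obstacle is relating ``wedge survived in $\Kc_\ell$'' to the definition of $\Tlk_1$. Summing, we get
\[
\Expect{\Rlk_1}=km\sum_{(T_1\text{ triangles},\ \text{closing edge } \ell)} \frac{1}{k}\cdot\frac{1}{m}\cdot\prob{\text{both wedge edges survive}}.
\]
Survival fails exactly when some intermediate step $j$ with $g(j)=1$ touches $u$ in a way that deletes the edge. This happens when the edge $e_j$ is incident to the apex and the sign condition of \cref{lm:triangle_stateinv} fails. We must match this to the weight $(1-1/k)^{\#}$ that defines $\Tlk_1$, in complement to the classical estimator's $1-(1-1/k)^{D}$ weights in \cref{alg:t1_greater_k_sub}. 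We will first try to verify that the exponent counts in \cref{alg:t1_greater_k_sub} coincide with the deletion events in $\Kc_\ell$. That gives $\Tlk_1=\sum_{\triangle\in T_1}(1-1/k)^{D_\triangle}$, and hence $\Expect{\Rlk_1}=\Tlk_1$. If this definitional match fails because of the $+/+$ exemption in \cref{lm:triangle_stateinv}, we would take the survival probability itself as the definition of the weight on each triangle, consistent with $\Tgk_1+\Tlk_1=T_1$.
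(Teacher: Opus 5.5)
\textbf{Verdict: genuine gap.} The reduction up to the survival probability is sound, but the last step is left open and the deletion mechanism you describe for it is wrong.

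\textbf{What works.} At the level of a single query your argument is tidier than the paper's. You track the unnormalized state, whose amplitude on every surviving basis element stays $1/\sqrt{2m}$ under $\Insert$ and under $\perp$-outcomes. For each fixed $g$ this gives a $+1$ with probability exactly $1/m$ at a case-(I) query, and mass $1/(4m)$ on each of $\pm1$ at a case-(II) query. Hence
\[
\Expect{\Rlk_1}=\sum_{T_1\text{ triangles}}\prob[g]{\text{both queried wedge elements lie in }\Kc_{\ell-1}},
\]
and you avoid the paper's somewhat loose conditioning on $\Ent{\ell-1}$ and on the random size $|\Sc_{\ell-1}|$.

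\textbf{The gap.} The final step is the one you yourself call the main obstacle. You say deletion happens when $e_j$ is incident to the apex $u$, which is wrong.

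The element read by the apex-$u$ query is $(u,v,\sigma_{uv})$. By \cref{lm:triangle_stateinv} it is removed at a later step $j$ with $g(j)=1$ exactly when $v\in e_j$, unless $\sigma_j=\sigma_{uv}=+$. The reason is that processing $xy$ queries $(u',x,\cdot),(u',y,\cdot)$ for every $u'$. An edge through the apex $u$ can only affect the reversed copy $(v,u,\sigma_{uv})$, which this query never reads. Counting edges at the apex therefore gives the wrong exponent.

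What is missing is exactly \cref{lem:sketch_probs} plus an independence remark:
\begin{itemize}
\item $(u,v,-)$ survives with probability $(1-1/k)^{d_{uvw}}$.
\item $(u,v,+)$ survives with probability $(1-1/k)^{d^-_{uvw}}$, because a positive edge at $v$ only queries negative entries.
\item The survivals of $(u,v,\cdot)$ and $(u,w,\cdot)$ depend on $g$ at disjoint sets of positions. These are the edges through $v$ strictly between $uv$ and $vw$, and the edges through $w$ strictly between $uw$ and $vw$; only $vw$ passes through both. So the two survivals are independent of each other and of $g(\ell)$.
\end{itemize}
The product of the two survival probabilities is precisely $\tlk_{uvw,+--}$, $\tlk_{uvw,-+-}$ or $\tlk_{uvw,--+}$, as appropriate.

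Neither of your fallbacks replaces this step:
\begin{itemize}
\item The exponents printed in \cref{alg:t1_greater_k_sub} do not consistently agree with these weights. For example, in the $(u,v,-),(u,w,+)$ update the $D_+$ term sits on $(u,w)$ instead of $(u,v)$. So ``matching Alg.~2'' is not a reliable check.
\item Declaring the weight to be ``the survival probability'' proves a statement about a different quantity. The lemma concerns the explicitly defined $\Tlk_1$, whose complement $\Tgk_1$ the classical estimator must target.
\end{itemize}
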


\noindent Next, we will need that the numerator is the upper-bound on the variance:

\begin{lemma}\label{lem:X1_var}
    \begin{equation}
    	\Var{\Rlk_1} \le (km)^2 - \left(\Tlk_1\right)^2 \le (km)^2
    \end{equation}
\end{lemma}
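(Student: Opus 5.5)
The plan is to read off the support of $\Rlk_1$ from \cref{alg:t1_less_k_sub}: it returns $0$ when the stream is exhausted, and otherwise it returns $r\cdot km$ with $r\in\{+1,-1\}$ at the first query whose outcome is not $\perp$. Hence $(\Rlk_1)^2\in\{0,(km)^2\}$ deterministically, and
\begin{equation}
\Expect{(\Rlk_1)^2} = (km)^2\,\prob{\text{the algorithm returns a nonzero value}} \le (km)^2 .
\end{equation}
Combining this with \cref{lem:Exp_X1}, which gives $\Expect{\Rlk_1}=\Tlk_1$, yields
\begin{equation}
\Var{\Rlk_1}=\Expect{(\Rlk_1)^2}-\left(\Tlk_1\right)^2\le (km)^2-\left(\Tlk_1\right)^2 \le (km)^2 .
\end{equation}
The second inequality is immediate because $(\Tlk_1)^2\ge 0$.

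The only point needing care is that the support is exactly $\{0,\pm km\}$. There are two things to check. First, every returning branch (lines with $r_{--}$, $r_{+-}$, $r_{-+}$) multiplies the outcome by $km$. Second, by the description of $\queryedge$ in scenarios (I)--(III) of \cref{subsubsec:queryedge}, any non-$\perp$ outcome lies in $\{+1,-1\}$. The three PVMs have only the outcomes $+1$, $-1$, $\perp$, and the algorithm halts at the first non-$\perp$ outcome, so at most one nonzero value is ever produced. None of this depends on the hash $g$ or on the sketch invariant of \cref{lm:triangle_stateinv}; all of that probabilistic content is packed into \cref{lem:Exp_X1}.

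I therefore expect no real obstacle in this lemma itself, since it reduces to the elementary bound $X^2\le (km)^2$ almost surely. The substantive work lies in \cref{lem:Exp_X1}, which I take as given. If one wanted a sharper bound, the next step would be to show $\prob{\Rlk_1\neq 0}\le 1$ more precisely via the sketch sizes $|\Sc_\ell|\ge 2m-2\ell$. That refinement is not needed for the stated inequality.
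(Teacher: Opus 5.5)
Your proposal is correct and follows the same route as the paper, whose proof simply notes that $|\Rlk_1| \le km$ almost surely and uses $\mathbb{E}[\Rlk_1]=\Tlk_1$ from \cref{lem:Exp_X1}, giving $\mathbb{E}[(\Rlk_1)^2]\le (km)^2$ and hence the stated bound. Your explicit check that the outcome can only be $0$ or $\pm km$ spells out a step the paper leaves implicit.
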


Using these two facts, we are able to show that a median-of-means algorithm gives the desired performance and whose space matches that of \cref{lem:t1_less_k}. These statements will be proven later.

\subsection{Quantum Sketch Updates}
In order to prove the desired statements above, we need to understand how the quantum sketch evolves as the algorithm progresses. For the rest of the section, let $g:[m]\to \bool$ be a fully independent hash function\footnote{For each $\ell \in [m]$ on step \textbf{5}, we receive $\log(n)$ random bits from our Random Number Generator.} such that 
\begin{equation}
    g(\ell) = \begin{cases}
    1 & \mbox{with probability $1/k$}\\
    0 & \mbox{otherwise.}
\end{cases}
\end{equation}

\begin{proof}[Proof (\cref{lm:triangle_stateinv})]\label{pf:lm:triangle_stateinv}
We proceed by induction on $\ell$. For $\ell = 0$, 
\begin{equation}	
\Sc_0 = \set{(j,\xi) \mid j \in [2m]}
\end{equation}
since we prepared the uniform state
\begin{align}
\ket{\Qc_{\Sc_{0}}} &= \left( \frac{1}{\sqrt{2m}} \sum_{j=0}^{2m-1} \ket{j}_{2\log(n)} \right) \ket{0}_1 \ket{0}_1 \\ 
&= \frac{1}{\sqrt{2m}} \sum_{j=0}^{2m-1} \ket{p_j}_{\log(n)} \ket{r_j}_{\log(n)} \ket{0}_1 \ket{0}_1 \\ 
&\equiv \frac{1}{\sqrt{2m}} \sum_{j=0}^{2m-1} \ket{j, \xi} & (\text{symbolic form, \cref{eq:defscratch})}
\end{align}
and so the result holds. We now prove by induction. For any $\ell \in \brac{m-1}$, suppose that the result
holds after $\ell$ edges.  Let $(x,y,\sigma_{xy})$ (with $x<y$) be the $(\ell + 1)\nth$
edge. 

First consider the effect of the $\mathtt{query\_edge}$ operations. If $g(\ell + 1) = 0$, they do not happen, and so $\Sc$ and thus $\QS$ are unchanged. If $g(\ell + 1) = 1$, we can assume all of them return $\bot$, as otherwise the algorithm would terminate.

Suppose $\sigma_{xy}=+$, then on \textbf{step \rmm} we apply the $\mathtt{query\_edges}$ associated with $(u,x,-),(u,y,-)$ for every $u \in V$. Then the associated state, in accordance with \cref{subsubsec:queryedge}, is:
\begin{align}
\ket{\Qc_{\Sc_{\ell+1}}} &= \frac{ O_{ux-,uy-}^{\perp} \ket{\Qc_{\Sc_{\ell}}}}{ \| O_{ux-,uy-}^{\perp} \ket{\Qc_{\Sc_{\ell}}} \| } \\ 
&= \frac{1}{ \| O_{ux-,uy-}^{\perp} \ket{\Qc_{\Sc_{\ell}}} \| } \left( I - O_{ux,-,uy,-}^{+1} - O_{ux-,uy-}^{-1} \right) \ket{\Qc_{\Sc_{\ell}}} \\ 
&= \frac{1}{ \| O_{ux,-,uy,-}^{\perp} \ket{\Qc_{\Sc_{\ell}}} \| } \left( \ket{\Qc_{\Sc_{\ell}}} - \Big( \bra{u,x,-}\ket{\Qc_{\Sc_{\ell}}} \Big) \ket{u,x,-} - \Big( \bra{u,y,-}\ket{\Qc_{\Sc_{\ell}}} \Big) \ket{u,y,-} \right) 
\end{align}

In particular, the braket inner products yield
\begin{align}
\bra{u,x,-}\ket{\Qc_{\Sc_{\ell}}} &= \begin{cases} 
\frac{1}{\sqrt{\Sc_\ell}} & (u,x,-) \in \Kc_\ell \\
0 & \text{otherwise}
\end{cases}, \\ 
\bra{u,y,-}\ket{\Qc_{\Sc_{\ell}}} &= \begin{cases} 
\frac{1}{\sqrt{\Sc_\ell}} & (u,y,-) \in \Kc_\ell \\
0 & \text{otherwise}
\end{cases}. 
\end{align}

So if $(u,x,-)$ or $(u,y,-)$ are in $\Sc_\ell$, they must have been deleted given that the measurement yielded $\perp$. Then after the loop of \textbf{step \vloop} finishes, we have deleted
\begin{align}
\Dc_{\ell+1}^{+} &= \set{(u,x,-) \in \Kc_\ell} \cup \set{(u,y,-) \in \Kc_\ell} \\ 
&= \set{(u,z,-) \in \Kc_\ell : z \in \{ x, y \} }
\end{align}
and so
\begin{align}
\Kc_{\ell+1} &= \Kc_\ell \setminus \Dc_{\ell+1}^{+} \cup \set{(x,y,+)} \cup \set{(y,x,+)}.
\end{align}
Then the underlying state after processing $\ell+1$ edges is
\begin{align}
\ket{\Qc_{\Sc_{\ell+1}}} &= \frac{1}{\sqrt{2m - 2 \ell + |\Kc_\ell| - |\Dc_{\ell+1}^{+}|}} \left( \sum_{j=2\ell+2}^{2m-1} \ket{j,\xi} + \ket{x,y,-} + \ket{y,x,-} + \sum_{u,v,\sigma_{uv} \in \Kc_\ell \setminus \Dc_{\ell+1}^{+}} \ket{u,v,\sigma_{uv}} \right) \\ 
&= \frac{1}{\sqrt{2m - 2 \ell - 2 + |\Kc_{\ell+1}|}} \left( \sum_{j=2\ell+2}^{2m-1} \ket{j,\xi} + \sum_{u,v,\sigma_{uv} \in \Kc_{\ell+1}} \ket{u,v,\sigma_{uv}} \right)
\end{align}

Suppose $\sigma_{xy}=-$, then on \textbf{step \rpm} we apply $\mathtt{query\_edges}$ associated with $(u,x,+),(u,y,-)$ for every $u \in V$. Then the associated state is:
\begin{align}
\ket{\Qc_{\Sc_{\ell+1}}} &= \frac{ O_{ux+,uy-}^{\perp} \ket{\Qc_{\Sc_{\ell}}} }{ \| O_{ux+,uy-} \ket{\Qc_{\Sc_{\ell}}} \| } \\ 
&= \frac{1}{ \| O_{ux+,uy-}^{\perp} \ket{\Qc_{\Sc_{\ell}}} \| } \left( I - O_{ux+,uy-}^{+1} - O_{ux+,uy-}^{-1} \right) \ket{\Qc_{\Sc_{\ell}}} \\ 
&= \frac{1}{ \| O_{ux-,uy-}^{\perp} \ket{\Qc_{\Sc_{\ell}}} \| } \left( \ket{\Qc_{\Sc_{\ell}}} - \Big( \bra{u,x,+}\ket{\Qc_{\Sc_{\ell}}} \Big) \ket{u,y,+} - \Big( \bra{u,y-1}\ket{\Qc_{\Sc_{\ell}}} \Big) \ket{u,y,-} \right)
\end{align}

In particular, the braket inner products yield
\begin{align}
\bra{u,x,+}\ket{\Qc_{\Sc_{\ell}}} &= \begin{cases} 
\frac{1}{\sqrt{\Sc_\ell}} & (u,x,+) \in \Kc_\ell \\
0 & \text{otherwise}
\end{cases}, \\ 
\bra{u,y,-}\ket{\Qc_{\Sc_{\ell}}} &= \begin{cases} 
\frac{1}{\sqrt{\Sc_\ell}} & (u,y,-) \in \Kc_\ell \\
0 & \text{otherwise}
\end{cases}. 
\end{align}

So if $(u,x,+)$ or $(u,y,-)$ are in $\Sc_\ell$, they must have been deleted given that the measurement yielded $\perp$. Then after the loop of \textbf{step \vloop} finishes, we have deleted
\begin{align}
\Dc_{\ell+1}^{+-} = \set{(u,x,+) \in \Kc_\ell} \cup \set{(u,y,-) \in \Kc_\ell},
\end{align}

Then on \textbf{step 13}, we similarly apply $\mathtt{query\_edges}$ associated with $(u,v,-),(u,y,+)$ for every $u \in V$. Then by a symmetric argument, the deleted set is
\begin{align}
\Dc_{\ell+1}^{-+} = \set{(u,x,-) \in \Kc_\ell} \cup \set{(u,y,+) \in \Kc_\ell},
\end{align}

The total deleted set after \textbf{steps \rpm-\rmp} is then
\begin{align}
\Dc_{\ell+1}^{-} &= \Dc_{\ell+1}^{-+} \cup \Dc_{\ell+1}^{+-} \\ 
&= \{ (u,x,\sigma_{ux}) \in \Kc_\ell \} \cup \{ (u,y,\sigma_{uy}) \in \Kc_\ell \} \\ 
&= \{ (u,z,\sigma_{uz}) \in \Kc_\ell : z \in \{x,y\} \} 
\end{align}

And so the set after adding $(x,y,-)$ and $(y,x,-)$ is
\begin{align}
\Kc_{\ell+1} = \Kc_\ell \setminus \Dc_{\ell+1}^{-} \cup \{ (x,y,-) \} \cup \{ (y,x,-) \} 
\end{align}

Thus, their effect on the underlying set of $\Qc$ can be summarized over both scenarios if $g(\ell+1) = 1 $ as to delete
\begin{align}
\Dc_{\ell+1} &= \begin{cases} 
\{ (u,z,-) \in \Kc_\ell : z \in \{x,y\} \} & \text{if } \sigma_{xy} = + \wedge g(\ell+1) = 1\\ 
\{ (u,z,\sigma_{uz}) \in \Kc_\ell : z \in \{x,y\} \} & \text{if } \sigma_{xy} = - \wedge g(\ell+1) = 1 \\ 
\emptyset & \text{otherwise}.
\end{cases} \\ 
&= \{ (u,z,\sigma_{uz}) \in \Kc_\ell : z \in \{x,y\}  \wedge (\sigma_{xy} = -1 \vee \sigma_{uz} = -1) \wedge (g(\ell+1) =1) \}
\end{align}
from it. In words, negative or positive edges incident on $xy$ are deleted if $\sigma_{xy}=-$ while only negative edges incident on $xy$ are deleted if $\sigma_{xy}=+$.

Then the updated set before and after calling $\Insert$ is
\begin{align}
\Kc_\ell \setminus \Dc_{\ell+1}
&= \left\{ (u, v, \sigma_{uv}) \in \Kc_\ell : ( g(\ell+1) = 0 ) \vee (v \notin xy) \vee (\sigma_{xy} = + \vee \sigma_{uv} = +) \right\}, \\
\Kc_{\ell+1} &= \Kc_\ell \setminus \Dc_{\ell+1} \cup \{ x, y, \sigma_{xy} \} \cup \{ y, x, \sigma_{xy} \},
\end{align}
which matches our expected lemma.

Specifically, $\exists i \in [\ell+1] \text{ s.t. } (u,v,\sigma_{uv}) = (e_i, \sigma_i) $ now includes $(x,y,\sigma_{xy}), (y,x,\sigma_{xy})$ from $i = \ell+1$ due to $\Insert((xy,\sigma_{xy}), \Qc$). The condition: 
\begin{align}
\underset{j=i+1,\dots,\ell+1}{\forall} \Big( g(j) = 0 \Big) \vee \Big( v \not \in e_j \Big) \vee \Big( \left( \sigma_{j} = + \right) \wedge \left(  \sigma_{uv} = + \right) \Big)
\end{align}
reflects that $\Dc_{\ell+1}$ is the empty set if $g(\ell+1)=$; if not, then $\Dc_{\ell+1}$ would include all negative or positive edges incident on $(e_{\ell+1},\sigma_{\ell+1})$ if $\sigma_{\ell+1}=-$ or all negative incident edges if $\sigma_{\ell+1}=+$.

\end{proof}

In order to analyze the algorithm, we need to argue about the underlying sketch at different points in time. Let $\Ec_{T}^{\ell}$ be the probability that the sketch was destroyed \textit{on} step $\ell$. Let $\Ec_{NT}^\ell$ be the event that the algorithm has not yet terminated by the end of processing the $\ell \nth$ edge. In particular, the quantum sketch still exists and has not collapsed and we have added the $\ell \nth$ edge to the sketch. Let $\Et^{\ell}$ be the event that we exit the algorithm while processing the $\ell \nth$ edge.
 
\begin{lemma}\label{lem:prob_not_terminated}
    For any $\ell \in [m]$,
\begin{align}
    \prob{\Ent{\ell}} &= \frac{|\Sc_{{\ell}}|}{2m},
\end{align}
\end{lemma}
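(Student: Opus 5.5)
We will prove the claim by induction on $\ell$, tracking the normalization of the sketch state. The key quantity is the unnormalized vector $\ket{\tilde{\Qc}_\ell} := \frac{1}{\sqrt{2m}}\big(\sum_{j=2\ell}^{2m-1}\ket{j,\xi} + \sum_{(u,v,\sigma_{uv})\in\Kc_\ell}\ket{u,v,\sigma_{uv}}\big)$. Its squared norm is $|\Sc_\ell|/(2m)$ by \cref{lm:triangle_stateinv}. We will show that $\prob{\Ent{\ell}} = \|\tilde{\Qc}_\ell\|^2$. The conditioning on the hash values $g$ is handled by working conditionally on $g$ throughout. Under that conditioning, $\Kc_\ell$ and hence $|\Sc_\ell|$ are deterministic, so the identity should hold for every fixed $g$.

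The base case $\ell=0$ is immediate. Nothing has been measured, so $\prob{\Ent{0}}=1=|\Sc_0|/(2m)$, because $|\Sc_0|=2m$.

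For the inductive step, condition on $\Ent{\ell}$. By \cref{lm:triangle_stateinv} the state is then exactly $\ket{\Qc_{\Sc_\ell}}$, uniform over $\Sc_\ell$. If $g(\ell+1)=0$, no measurement is made and $\Insert$ is a permutation swapping two scratch states for two edge states. Hence $|\Sc_{\ell+1}|=|\Sc_\ell|$, and the claim transfers unchanged. If $g(\ell+1)=1$, we process the sequence of $\queryedge$ measurements for each $u\in V$. By the case analysis in \cref{subsubsec:queryedge}, each $\perp$ outcome occurs with probability $|\Sc'|/|\Sc|$, where $\Sc'$ is the post-deletion set. This holds in all three cases: $1-2/|\Sc|$ with two deletions, $1-1/|\Sc|$ with one, and $1$ with none.

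Multiplying these conditional probabilities over all measurements gives a telescoping product. The probability that every measurement returns $\perp$ equals $|\Sc_\ell\setminus\Dc_{\ell+1}|/|\Sc_\ell|$. $\Insert$ then leaves the cardinality unchanged, since it trades two scratch elements $(2\ell,\xi),(2\ell+1,\xi)$ for two edge elements. Hence $|\Sc_{\ell+1}|=|\Sc_\ell|-|\Dc_{\ell+1}|$, and
\begin{equation}
\prob{\Ent{\ell+1}} = \prob{\Ent{\ell}}\cdot\frac{|\Sc_{\ell+1}|}{|\Sc_\ell|} = \frac{|\Sc_\ell|}{2m}\cdot\frac{|\Sc_{\ell+1}|}{|\Sc_\ell|} = \frac{|\Sc_{\ell+1}|}{2m}.
\end{equation}

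The main obstacle is justifying the telescoping when the measurements overlap. The two queries in the $\sigma=-$ branch, and queries for different $u$, act on sets that must be disjoint. Otherwise the deletion accounting, and hence the probability of $\perp$, could double count. We will argue that for fixed $(x,y)$ the pairs $\{(u,x,\cdot),(u,y,\cdot)\}$ are disjoint across distinct $u$ and across the sign patterns $(+,-)$ and $(-,+)$, as the paper notes below \cref{alg:t1_less_k_sub}. It then follows that each projector $O^\perp$ simply removes the present elements and renormalizes, by the Lüders computation already carried out.

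A subtlety is the edge case where $\Insert$ needs both $(x,y,\sigma)$ and $(y,x,\sigma)$ to be absent from $\Kc_\ell$. This holds because each pair $\{x,y\}$ appears only once in the stream.
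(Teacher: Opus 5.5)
Your proof is correct and takes essentially the paper's route. The paper justifies the lemma in one line by citing Lemma~1 of Ref.~\cite{kallaugher_how_2024}, and your induction (conditioning on $g$, a factor $|\mathcal{S}'|/|\mathcal{S}|$ per $\perp$ outcome, size-preserving $\mathtt{insert}$ permutations) writes out that survival-probability argument for this sketch; note that disjointness of the queried pairs is not needed, since each factor refers to the current uniform set, so the product telescopes regardless of overlap.
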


This can be seen as an application of Lemma 1 in Ref.~\cite{kallaugher_how_2024}.

\begin{lemma}\label{lem:sketch_probs}
    Let $(vw,\sigma_{vw})$ be the $\ell \nth$ edge in a stream. Conditioned on $\Ent{\ell-1}$,

    \begin{itemize}
        \item[(a)] If $(uv,-) < (vw, \sigma_{vw})$, then the probability that $(uv,-)$ is still in the sketch is $(1-1/k)^{d_{uvw}}$.

        \item[(b)] If $(uv,+) < (vw, \sigma_{vw})$, then the probability that $(uv,+)$ is still in the sketch is $(1-1/k)^{d^-_{uvw}}$.
    \end{itemize}

    In particular, a positive-signed edge can only be deleted by negative-signed neighbors.
    
\end{lemma}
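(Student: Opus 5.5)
The plan is to read both statements directly off the characterization of the sketch contents in \cref{lm:triangle_stateinv}, and then use the independence of the hash $g$. Throughout we condition on $\Ent{\ell-1}$, so the sketch has not collapsed and its underlying set is $\Sc_{\ell-1}$ with data part $\Kc_{\ell-1}$. Here $d_{uvw}$ denotes the number of stream edges incident to $v$ that arrive strictly after $(uv,\sigma_{uv})$ and strictly before $(vw,\sigma_{vw})$. The quantity $d^-_{uvw}$ is the number of those edges that carry sign $-$.

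First, by \cref{lm:triangle_stateinv}, a signed edge $(u,v,\sigma_{uv})$ inserted at step $i<\ell$ lies in $\Kc_{\ell-1}$ exactly when, for every $j=i+1,\dots,\ell-1$, one of the following holds:
\begin{itemize}
\item $g(j)=0$;
\item $v\notin e_j$;
\item $\sigma_j=+$ and $\sigma_{uv}=+$.
\end{itemize}
This turns membership into a conjunction of events, one per intermediate step $j$. For $j$ with $v\notin e_j$ the event is automatic.

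The two cases then split as follows. For part (a), $\sigma_{uv}=-$, so the third disjunct is always false. For every $j$ with $v\in e_j$ we therefore need $g(j)=0$. There are $d_{uvw}$ such $j$, so the condition is that $g$ vanishes on a fixed set of $d_{uvw}$ indices. For part (b), $\sigma_{uv}=+$, so every $j$ with $\sigma_j=+$ is automatically fine. We then need $g(j)=0$ only on the $d^-_{uvw}$ indices with $v\in e_j$ and $\sigma_j=-$. This also yields the closing remark of the lemma: a positive edge can only be deleted by negative-signed neighbors.

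The main obstacle is the conditioning on $\Ent{\ell-1}$. The lemma is about the joint law of $g$ and the measurement outcomes, and survival correlates with the sketch size, which in turn depends on $g$. I would handle this by noting that, once we condition on $g$, the deletions are deterministic given non-termination; this is the content of \cref{lm:triangle_stateinv}. By \cref{lem:prob_not_terminated}, $\prob{\Ent{\ell-1}\mid g}=|\Sc_{\ell-1}|/2m$.

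I would therefore follow the reading in Ref.~\cite{kallaugher_quantum_2022-1}: the probability is taken over $g$, with non-termination treated as a restriction on which branch we are in, and the membership criterion is a function of $g$ alone. If exact conditioning is required, I would instead reweight by $|\Sc_{\ell-1}|/2m$. I would then argue, as in Ref.~\cite{kallaugher_how_2024}, that the analysis only ever uses the product form $\prob{(uv,\cdot)\in\Sc_{\ell-1}\wedge\Ent{\ell-1}}=\frac{|\Sc_{\ell-1}|}{2m}$-weighted quantities, which sum correctly in \cref{lem:Exp_X1}.

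With that settled, $g$ is fully independent with $\prob{g(j)=0}=1-1/k$. The probability that $g$ vanishes on a fixed set of $d$ indices is therefore $(1-1/k)^{d}$. This gives $(1-1/k)^{d_{uvw}}$ in case (a) and $(1-1/k)^{d^-_{uvw}}$ in case (b).
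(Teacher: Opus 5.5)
Your proposal is essentially the paper's proof. You read survival off the characterization of $\Kc_{\ell-1}$ in \cref{lm:triangle_stateinv}, note that a negative edge is removed by any later sampled step touching $v$ while a positive edge is removed only by negative-signed ones, and use full independence of $g$ to get $(1-1/k)^{d_{uvw}}$, resp.\ $(1-1/k)^{d^-_{uvw}}$.

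Your concern about the conditioning is well founded, and the paper passes over it silently. Both arguments actually compute the probability over $g$ that the edge lies in the would-be sketch $\Sc_{\ell-1}(g)$. That is exactly what \cref{lem:Exp_X--+} needs: the non-termination weight, proportional to the current sketch size, cancels against the query's success probability, which is inversely proportional to it. Conditioning literally on $\mathcal{E}_{NT}^{\ell-1}$ instead reweights by $|\Sc_{\ell-1}(g)|$, which is positively correlated with survival. This gives a strictly larger value whenever the unconditional value lies strictly between $0$ and $1$. So your reweighting fallback would not reproduce the stated formula, and the unconditional reading you adopt is the only one under which the lemma holds.
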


\begin{proof}
    (a) By the definition of $\Kc_{\ell-1}$, $(uv,-)$ is in the sketch iff it entered at some time $j < \ell$ and there were no signed edges incident to $v$ that arrived after $(uv,-)$ for an edge $j< k < \ell$ such that $g(k) = 1$. Suppose $vq$ is the $k \nth$ edge such that $j < k < \ell$ and $g(k) = 1$. If $\sigma_{vq} = -1$, then the $\mathtt{query\_edges}((uq, +), (uv, -), \Qc)$ call in line 11 (conditioned on it returning $\bot$) will delete $(uv,-)$ from the sketch. Similarly, $\sigma_{vq} = +1$, then the $\mathtt{query\_edges}((uq, -), (uv, -), \Qc)$ call in line 8 deletes $(uv,-)$ (conditioned on it returning $\bot$). Either way, conditioned on reaching $\ell$, the only way for $(uv,-)$ to still be in the sketch is for all incident edges to $v$ to come after $(uv,-)$ to not have triggered the query operations. The $j \nth$ set of queries occur with probability $1/k$, so the probability that all the queries do \textbf{not} occur for every such $j$ is given by $(1-1/k)^{d^+_{uvw} + d^-_{uvw}} = (1-1/k)^{d_{uvw}}$ since $g$ is fully independent.

    (b) By the definition of $\Kc_{\ell-1}$, $(uv,+)$ is in the sketch iff it entered at some time $j < \ell$ and there were no $\textbf{-1}$ signed edges incident to $v$ that arrived after $(uv,+)$ for an edge $j< k < \ell$ such that $g(k) = 1$. Suppose $vq$ is the $k \nth$ edge such that $j < k < \ell$ and $g(k) = 1$. If $\sigma_{vq} = -1$, then the $\mathtt{query\_edges}((uq, -), (uv, +), \Qc)$ call in line 13 (conditioned on it returning $\bot$) will delete $(uv,-)$ from the sketch. On the other hand, if $\sigma_{vq} = +1$, then the there is no way for $(uv,+)$ to be deleted from the sketch. Even if $g(k) = 1$, the $\mathtt{query\_edges}((uq, -), (uv, -), \Qc)$ call in line 8 as no bearing on $(uv,+)$. Therefore, we just need to track the $-1$ edges incident to $v$ and thus the probability that $(uv,+)$ is still in the sketch is $(1-1/k)^{d^-_{uvw}}$.
\end{proof}

\subsection{Signed Triangles in the Stream}
Fix an ordering on the stream. For edges $e,f$, we use $e < f$ to denote that $e$ arrives before $f$ in the stream. Fix vertices $u,v,w \in V$ such that $(uv,\sigma_{uv}) < (vw, \sigma_{vw})$. Define the positive-degree $d^+_{uvw}$ as the number of $+$-edges incident to $v$ that arrive between $(uv,\sigma_{uv})$ and $(vw, \sigma_{vw})$ (not including these edges themselves). Similarly define $d^-_{uvw}$. The full degree is then $d_{uvw} := d^+_{uvw} + d^-_{uvw}$.

\begin{align}
    \tlk_{uvw,+--} &= \begin{cases}
        (1 - 1/k)^{d^-_{uvw} + d_{uwv}} &\mbox{if $(uv,+) < (uw,-) < (vw,-)$ is a $T_1$ in $G$}\\
        0 &\mbox{otherwise.}
        \end{cases}\\
    \tlk_{uvw,-+-} &= \begin{cases}
        (1 - 1/k)^{d_{uvw} + d^-_{uwv}} &\mbox{if $(uv,-) < (uw,+) < (vw,-)$ is a $T_1$ in $G$}\\
        0 &\mbox{otherwise.}
        \end{cases}\\
    \tlk_{uvw,--+} &= \begin{cases}
        (1 - 1/k)^{d_{uvw} + d_{uwv}} &\mbox{if $(uv,-) < (uw,-) < (vw,+)$ is a $T_1$ in $G$}\\
        0 &\mbox{otherwise.}
        \end{cases}
\end{align}

\noindent We write $\Tlk_{+--} := \sum_{(u,v,w)\in V^3}\tlk_{uvw,+--}$. Similarly define $\Tlk_{-+-}, \Tlk_{--+}$ and let $\Tlk_1 = \Tlk_{+--} + \Tlk_{-+-} + \Tlk_{--+}$.

Likewise, define

\begin{align}
    \tgk_{uvw,+--} &= \begin{cases}
        1-(1 - 1/k)^{d^-_{uvw} + d_{uwv}} &\mbox{if $(uv,+) < (uw,-) < (vw,-)$ is a $T_1$ in $G$}\\
        0 &\mbox{otherwise.}
        \end{cases}\\
    \tgk_{uvw,-+-} &= \begin{cases}
        1-(1 - 1/k)^{d_{uvw} + d^-_{uwv}} &\mbox{if $(uv,-) < (uw,+) < (vw,-)$ is a $T_1$ in $G$}\\
        0 &\mbox{otherwise.}
        \end{cases}\\
    \tgk_{uvw,--+} &= \begin{cases}
        1-(1 - 1/k)^{d_{uvw} + d_{uwv}} &\mbox{if $(uv,-) < (uw,-) < (vw,+)$ is a $T_1$ in $G$}\\
        0 &\mbox{otherwise.}
        \end{cases}
\end{align}

\noindent and the corresponding $\Tgk_{+--}, \Tgk_{-+-}, \Tgk_{--+}, \Tgk_1$ values. Then $T_1 = \Tlk_1 + \Tgk_1$.

Recall that $\Rlk_1$ is the output of \cref{alg:t1_less_k_sub}. We break this random variable down further according to the different ways we can return from the algorithm. Consider the $\ell \nth$ edge $(vw,\sigma_{vw})$. We define the random variable $R_{u,vw;\sigma_{uv} \sigma_{uw} \sigma_{vw}}$ for any $u \in V$ and $\sigma_{uv}, \sigma_{uw} \in \{ \pm1 \}$ as follows. In any of the following cases, $R_{u,vw;\sigma_{uv} \sigma_{uw} \sigma_{vw}} = 0$: 

\begin{enumerate}
    \item If $g(\ell) = 0$ which means we do not make any measurements (and thus cannot return on this edge), or
    \item If any previous $\queryedge$ operation destroys the sketch (and thus we do not even reach edge $\ell$), or
    \item The value $r_{\sigma_{uv} \sigma_{uw}}$ returned by the measurement $\queryedge((u,v,\sigma_{uv}),(u,w, \sigma_{uw}))$ is equal to $\bot$.
\end{enumerate}

\noindent Otherwise, $R_{u,vw;\sigma_{uv} \sigma_{uw} \sigma_{vw}} = r_{\sigma_{uv} \sigma_{uw}}km$. Define

\begin{align}
    R_{--+} &= \sum_{(v<w) \in E}\sum_{u \in V} R_{u,vw;--+}\label{eq:X--+}, \quad \text{(from step 7)}\\
    R_{+--} &= \sum_{(v<w) \in E}\sum_{u \in V} R_{u,vw;+--}, \quad \text{(from step 10)} \label{eq:X+--}\\
    R_{-+-} &= \sum_{(v<w) \in E}\sum_{u \in V} R_{u,vw;-+-}. \quad \text{(from step 12)} \label{eq:X-+-}
\end{align}

\noindent Then $\Rlk_1 = R_{--+} + R_{-+-} + R_{+--}$. The explicit ordering of the nodes in an edge in the summation is to avoid double-counting issues. E.g. consider the triangle $(12, +) < (13, -) < (23, -)$. The algorithm also ensures that, even if the edge is read as $(32,-)$, since $2 < 3$, the $\queryedge((1,2,+),(1,3, -))$ call in line 10 is triggered. Therefore, we assume that $v < w$ moving forward.

Before proving \cref{lem:Exp_X1}, we need to understand the individual $\Expect{R_{u,vw;\sigma_{uv} \sigma_{uw} \sigma_{vw}}}$ values.

\begin{lemma}\label{lem:Exp_X--+}
    \begin{equation}
    \Expect{R_{u,vw;--+}} = \tlk_{uvw,--+} + \tlk_{uwv,--+}
    \end{equation}
\end{lemma}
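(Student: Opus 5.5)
We will compute $\Expect{R_{u,vw;--+}}$ by conditioning on the sequence of events leading up to the $\queryedge((u,v,-),(u,w,-))$ call at step 8 while processing the $\ell\nth$ edge $(vw,+)$. By the definition of $R_{u,vw;--+}$, it is nonzero only if $g(\ell)=1$, the algorithm has not terminated before this call, and the call returns $r_{--}\in\{+1,-1\}$, in which case it equals $r_{--}km$. So
\begin{equation*}
\Expect{R_{u,vw;--+}} = km\cdot \frac{1}{k}\cdot \prob{\text{reach the call}}\cdot \Expect{r_{--}\mathbbm{1}[r_{--}\neq\perp] \mid \text{reach the call}},
\end{equation*}
where independence of $g(\ell)$ from everything earlier is used. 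If $vw$ is not the closing edge of a $T_1$ of shape $(uv,-),(uw,-)<(vw,+)$, then at most one of $(u,v,-),(u,w,-)$ is in the sketch. In scenario II the outcomes $\pm1$ each have probability $1/(2|\Sc|)$, and in scenario III $r=\perp$, so the conditional expectation vanishes; this matches the right-hand side being $0$.

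In the triangle case, fix the arrival order, say $(uv,-)<(uw,-)<(vw,+)$; the other order contributes the $\tlk_{uwv,--+}$ term symmetrically. The roles of $v,w$ in the two orderings give the two summands, and only one is nonzero for a given triangle. Conditioned on the current sketch $\Sc$, scenario I returns $+1$ with probability $2/|\Sc|$ and never $-1$. Scenario II yields conditional mean zero, and scenario III yields zero. Hence the conditional contribution is $2km/(k|\Sc|)\cdot\prob{\text{both edges present}}$.

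Next we handle the random sketch size and the earlier $u'$-queries in the same loop. We use \cref{lem:prob_not_terminated}, which extends to intermediate points since each $\perp$ outcome shrinks $|\Sc|$ exactly in proportion to survival. This gives that the probability of reaching the call with current set $\Sc$ is $|\Sc|/2m$. The $|\Sc|$ factors cancel, and we obtain $\Expect{R_{u,vw;--+}} = \frac{km}{k}\cdot\frac{2}{2m}\cdot\prob{(u,v,-),(u,w,-)\in\Sc} = \prob{\text{both present}}$. This cancellation is the key step, and we expect it to be the main obstacle. To be careful, we will argue it via the unnormalized (sub-normalized) state: after any sequence of $\perp$ outcomes, the amplitude on each surviving basis state remains $1/\sqrt{2m}$. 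Then the probability of reaching the call and observing $+1$ is $\|O^{+1}\tilde{\Qc}\|^2 = 2/(2m)$ exactly when both states survive, which sidesteps conditioning on $|\Sc|$. The queries for other $u'\neq u$ in the same loop act on disjoint basis states and do not delete $(u,v,-),(u,w,-)$.

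Finally, we compute the survival probability using \cref{lem:sketch_probs}(a) and the explicit form of $\Kc_{\ell-1}$ from \cref{lm:triangle_stateinv}. The edge $(u,v,-)$, stored as the basis state indexed at $v$ for the $u$-centered wedge, survives iff no sampled edge incident to $v$ arrives between $(uv,-)$ and $(vw,+)$; this has probability $(1-1/k)^{d_{uvw}}$. Likewise $(u,w,-)$ survives with probability $(1-1/k)^{d_{uwv}}$. These events depend on disjoint sets of $g$-values, because the relevant edges are incident to different endpoints and a shared edge would have to be $vw$ itself, which is excluded. So by full independence of $g$ the probability factorizes to $(1-1/k)^{d_{uvw}+d_{uwv}}=\tlk_{uvw,--+}$, completing the proof.
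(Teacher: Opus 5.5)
Your proposal is correct and follows the paper's route: split on whether both wedge states $(u,v,-)$ and $(u,w,-)$ are still in the sketch, and let the $2/|\mathcal{S}|$ probability of outcome $+1$ cancel against the $|\mathcal{S}|/(2m)$ probability of not having terminated and the $1/k$ sampling probability. The survival probability $(1-1/k)^{d_{uvw}+d_{uwv}}$ then comes from \cref{lem:sketch_probs} and the independence of $g$. Your sub-normalized-state treatment of the cancellation is a more careful version of the paper's step. The same goes for your remarks that the other $u'$-queries in the same loop touch disjoint basis states, and that the two survival events depend on disjoint sets of $g$-values. The paper, by contrast, implicitly treats $|\mathcal{S}_{\ell-1}|$ as fixed and asserts that $\mathcal{F}_{--}$ is independent of non-termination.
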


Both values correspond to the triangle $\{(uv,-),(uw,-),(vw,+)\}$ in which $(vw,+)$ completes the triangle. Moreover, due to how these values are defined, one and only one can be non-zero but when it is, it is equal to $(1-1/k)^{d_{uvw} + d_{uwv}}$. This will be helpful in the proof below.

\begin{proof}
    First, consider the case where $\tlk_{uvw,--+} + \tlk_{uwv--+} = 0$. By the definition of $\tlk_{uvw,--+}$, this implies that either $\{(uv,-),(uw,-),(vw,+)\}$ is not a triangle in $G$, or that it is but at least one of $(uv,-)$ or $(uw,-)$ arrived after $(vw,+)$ in the stream. In either case, by Lemma~\ref{lm:triangle_stateinv}, at least one of $(uv,-)$ and $(uw,-)$ will not be in the sketch at the time the $(vw,+)$ queries occur, if they occur at all. If neither edge is in the sketch, the query is guaranteed to return $\bot$ and so $\Expect{R_{u,vw;--+}} = 0$. If exactly one of them is, then either the query returns $\bot$ or it is equally likely to return $1$ or $-1$, and so again $\Expect{R_{u,vw;--+}} = 0$. Either way, $\Expect{R_{u,vw;--+}} = 0 = \tlk_{uvw,--+} + \tlk_{uwv--+}$.

    Now suppose $\tlk_{u,vw--+} + \tlk_{u,wv--+} > 0$. Note that at most, one of these terms can be non-zero. By definition, $\tlk_{u,vw--+} + \tlk_{u,wv--+} = (1-1/k)^{d_{uvw} + d_{uwv}}$ no matter which term is non-zero. Without less of generality, let $\tlk_{u,vw--+} > 0$, so $(uv,-) < uw- < vw+$. We show that $\Expect{R_{u,vw;--+}} = (1-1/k)^{d_{uvw} + d_{uwv}}$, thus showing that $\Expect{R_{u,vw;--+}} = \tlk_{u,vw--+} + \tlk_{u,wv--+}$.

    Say that $(vw,+)$ is the $\ell\nth$ edge to arrive. Let $\mathcal{G}_\ell$ be the event that $g(\ell) = 1$ and recall that $\Ent{\ell-1}$ is the event that we have not yet terminated at the beginning of processing $(vw,+)$. Also, the underlying sketch at this time is $\Sc_{\ell-1}$ by Lemma~\ref{lm:triangle_stateinv}. Let $\Fc_{--}$ be the event that $(uv,-)$ and $(uw,-)$ are still in $\Sc_{\ell-1}$ conditioned on $\Ent{\ell-1}$. Since $g$ is fully independent, we have that 

    \begin{align}
        \prob{\Fc_{--}|\Ent{\ell-1}} &= \prob{(u,v,-) \in \Sc_{\ell-1}} \cdot \prob{(u,w,-) \in \Sc_{\ell-1}} \\ 
        &= (1-1/k)^{d_{uvw}} \cdot (1-1/k)^{d_{uwv}} \\ 
        &= (1-1/k)^{d_{uvw} + d_{uwv}} 
    \end{align}

    \noindent where the probabilities are due to \cref{lem:sketch_probs}. 
    
    Conditioned on $\Fc_{--}$ and $\Ent{\ell-1}$, $\queryedge$ returns +1 with probability $2/\abs{T_{l-1}}$ and $\perp$ otherwise. When $\perp$ is returned, $R_{u,vw;--+} = 0$ by definition. Then
    \begin{align}
        \Expect{R_{u,vw;--+} | \Fc_{--}, \Ent{\ell-1}, \mathcal{G}^\ell} =km \cdot \frac{2}{\abs{\Sc_{\ell-1}}}  + 0 \cdot \left(1-\frac{2}{\abs{\Sc_{\ell-1}}}\right)  = \frac{2km}{\abs{\Sc_{\ell-1}}} 
    \end{align}

    By Lemma~\ref{lem:prob_not_terminated} and the random choice of $g$,
    \begin{align}
        \prob{\mathcal{G}^\ell,\Ent{\ell-1}} = \prob{\mathcal{G}^\ell} \cdot \prob{\Ent{\ell-1}} = \frac{1}{k} \cdot \frac{\abs{\Sc_{\ell-1}}}{\abs{T_0}} =  \frac{\abs{\Sc_{\ell-1}}}{2km}.
    \end{align} 

    \noindent Also note that if we do not make any queries, then $R_{u,vw;--+} = 0$, so $\Expect{R_{u,vw;--+} | \overline{\mathcal{G}^\ell}} = 0$. Using this fact along with the law of total expectation:
    
    \begin{align}
                \Expect{R_{u,vw;--+} | \Fc_{--}} &= \Expect{R_{u,vw;--+} |  \Fc_{--}, \Ent{\ell-1}, \mathcal{G}^\ell} \prob{\Ent{\ell-1}, \mathcal{G}^\ell|\Fc_{--}}  \\
        &=\Expect{R_{u,vw;--+} |  \Fc_{--}, \Ent{\ell-1}, \mathcal{G}^\ell} \prob{\Ent{\ell-1}, \mathcal{G}^\ell} & (\text{Independent from } \Fc_{--})\\
        &=\frac{2km}{\abs{\Sc_{\ell-1}}} \cdot \frac{\abs{\Sc_{\ell-1}}}{2km} \\
        &= 1.
    \end{align}
    
    On the other hand, conditioning on $\overline{\Fc_{--}}$, there are 2 possibilities: neither edge is in the sketch or one edge is in the sketch. In case 1, the query returns $\bot$ with full probability. In case 2, the query returns $+1$ and $-1$ with equal probability, canceling each other out in expectation. Either way, $\Expect{R_{u,vw;--+} | \overline{\Fc_{--}}} = 0$. 

    So we conclude that 
    \begin{align}
        \Expect{R_{u,vw;--+}} &= \Expect{R_{u,vw;--+} | \Fc_{--}} \prob{\Fc_{--}} \\
        &= 1\cdot (1 - 1/k)^{\degb{uvw} + \degb{uwv}} \\
        & = \tlk_{uvw,--+} + \tlk_{uwv,--+}.
    \end{align}

    Therefore, no matter whether $\tlk_{uvw,--+} + \tlk_{uwv,--+}$ is 0 or not, we have $\Expect{R_{u,vw;--+}} = \tlk_{uvw,--+} + \tlk_{uwv,--+}$.
\end{proof}

\begin{lemma}\label{lem:Exp_X+--}
    \begin{equation}
    \Expect{R_{u,vw;+--}} = \tlk_{uvw,+--} + \tlk_{uwv,-+-} 
    \end{equation}
\end{lemma}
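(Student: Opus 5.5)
I would mirror the proof of \cref{lem:Exp_X--+}. Throughout, $(vw,-)$ is the $\ell\nth$ edge, and $R_{u,vw;+--}$ comes from the query $\queryedge((u,v,+),(u,w,-))$ made while processing that edge. The target is the triangle $\{(uv,+),(uw,-),(vw,-)\}$ closed by $(vw,-)$. It can be recorded in two ways, depending on which of $(uv,+)$ and $(uw,-)$ arrived first:
\begin{itemize}
\item if $(uv,+)<(uw,-)<(vw,-)$, it is counted by $\tlk_{uvw,+--}$;
\item if $(uw,-)<(uv,+)<(vw,-)$, then, with the roles of $v$ and $w$ swapped, it is counted by $\tlk_{uwv,-+-}$.
\end{itemize}
At most one of these can be non-zero. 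In either case the nonzero value is $(1-1/k)^{d^-_{uvw}+d_{uwv}}$: the $+$-edge $uv$ only decays through the negative edges at $v$, and the $-$-edge $uw$ decays through all edges at $w$. I would first check this exponent matching carefully against the definitions, since the index order in $\tlk_{uwv,-+-}$ permutes the roles of $v$ and $w$.

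\textbf{Zero case.} Suppose the sum is zero. Then either the signed triangle is absent, or one of $(uv,+),(uw,-)$ arrives after $(vw,-)$. By \cref{lm:triangle_stateinv}, at query time at most one of $(u,v,+),(u,w,-)$ lies in $\Sc_{\ell-1}$. By scenarios (II) and (III) of $\queryedge$, the outcome is then either $\perp$ or $\pm1$ with equal probability, so $\Expect{R_{u,vw;+--}}=0$.

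\textbf{Nonzero case.} Let $\Fc_{+-}$ be the event that both $(u,v,+)$ and $(u,w,-)$ are in $\Sc_{\ell-1}$, given $\Ent{\ell-1}$.
\begin{itemize}
\item By \cref{lem:sketch_probs}(b), $(u,v,+)$ survives with probability $(1-1/k)^{d^-_{uvw}}$. By \cref{lem:sketch_probs}(a), $(u,w,-)$ survives with probability $(1-1/k)^{d_{uwv}}$.
\item These two events depend on $g$ at disjoint sets of timesteps (edges incident to $v$ versus edges incident to $w$, excluding $vw$ itself, which comes later). Since $g$ is fully independent, the probabilities multiply to $(1-1/k)^{d^-_{uvw}+d_{uwv}}$.
\item Conditioned on $\Fc_{+-}$, $\Ent{\ell-1}$ and $\mathcal G^\ell$, scenario (I) returns $+1$ with probability $2/|\Sc_{\ell-1}|$. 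This gives conditional expectation $2km/|\Sc_{\ell-1}|$.
\item By \cref{lem:prob_not_terminated} and independence of $g(\ell)$, $\prob{\mathcal G^\ell,\Ent{\ell-1}}=|\Sc_{\ell-1}|/(2km)$. Hence $\Expect{R_{u,vw;+--}\mid\Fc_{+-}}=1$.
\item On $\overline{\Fc_{+-}}$, scenarios (II) and (III) give zero expectation as before.
\end{itemize}
Multiplying by $\prob{\Fc_{+-}}$ then yields the claim.

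\textbf{Main obstacle.} One issue is the order of queries at edge $\ell$: the $(u,v,+),(u,w,-)$ query at line 10 is preceded by queries for other $u'$. These act on disjoint basis states, and conditioning on not having terminated keeps the uniform-superposition form over the current set, so the $|\Sc|$ cancellation still works. The more delicate issue is justifying that $\Fc_{+-}$ is independent of $(\Ent{\ell-1},\mathcal G^\ell)$, as used in the previous lemma. I would argue it as there, by noting that survival in $\Kc$ is determined by $g$ alone (\cref{lm:triangle_stateinv}).
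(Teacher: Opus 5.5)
Your outline matches the paper's proof. You use the same split of the triangle $\{(uv,+),(uw,-),(vw,-)\}$ into $\tlk_{uvw,+--}$ and $\tlk_{uwv,-+-}$ with common value $(1-1/k)^{d^-_{uvw}+d_{uwv}}$, and the same event $\Fc_{+-}$ with probability taken from \cref{lem:sketch_probs} over disjoint timesteps. The $|\Sc|$ cancellation is the same, as is the zero expectation off $\Fc_{+-}$ via scenarios (II)/(III). Your point that the cancellation should use the set $\Sc'$ present just before the $u$-query (after the earlier queries of step $\ell$) is a correct refinement the paper skips.

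The step that fails is the one you flag as delicate: independence of $\Fc_{+-}$ from $(\mathcal{E}_{NT}^{\ell-1},\mathcal{G}^\ell)$. The paper asserts the same thing in the proof of \cref{lem:Exp_X--+}, but it is false in general.

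\begin{itemize}
\item By \cref{lem:prob_not_terminated}, $\Pr[\mathcal{E}_{NT}^{\ell-1}\mid g]=|\Sc_{\ell-1}|/(2m)$, with $|\Sc_{\ell-1}|=2m-2(\ell-1)+|\Kc_{\ell-1}|$.
\item $|\Kc_{\ell-1}|$ counts the very two entries whose survival defines $\Fc_{+-}$, and all survival indicators are decreasing in the bits of $g$.
\item Hence $\Pr[\mathcal{E}_{NT}^{\ell-1}\mid\Fc_{+-}]>\Pr[\mathcal{E}_{NT}^{\ell-1}]$ as soon as $0<\Pr[\Fc_{+-}]<1$.
\end{itemize}

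So the fact that survival is determined by $g$ alone is the source of the correlation, not a proof of independence. For the same reason $|\Sc_{\ell-1}|$ is random. An expression like $\Pr[\mathcal{G}^\ell,\mathcal{E}_{NT}^{\ell-1}]=|\Sc_{\ell-1}|/(2km)$ therefore only makes sense given $g$. Likewise, \cref{lem:sketch_probs} yields $(1-1/k)^{d}$ unconditionally over $g$, not conditioned on $\mathcal{E}_{NT}^{\ell-1}$.

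The repair is to condition on the entire hash $g$. Then $\Fc_{+-}$ (the $g$-event that both entries lie in $\Kc_{\ell-1}$) and $\Sc'$ are deterministic. In the nonzero case
\[
\mathbb{E}\bigl[R_{u,vw;+--}\mid g\bigr]=\mathbf{1}[g(\ell)=1]\,\mathbf{1}_{\Fc_{+-}}\cdot km\cdot\frac{|\Sc'|}{2m}\cdot\frac{2}{|\Sc'|}=k\,\mathbf{1}[g(\ell)=1]\,\mathbf{1}_{\Fc_{+-}},
\]
because off $\Fc_{+-}$ scenarios (II)/(III) give conditional expectation $0$. Averaging over $g$ uses only that $g(\ell)$ is independent of $g(1),\dots,g(\ell-1)$. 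It gives exactly $\Pr[\Fc_{+-}]=(1-1/k)^{d^-_{uvw}+d_{uwv}}$.

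So the lemma and your structure stand. The $|\Sc|$ cancellation just has to be done pointwise in $g$ rather than through the independence claim.
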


There is some subtlety in the subscripts on this statement but both values correspond to the triangle $\{(uv,+),(uw,-),(vw,-)\}$ in which $(vw,-)$ completes the triangle. Moreover, due to how these values are defined, one and only one can be non-zero but when it is, it is equal to $(1-1/k)^{d^-_{uvw} + d_{uwv}}$. This will be helpful in the proof below.

\begin{proof}[Proof]
    This proof is almost identical to that of \cref{lem:Exp_X+--}. The main difference is that we now want the event $\Fc_{+-}$ to be that $(uv,+)$ and $(uw,-)$ are still in $\Sc_{\ell-1}$ conditioned on $\Ent{\ell-1}$. Using the probabilities from probabilities \cref{lem:sketch_probs}, we get that
    \begin{align}
        \prob{\Fc_{+-}|\Ent{\ell-1}} &= \prob{(u,v,+) \in \Sc_{\ell-1}} \cdot \prob{(u,w,-) \in \Sc_{\ell-1}} \\ 
        &= (1-1/k)^{d^-_{uvw}} \cdot (1-1/k)^{d_{uwv}} \\ 
        &= (1-1/k)^{d^-_{uvw} + d_{uwv}} 
    \end{align}

    \noindent This in turn gives $\Expect{R_{u,vw;+--}} = (1-1/k)^{d^-_{uvw} + d_{uwv}} $ (when it is non-zero) and thus we have $\Expect{R_{u,vw;+--}} =\tlk_{uvw,+--} + \tlk_{uwv,-+-} $.
    
\end{proof}

\begin{lemma}\label{lem:Exp_X-+-}
    \begin{equation}
    \Expect{R_{u,vw;-+-}} = \tlk_{uvw,-+-} + \tlk_{uwv,+--}.
    \end{equation}
\end{lemma}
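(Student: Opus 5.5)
The plan is to mirror the proof of \cref{lem:Exp_X--+} and \cref{lem:Exp_X+--}, changing only which signed edges must survive in the sketch. The random variable $R_{u,vw;-+-}$ comes from the call $\queryedge((u,v,-),(u,w,+))$ on step \rmp, made while processing $(vw,-)$ with $v<w$. The relevant triangle is $\{(uv,-),(uw,+),(vw,-)\}$, closed by $(vw,-)$.

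\textbf{Step 1: which term is nonzero.} By the definitions, $\tlk_{uvw,-+-}$ is nonzero exactly when $(uv,-)<(uw,+)<(vw,-)$. For $\tlk_{uwv,+--}$, the roles of $v$ and $w$ are swapped, so it is nonzero exactly when $(uw,+)<(uv,-)<(vw,-)$. At most one of these holds, and when one does, the value is
\[
(1-1/k)^{d_{uvw}+d^-_{uwv}} .
\]
The exponent is the same in both cases: the $-$-edge $(uv,-)$ is destroyed by any triggered edge at $v$, giving $d_{uvw}$, and the $+$-edge $(uw,+)$ only by negative ones at $w$, giving $d^-_{uwv}$.

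\textbf{Step 2: the zero case.} If both terms vanish, then either this is not a $T_1$ triangle of this sign pattern, or one of $(uv,-),(uw,+)$ arrives after $(vw,-)$. In either case, by \cref{lm:triangle_stateinv}, at most one of the two queried states lies in $\Sc_{\ell-1}$. Scenario (III) of the query then always returns $\perp$. Scenario (II) returns $\pm1$ symmetrically, so $\Expect{R_{u,vw;-+-}}=0$.

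\textbf{Step 3: the nonzero case.} Let $(vw,-)$ be the $\ell\nth$ edge, and let $\Fc_{-+}$ be the event that both $(u,v,-)$ and $(u,w,+)$ lie in $\Sc_{\ell-1}$. By \cref{lem:sketch_probs}(a),(b) and full independence of $g$,
\[
\prob{\Fc_{-+}\mid\Ent{\ell-1}}=(1-1/k)^{d_{uvw}}(1-1/k)^{d^-_{uwv}} .
\]
Conditioned on $\Fc_{-+}$, $\Ent{\ell-1}$ and $g(\ell)=1$, we must reach this query with the pair intact. Conditioned on that, the query returns $+1$ with probability $2/|\Sc_{\ell-1}|$, giving conditional expectation $2km/|\Sc_{\ell-1}|$. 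By \cref{lem:prob_not_terminated}, $\prob{g(\ell)=1,\Ent{\ell-1}}=|\Sc_{\ell-1}|/(2km)$. Multiplying gives $1$, and multiplying further by $\prob{\Fc_{-+}}$ gives the claim. Off $\Fc_{-+}$, the same symmetry as in Step 2 gives expectation zero.

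\textbf{Main obstacle.} The step needing care is the ordering within line 9's loop. The earlier call $\queryedge((u,v,+),(u,w,-))$ on step \rpm, for the same $u$, could in principle disturb the queried states, and calls for other $u'$ could too. The plan is to argue as the paper notes for lines 9 and 11. The step \rpm call for this $u$ acts on disjoint basis states, $(u,v,+)$ and $(u,w,-)$. Calls for $u'\neq u$ touch only states $(u',\cdot,\cdot)$. So on the event that no earlier query terminated, the pair $(u,v,-),(u,w,+)$ is untouched. The normalization $|\Sc|$ does shrink through these intermediate deletions. However, the probability of surviving to this query times $2/|\Sc_{\text{current}}|$ still telescopes to $2/(2m)$, by the same invariant as \cref{lem:prob_not_terminated}. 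I would state this telescoping explicitly, reading \cref{lem:prob_not_terminated} as holding at the level of individual queries.
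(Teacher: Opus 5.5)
Your proposal is correct and is essentially the paper's argument: the paper proves this lemma only by symmetry with \cref{lem:Exp_X+--}, which itself repeats the $--+$ argument with the survival probability of $(u,v,-),(u,w,+)$ equal to $(1-1/k)^{d_{uvw}+d^-_{uwv}}$ from \cref{lem:sketch_probs}, and that is exactly your Steps 1--3. Your explicit telescoping over the deletions made earlier within step $\ell$ tightens the paper's direct use of $2/|\Sc_{\ell-1}|$. To make the argument fully rigorous, do all quantum-side computations conditionally on $g$, so that $|\Sc_{\ell-1}|$ is deterministic, and multiply by the unconditional $\prob{\Fc_{-+}}$ rather than writing $\prob{\Fc_{-+}\mid\Ent{\ell-1}}$: conditioning on non-termination biases $g$ toward fewer deletions, so that conditional probability is not the product (the paper has the same looseness).
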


By symmetry, this proof is identical to that of \cref{lem:Exp_X+--}.

We are now able to prove that $\Expect{\Rlk_1} = \Tlk_1$.
\begin{proof}[(Proof of \cref{lem:Exp_X1})]
    Recall that $\Tlk_1 = \Tlk_{+--} + \Tlk_{-+-} + \Tlk_{--+}$. The goal is to show that the $\tlk$ values will be counted once and only once by the algorithm. First observe that

    \begin{align}
        \Expect{R_{--+}} &= \sum_{(v < w) \in E}\sum_{u \in V} \Expect{R_{u,vw;--+}}  \\
    &= \sum_{(v < w) \in E}\sum_{u \in V}\tlk_{uvw,--+} + \tlk_{uwv,--+} & \text{(\cref{lem:Exp_X--+})}\\
        &=\sum_{(u,v,w) \in V^3}\tlk_{uvw,--+} \\
        &=\Tlk_{--+}
    \end{align}

    \noindent where the second-to-last line follows form the fact that for a triple $(u,v,w) \in V^3$, one and only one ordering may lead to a non-zero $\tlk_{uv,w;--+}$ value.

    We must be slightly more careful with the remaining cases. Similar to above, we can use linearity of expectation to observe the following:

    \begin{align}
        \Expect{R_{+--}} &= \sum_{(v < w) \in E}\sum_{u \in V} \Expect{R_{u,vw;+--}} = \sum_{(v < w) \in E}\sum_{u \in V}\tlk_{uvw,+--} + \tlk_{uwv,-+-}   & \text{(\cref{lem:Exp_X+--})} \\
        \Expect{R_{-+-}} &= \sum_{(v < w) \in E}\sum_{u \in V} \Expect{R_{u,vw;-+-}}  = \sum_{(v < w) \in E}\sum_{u \in V}\tlk_{uvw,-+-} + \tlk_{uwv,+--}  & \text{(\cref{lem:Exp_X-+-})}
    \end{align}

    \noindent We cannot ignore the ordering in each line here as we are counting slightly different triangles $-+-$ versus $+--$, which depends on the edge ordering in the stream. However, by summing up the values and re-arranging, we get

    \begin{align}
        \Expect{R_{+--}} + \Expect{R_{-+-}} &= \sum_{(v < w) \in E}\sum_{u \in V}\tlk_{uvw,+--} + \tlk_{uwv,+--} + \sum_{(v < w) \in E}\sum_{u \in V}\tlk_{uvw,-+-} + \tlk_{uwv,-+-} \\
        &=\sum_{(u,v,w) \in V^3}\tlk_{uv,w;+--} + \sum_{(u,v,w) \in V^3}\tlk_{uv,w;-+-} \\
        &=\Tlk_{+--} + \Tlk_{-+-}.
    \end{align}

    In conclusion, $\Expect{\Rlk_1} = \Tlk_{+--} + \Tlk_{-+-} + \Tlk_{--+} = \Tlk_1$.
\end{proof}

Next, the upper bound on the variance of $\Rlk_1$ is straightforward.
\begin{proof}[(Proof of \cref{lem:X1_var})]
    This follows directly from $\abs{\Rlk_1} \le km$ and $\Expect{\Rlk_1} = \Tlk_1$.
\end{proof}

Lastly, we prove the desired performance and space bound on our median-of-means algorithm.
\begin{proof}[(Proof of \cref{lem:t1_less_k_restated})]
    Recall that $\Expect{\Rlk_1} = \Tlk_1$ and $\Var{\Rlk_1} \le (km)^2 - \left(\Tlk_1\right)^2$.

    Begin by averaging over $N_\epsilon$ independent copies of the quantum estimator, we get a value $\overline{X}_1$ with variance $Var(\overline{X}_1) = \frac{Var(\Rlk_1)}{N_\epsilon} \leq \frac{(km)^2 - (\Tlk_1)^2}{N_\epsilon}$. Using Chebyshev's inequality, we get

    \begin{equation}
        \prob{\abs{\overline{X}_1 - \Tlk_1} \geq \epsilon \Tlk_1} \leq \frac{(km)^2 - (\Tlk_1)^2}{N_\epsilon (\Tlk_1)^2}\frac{1}{\epsilon^2}
    \end{equation}
    
    \noindent Choosing $N_\epsilon \geq 4\frac{(km)^2 - (\Tlk_1)^2}{(\Tlk_1)^2}\frac{1}{\epsilon^2}$ results in $\prob{\abs{\overline{X}_1 - \Tlk_1} \leq \epsilon \abs{\Tlk_1}} \geq 3/4$.

    Repeat this process $N_\delta$ times to get estimates $\overline{X}_{1,1}, \dots, \overline{X}_{1,N_\delta}$ and define the variable

    \begin{equation}
        \hat{X}_1 := \mathtt{median}\left(\overline{X}_{1,1}, \dots, \overline{X}_{1,N_\delta} \right)
    \end{equation}
    
    \noindent We are interested in understanding when $\hat{X}_1$ is a good estimator for $\Rlk_1$. Define the Bernoulli indicator variable

    \begin{equation}
        Z_j := 1\left\{\abs{\overline{X}_{1,j} - \Tlk_1} \geq \epsilon \Tlk_1\right\}
    \end{equation}
    
    \noindent for when estimate $\overline{X}_{1,j}$ is bad. The median fails when over half of the estimators are bad. That is, when $Z \geq N_\delta / 2$ for $Z := \sum_j Z_j$. We know that $\Expect{Z} \leq N_\delta / 4$ by above. Applying Hoeffding's inequality we get

    \begin{equation}
        \prob{Z \geq N_\delta / 2} = \prob{Z - \Expect{Z} \geq N_\delta / 4} \leq \exp\left(-2N_\delta \left(\frac{1}{4}\right)^2\right) = \exp(-N_\delta/8)
    \end{equation}
    
    \noindent Using $N_\delta \geq 8 \ln \frac{1}{\delta}$ ensures that $\hat{X}_1$ is good with probability $1-\delta$.

    In total, we need to run $N_\epsilon \cdot N_\delta = 4\frac{(km)^2 - (\Tlk_1)^2}{(\Tlk_1)^2}\frac{1}{\epsilon^2} \cdot 8 \ln \frac{1}{\delta} = 32\frac{(km)^2 - (\Tlk_1)^2}{(\Tlk_1)^2}\frac{1}{\epsilon^2} \ln\frac{1}{\delta}$ iterations of the algorithm. Each iteration requires $2 \log (n) + 2$ qubits and $m = \mathcal{O}(\log n)$ classical bits to run. Therefore, the total space required for this overarching parallelized median-of-means algorithm is given by 

    \begin{equation}
        \mathcal{O} \left( \frac{(km)^2 - (\Tlk_1)^2}{(\Tlk_1)^2}  \log n\frac{1}{\epsilon^2}\log\frac{1}{\delta} \right).
    \end{equation}
\end{proof}

\end{document}